\documentclass{article}
\usepackage{fullpage}

\usepackage[utf8]{inputenc}

\usepackage{amsthm, amsmath, amssymb, amsfonts}
\usepackage{thm-restate}

\usepackage{graphicx}
\graphicspath{{figs/}}
\usepackage{caption}
\usepackage{subcaption}
\usepackage[capitalise]{cleveref}
\usepackage{xspace}
\usepackage{verbatim}

\usepackage{enumitem}   
\setlist{topsep=3pt, itemsep=3pt}

\usepackage{float}
\usepackage[normalem]{ulem}

\newtheorem{theorem}{Theorem}[section]
\newtheorem{lemma}[theorem]{Lemma}

\newtheorem{coro}[theorem]{Corollary}

\newtheorem{claim}[theorem]{Claim}

\newcommand{\bbR}{{\mathbb{R}}}
\newcommand{\bbZ}{{\mathbb{Z}}}

\usepackage{algorithm}
\usepackage{algorithmicx}
\usepackage{algpseudocode}
\algtext*{EndWhile}
\algtext*{EndIf}
\algtext*{EndFor}
\algtext*{EndFunction}
\algtext*{EndProcedure}

\newcommand{\FDP}{\textrm{FDP}\xspace}

\newcommand{\mst}{{\mathsf{mst}}}
\newcommand{\sfleft}{{\mathsf{left}}}
\newcommand{\sfright}{{\mathsf{right}}}
\newcommand{\etal}{{\it et~al.}}

\newcommand{\TSP}{{\mathsf{TSP}}}
\newcommand{\CVRP}{{\mathsf{CVRP}}}

\newcommand{\calA}{{\mathcal{A}}}
\newcommand{\calP}{{\mathcal{P}}}
\newcommand{\calQ}{{\mathcal{Q}}}
\newcommand{\calS}{{\mathcal{S}}}
\newcommand{\calI}{{\mathcal{I}}}

\newcommand{\sfL}{{\mathsf{L}}}
\newcommand{\sfR}{{\mathsf{R}}}

\allowdisplaybreaks

\newcommand{\cost}{{\mathsf{cost}}}

\newcommand{\floor}[1]{\left\lfloor#1\right\rfloor}
\newcommand{\ceil}[1]{\left\lceil#1\right\rceil}
\mathchardef\mhyphen="2D

\title{Online Food Delivery to Minimize Maximum Flow Time}
\author{
Xiangyu Guo\qquad Shi Li\\ 
University at Buffalo \\
\texttt{\{xiangyug,shil\}@buffalo.edu}
        \and 
        Kelin Luo\\
        Eindhoven University of Technology\\
        \texttt{k.luo@tue.nl}
        \and 
        Yuhao Zhang\\ 
        Shanghai Jiao Tong University\\
        \texttt{zhang\_yuhao@sjtu.edu.cn}
}
\date{}

\begin{document}
    \maketitle
    
    \thispagestyle{empty}
    
    \begin{abstract}


We study a common delivery problem encountered in nowadays online food-ordering platforms: Customers order dishes online, and the restaurant delivers the food after receiving the order. Specifically, we study a problem where $k$ vehicles of capacity $c$ are serving a set of requests ordering food from one restaurant. After a request arrives, it can be served by a vehicle moving from the restaurant to its delivery location. 
We are interested in serving all requests while minimizing the maximum flow-time, i.e., the maximum time length a customer waits to receive his/her food after submitting the order.  The problem also has a close connection with the broadcast scheduling problem with maximum flow time objective.

We show that the problem is hard in both offline and online settings even when $k = 1$ and $c = \infty$: There is a hardness of approximation of $\Omega(n)$ for the offline problem, and a lower bound of $\Omega(n)$ on the competitive ratio of any online algorithm, 
where $n$ is number of points in the metric. 

We circumvent the strong negative results in two directions. Our main result is an $O(1)$-competitive online algorithm for the uncapaciated (i.e, $c = \infty$) food delivery problem on tree metrics; we also have negative result showing that the condition $c = \infty$ is needed.
Then we consider the speed-augmentation model, in which our online algorithm is allowed to use $\alpha$-speed vehicles, where $\alpha \geq 1$ is called the speeding factor. We develop an exponential time $(1+\epsilon)$-speeding $O(1/\epsilon)$-competitive algorithm for any $\epsilon > 0$. A polynomial time algorithm can be obtained with a speeding factor of $\alpha_\TSP + \epsilon$ or $\alpha_\CVRP + \epsilon$, depending on whether the problem is uncapacitated. Here $\alpha_\TSP$ and $\alpha_\CVRP$ are the best approximation factors for the traveling salesman (TSP) and capacitated vehicle routing (CVRP) problems respectively. We complement the results with some negative ones.

\end{abstract}
    
    \newpage
    \setcounter{page}{1}
    
    \section{Introduction}  
Online food-ordering-and-delivery services (e.g., UberEats and Doordash) have become more and more popular in the past decade. Customers can order dishes online and wait for the restaurant to deliver the food to their home. Arguably one of the most important factors for service quality is the \emph{flow time}, i.e., how long a customer waits to receive the food after submitting his/her order. We formalize the problem as the following \emph{Online Food Delivery Problem} (Online \FDP). Let $(V, d)$ be a metric space with $|V| = n$, where $d$ is a metric on $V$. Let $o\in V$ be the \emph{depot} that has $k$ unit-speed vehicles, each with a capacity $c \in \bbZ_{ > 0} \cup \{\infty\}$. The requests arrive online, where a request $\rho=(r_\rho, v_\rho)$ is released at time $r_\rho$ with delivery location $v_\rho$.   The goal of the online problem is to schedule the vehicles to deliver the food to their delivery locations (i.e., to serve the requests), satisfying the requirement that a vehicle can only carry food for at most $c$ requests at any time. The objective of the problem is to minimize the maximum flow time. This model captures the real-world scenario that one restaurant owns several vehicles and need to deliver food to customers, and the restaurant want to make all customers satisfied, i.e, wait for a short period of time.


 The food delivery problem (FDP) is closely related to many vehicle routing problems. For example, the uncapacitated (i.e., $c=\infty$) single-vehicle (i.e., $k = 1$) FDP with all requests released at time 0 already captures a variant of the traveling salesman path problem. When the vehicle has finite capacity $c$, the FDP just mentioned is also known as the Capacitated Vehicle Routing Problem (CVRP). A more general setting is studied under the name Dial-a-Ride Problem (DaRP), where besides the delivery location, each customer can specify its own pickup location. To satisfy the request, a vehicle has to take the customer from the pickup location to its delivery location. Most results on offline CVRP/DaRP focused on the total travel distance objective  \cite{altinkemer1987heuristics,haimovich1985bounds,charikar1998finite,gupta2010dial}; while in the online model, much effort has been devoted to completion time objectives like average completion time \cite{hwang2018online,bienkowski2021traveling} or makespan \cite{FEUERSTEIN200191,ascheuer2000online}. Compared with previous studies, we are focusing on a much harder and less-studied objective --- maximum flow time, in the online setting. The only theoretical result we know regarding maximum flow time is a lower bound for online DaRP, where it is shown the single-vehicle finite-capacity DaRP does not admit $O(1)$-competitive algorithms \cite{krumke2005minimizing}. 

Another motivation for FDP is from the seemingly unrelated broadcast scheduling problem. In this problem, a server holds $n$ pages of varying sizes and requests are released over time, each of which is a query on one of the pages. The server can \emph{broadcast} a page to all requests on that same page, which takes time equal to the page size. The objective is to minimize the maximum flow time. It is easy to reduce the broadcast scheduling problem to uncapacitated single-vehicle FDP on a star, where a page of size $s$ in the broadcast scheduling problem corresponds to an edge of length $s/2$ in the FDP problem.\footnote{In FDP, we could define the completion time of a request as the time the vehicle returns to the depot after serving the request. The maximum flow time objective for the two versions differs by an $O(1)$ factor.}
It is known that FIFO gives $O(1)$-competitiveness for the broadcast scheduling problem \cite{chekuri2009minimizing,chang2011broadcast}, and thus the single-vehicle uncapacitated FDP on stars.   One of our main results in the paper is an $O(1)$-competitive algorithm for uncapacitated FDP on \emph{trees}, generalizing the results to tree metrics and the multiple-vehicle case.   Mapping to the broadcast scheduling application, this gives the following more general setting.  There is a tree rooted at $o$, where the pages correspond to the leaves of a tree, and each internal node corresponds to a setup procedure that takes some time to run. To broadcast a set of pages, in addition to the broadcasting time for each page, we have to spend time on running the setup procedures correspondent to the ancestor nodes of the pages.  

\subsection{Our Results} We state our results in this section. The formal definition of the offline and online food delivery problem (FDP) can be found in Section \ref{sec:prelim}.  In all the theorems below, $n$ is the number of points in the metric space. \medskip

We first show that our problem also suffers from the maximum flow time objective, both in offline and online settings, even if we focus on the single-vehicle uncapacitated case. 
\begin{theorem}
    \label{thm:LB}
    The following statements hold for the single-vehicle uncapacitated FDP:
    \begin{enumerate}[label=(\ref{thm:LB}\alph*),leftmargin=*, itemsep=0pt]
        \item 
        It is NP-hard to approximate the offline problem within a factor of $o(n)$.
        \item 
        There is no $o(n)$-competitive algorithm for the online problem, even if it can run in exponential time and the metric is on a bounded-pathwidth planar graph.
    \end{enumerate}
\end{theorem}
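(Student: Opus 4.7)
My plan for part~(a) is to reduce from Hamiltonian Path (HP). Given an HP instance $(G,s)$ with $|V(G)|=n$, I would build an FDP instance whose metric is the shortest-path metric of $G$ (with edges possibly subdivided to scale distances up), depot placed at $s$, and one request per non-depot vertex released at time~$0$. A YES instance lets the single vehicle follow a Hamiltonian path, achieving max flow time $n-1$; a NO instance forces some revisit, giving max flow time at least $n$. That only produces a constant-factor gap, so the crucial step is amplification. I would release a carefully timed secondary batch of requests, each at a location that the ideal Hamiltonian schedule passes through at a precise moment. The ideal schedule serves each secondary request with flow time $O(1)$, while any non-Hamiltonian schedule is already late by a constant when it reaches the secondary release times, and this delay cascades along the subdivided edges to yield max flow time $\Omega(n)$. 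Designing this cascade so the delay really accumulates (rather than being absorbed by slack or a reordering of the non-Hamiltonian schedule) is the main obstacle.

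For part~(b), I would design an adaptive adversary on a bounded-pathwidth planar graph, concretely a long path $o=v_0,v_1,\ldots,v_{n-1}$. The key leverage is that even with $c=\infty$, the vehicle must physically return to~$o$ to pick up food for any newly released request. The adversary first releases a request far from the depot, committing the algorithm's vehicle to a long outbound trip. Once the vehicle is committed in a known direction, the adversary adaptively releases follow-up requests at locations where any response (continuing, backtracking, or rerouting) incurs an $\Omega(n)$ detour. By iterating or nesting such traps, some request's flow time blows up to $\Omega(n)$ under every online algorithm. Meanwhile, offline OPT, knowing the entire release sequence in advance, positions the vehicle to batch-serve all requests and attains $O(1)$ flow time per request.

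The main obstacle for~(b) is ensuring that offline OPT really does attain $O(1)$ flow time against the chosen release pattern. This constrains the adversary to release patterns with a canonical efficient offline schedule, typically achievable via a recursive construction in which at each level OPT has an obvious batching available, but any online algorithm's earlier commitment provably rules that batching out. Combining this recursive "commit-then-punish" argument with the depot-pickup constraint is what I expect to ultimately drive the $\Omega(n)$ competitive lower bound.
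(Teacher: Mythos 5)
Your high-level strategy for both parts is in the right direction (reduce from a Hamiltonian-type problem for~(a), and use an adaptive ``commit-then-punish'' adversary on a bounded-pathwidth planar graph for~(b)), but in both parts the crucial mechanism that makes the gap $\Omega(n)$ is missing, and the specific gadgets you sketch would not work as stated.

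For part~(a), your plan to place secondary requests at locations that the Hamiltonian tour passes through does not force the delay to accumulate: a non-Hamiltonian schedule can simply serve those secondary requests on a \emph{later trip}, absorbing the $O(1)$ lateness into a single extra return trip rather than cascading it. The paper's reduction instead repeats the Hamiltonian-cycle instance over $\bar n^2$ phases and, in each phase, releases additional requests at $\bar n$ \emph{fresh} ``leg'' vertices $u_1,\dots,u_{\bar n}$ hanging off the depot, spaced so that the optimal (Hamiltonian) schedule is fully occupied for the entire phase. In the YES case one can keep pace, but in the NO case each phase costs at least one extra time unit; because the legs keep the schedule saturated, there is no slack to absorb this, and the per-phase delays sum to $\Omega(\bar n^2)$ while the optimum remains $O(\bar n)$. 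That saturation-and-repetition structure, not a cascade along a single tour, is the heart of the amplification, and your sketch does not contain it.

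For part~(b), a simple path $o,v_1,\dots,v_{n-1}$ does not support the forced-guess argument you want. Once the vehicle heads outward, there is no branching decision for the adversary to punish; on a path the algorithm can always wait a bit and then batch, and OPT has no informational advantage because there is only one topology of trip. The paper's construction deliberately builds a planar bounded-pathwidth gadget with \emph{two} near-symmetric routes (a ``left'' and a ``right'' side, connected through shared vertices $v^{0}_4,\dots$), so that the algorithm must commit to entering a shared vertex from one side long before the adversary reveals which side the final request will come from. A wrong commitment provably adds $1$ to the minimum tour length (the uniqueness of the optimal TSP tour on each side is what makes this rigorous), and the instance again interleaves filler requests on legs so the $1$-unit penalties accumulate over $\Theta(n)$ phases without being absorbed. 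Without both the two-route gadget and the saturating filler, your adversary does not actually produce an $\Omega(n)$ gap against all online algorithms while keeping OPT at $O(1)$.
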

To circumvent the strong lower bounds, we first study if the problem admits $O(1)$-competitive algorithms on special metrics. 
Given that (\ref{thm:LB}b) holds even for bounded-pathwidth planar graphs, a natural candidate family of metrics is the tree metrics. Our main result of the paper is that uncapacitated FDP on trees admit $O(1)$-competitive online algorithms:

\begin{theorem}[Main Result]\label{thm:ALG-on-trees}
There's an efficient $O(1)$-competitive online algorithm for uncapacitated $\FDP$ on trees.
\end{theorem}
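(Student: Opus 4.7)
The plan is to use a guess-and-double framework: we try guesses $F=1,2,4,\ldots$ for the optimal max flow time $F^\ast$, and design a subroutine that, assuming $F\ge F^\ast$, achieves max flow time $O(F)$. Doubling loses only a constant factor, so it suffices to design the subroutine. Because the vehicles are uncapacitated and the metric is a tree, a tour from $o$ serving any set $S$ of requests has cost exactly $2w(T_S)$, where $T_S$ is the minimal subtree spanning $S\cup\{o\}$. So the algorithmic task reduces to deciding, whenever a vehicle becomes free at $o$, which batch of pending requests to sweep next, with target batch cost $O(F)$ and target waiting time $O(F)$.

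For fixed $F$, my candidate rule is hierarchical. Scan the rooted tree top-down, and mark an internal node $u$ as a \emph{cut} if the minimum tour serving all currently pending requests in the subtree $T_u$ has cost $O(F)$ but serving the pending requests under the parent of $u$ would exceed $O(F)$. The cut nodes induce a partition of pending requests into batches, each of tour cost $O(F)$; the partition is refreshed as requests arrive and are served. Whenever a vehicle is free at $o$, it picks up the batch containing the oldest pending request and sweeps it in a single tour; uncapacitatedness is used crucially here to absorb every pending request in the chosen subtree at no additional cost.

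For the analysis I would compare against OPT edge-by-edge. The main claim is that in any window of length $\Theta(F)$, the total tour cost produced by the rule is $O(kF)$, matching what $k$ vehicles can execute in that window. For each tree edge $e$, every algorithmic crossing of $e$ is charged to some crossing OPT must itself have performed within an $O(F)$-length window: since $F \ge F^\ast$, OPT served the requests below $e$ within $F^\ast$ of their release, so if the algorithm forms a batch containing $e$, OPT also had to pay for the corresponding subtree at a nearby time. Combined with the $O(F)$ bound on each request's wait, this gives max flow time $O(F) = O(F^\ast)$ for the smallest successful guess, hence $O(1)$-competitiveness overall.

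The main obstacle, I expect, is making the hierarchical cut rule behave correctly when urgent requests appear at different depths simultaneously. A shallow, expensive subtree containing one very old request could force a vehicle into a long tour that delays a deeper, equally urgent subtree, so the cutting must happen at the right scale and allow simultaneous batches at disjoint subtrees to be processed by different vehicles without interference. A secondary difficulty is handling requests that arrive inside a batch's time window and fall into a subtree already being served: one must show such latecomers are absorbed by the next batch on that subtree without blowing up the flow time beyond $O(F)$, which likely requires amortizing over slightly larger constants in the guess and carefully accounting for overlapping batches.
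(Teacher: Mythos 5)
Your proposal shares the paper's high-level shape (guess-and-double on $F$, spatial batching on the tree, oldest-first dispatch, and a ``backlog'' claim bounding tour cost per $\Theta(F)$-window), but the central lemma you invoke --- that in any window of length $\Theta(F)$ the algorithm's total tour cost is $O(kF)$, proved by charging algorithmic edge crossings to nearby crossings of $\mathrm{OPT}$ --- is exactly where the difficulty lives, and your sketch does not address it. The problem is that your batches are formed \emph{reactively}, from the current pending set, so the batch boundaries depend on your own past service decisions and on which latecomers happen to be present when a vehicle frees up. In particular, a single $\mathrm{OPT}$ trip below an edge $e$ can be chopped by your cut rule into several of your batches across several windows, and a naive ``charge each crossing of $e$ to some $\mathrm{OPT}$ crossing of $e$ within $O(F)$ time'' is neither injective nor obviously bounded. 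You also acknowledge (but do not resolve) the two coordination problems --- a shallow old request hijacking a long tour, and latecomers forcing repeated sweeps of the same subtree --- and in the multi-vehicle case you need the batches dispatched to different vehicles not to step on each other. Nothing in the proposal pins these down.

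The paper breaks the dependency loop by batching in \emph{time first, then space}. Requests are bucketed into length-$F$ windows $R_1, R_2, \ldots$, and for each even window $i$ the set $R_i$ is split into $R^{\mathsf{left}}_i, R^{\mathsf{right}}_i$ and merged into the two neighboring odd windows to form bundles $R'_i = R^{\mathsf{right}}_{i-1} \cup R_i \cup R^{\mathsf{left}}_{i+1}$. The reason this split is needed is that $\mathrm{OPT}$'s trips can straddle two consecutive windows; the paper defines a partition $(S_i)$ of $R$ by $\mathrm{OPT}$'s trips and chooses the left/right split to (approximately) match it. The backlog bound (Lemma~\ref{lem:tree_travelingtimeintervals} for $k=1$, Theorem~\ref{thm:tree-k-backlog} for general $k$) is then a clean inequality between $\mst$- (or $\cost$-) quantities on $R'_i$ and on $S_i$, using submodularity-type inequalities for $\mst_e$ and a ceiling-based lower bound $c(R',e)$ on the number of times $\mathrm{OPT}$ must cross $e$ (Claim~\ref{lem:candcost}). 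Only \emph{after} the bundles are fixed does a spatial cut-node decomposition appear (Lemma~\ref{lemma:tree-general-k-backlog}), applied to a static set $R'_i$ rather than to a dynamic pending pool; this is what makes the analysis go through. Your proposal is missing all of this machinery: the time-first bucketing, the left/right matching to $\mathrm{OPT}$'s $S_i$, and the $c(R',e)$ lower bound that makes the edge-by-edge comparison rigorous. Without them, the backlog claim is a conjecture rather than a proof, and it is not clear your reactive cut rule satisfies it.
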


The results in Theorem~\ref{thm:LB} and \ref{thm:ALG-on-trees} are for uncapacitated version of the problem ($c = \infty$).  We now turn to the general capacitated case.  One can ask if the algorithm in Theorem~\ref{thm:ALG-on-trees} works for general $c$. Unfortunately, we show that this is impossible even for  a very special case. Below $R$ is the set of requests in the instance (See Section~\ref{sec:prelim}):
\begin{theorem}
    \label{thm:LB-capacity}
    There is no $o(\sqrt{|R|})$-competitive online algorithm for single-vehicle FDP with $c = 2$ and on a tree with 5 vertices.
\end{theorem}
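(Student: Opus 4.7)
The plan is to construct an adaptive adversary on a specific $5$-vertex tree against which every deterministic (possibly exponential time) online algorithm incurs maximum flow time $\Omega(\sqrt{|R|})$ times the offline optimum. Take the ``double pendant'' tree on vertices $\{o, u, u', v, v'\}$, where $o$ is the depot, $o\text{-}u$ and $o\text{-}v$ are edges of length $1$, and $u\text{-}u'$ and $v\text{-}v'$ are edges of length $M$ for a parameter $M$ to be fixed. Writing $\tau := 2(M{+}1)$ for the round-trip cost from $o$ to a far leaf, observe that a single capacity-$2$ trip combining one request at $u'$ with one at $v'$ costs roughly $3(M{+}1)$, whereas a trip serving two same-side requests costs only $\tau$. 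So the cost-efficient use of capacity is to pair two same-side requests per dispatch; this is the capacity-versus-flow-time tension that the adversary will exploit.

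The adversary plays in $T = \Theta(\sqrt{|R|})$ rounds spaced by $\Theta(\tau)$ time units, releasing a bounded number of requests per round so that $|R| = \Theta(T)$. At the start of each round the adversary inspects the vehicle's state (position and onboard cargo) and issues its new requests at the far leaf opposite to the algorithm's current commitment. Because the vehicle has capacity only $2$, once it is dispatched from $o$ with a full load it cannot absorb newly released food until it returns to $o$; consequently, each new request must wait at least the remainder of the ongoing trip before becoming eligible for service. The offline optimum, in contrast, knows the entire release sequence, can always pair requests by side when loading, and can pipeline its trips back-to-back so that every request's flow time is $O(\tau)$.

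The core of the argument is a backlog-growth lemma: across the $T$ rounds the online algorithm's unserved backlog on (at least) one of the two sides must grow to $\Omega(\sqrt{|R|})$. Once such a backlog exists, the capacity bound forces $\Omega(\sqrt{|R|})$ further round trips to drain it, so some request's online flow time is $\Omega(\sqrt{|R|}\cdot \tau)$, yielding the desired ratio of $\Omega(\sqrt{|R|})$. The main obstacle will be establishing this backlog-growth claim precisely: the adversary must release enough requests per round that the backlog grows by $\Omega(1)$ against every online response, yet few enough that OPT still achieves $O(\tau)$ flow time. My plan is to handle this by case analysis at each round boundary (when the vehicle returns to $o$ and chooses its next cargo); in each case the adversary can steer its release to the side containing at most one of the algorithm's pick-up candidates, guaranteeing at least one unit of persistent backlog per round. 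Since the whole argument depends only on the algorithm's committed vehicle trajectory and not on its computational power, the bound rules out even exponential-time online algorithms, matching the statement of the theorem.
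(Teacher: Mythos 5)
Your construction runs into a concrete modeling issue that undermines the ``pairing'' tension you rely on. In the paper's model a trip is a cycle that starts and ends at $o$ and does not pass through $o$ at any intermediate step, so in your double-pendant tree $\{o,u,u',v,v'\}$ a single trip simply \emph{cannot} serve a request at $u'$ together with one at $v'$: the unique $u'$--$v'$ path goes through $o$. The ``$3(M{+}1)$ versus $2(M{+}1)$'' comparison you use to set up the capacity-versus-flow-time tension is therefore not a choice the online algorithm ever faces; every trip is confined to one side and serves at most two same-side requests. With the two sides completely decoupled, an online algorithm that simply batches same-side requests in pairs (waiting at $o$ for a partner when necessary) faces no forced waste relative to OPT, so the backlog-growth lemma you postpone is, in this topology, the entire difficulty, and I do not see a way to make it go through. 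There is also a parameter inconsistency: you want $T=\Theta(\sqrt{|R|})$ rounds with $O(1)$ requests each and $|R|=\Theta(T)$, which forces $|R|=O(1)$; you would need $\Theta(T)$ requests per round (so $|R|=\Theta(T^2)$) to even have a chance at a $\sqrt{|R|}$ ratio.

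The paper circumvents all of this by choosing a tree in which the pairing decision is genuinely nontrivial \emph{within a single trip}: vertices $v_2,v_3$ hang off a common internal vertex $v_1$, so a trip to $v_2$ and a trip to $v_3$ both pass through $v_1$, and a request at $v_1$ can be served together with either a $v_2$-request or a $v_3$-request at no extra cost. Each phase contains one request at $v_1$ and many at $v_2,v_3$, plus one extra parity-breaking request at $v^*\in\{v_2,v_3\}$ revealed only near the end of the phase. The online algorithm must pair the $v_1$-request early (or blow up its flow time), and whichever side it pairs with, the adversary chooses $v^*$ to make that pairing wrong, forcing $+2$ total trip length per phase. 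A separate leg $(o,v_4)$ receives a stream of filler requests that keeps the vehicle busy so the $+2$ delays accumulate across $p=\Theta(\sqrt{|R|})$ phases. This shared-branch-vertex structure is exactly what your tree lacks, and it is the key idea your proposal needs but does not supply.
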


Despite all the negative results, we show that the problem admits good competitive algorithms under the \emph{speed augmentation} model. 
In this model we allow vehicles in our algorithm to run faster than normal vehicles, while we compare the maximum flow time achieved by our algorithm using faster vehicles against the optimum maximum flow time that can be achieved with normal vehicles.  The speeding factor defines how faster our vehicles, and our goal is to decide the tradeoff between the speeding factor and the competitive (approximation) ratio of our algorithm.
There are two reasons we consider this model. First, in the optimum solution of the hard instance for (\ref{thm:LB}b), the vehicle is always busy. One mistake made by the online algorithm can delay the whole schedule. This situation rarely happens in practice, where one should expect the vehicles to have a reasonable amount of idle time.  Second, the travel time between two points in practice is often an estimate and affected by many factors such as the driver's skill and the traffic condition. So, a solution that becomes very bad if the speeds of vehicles decrease slightly should be considered as very fragile. 
We remark that the speed augmentation model is also popular in many scheduling problems (see Section~\ref{subsec:more-related}).


Let $\alpha_\TSP$ be the infinum of all values $\alpha$ such that there is a (polynomial time) $\alpha$-approximation algorithm for TSP.  With the breakthrough result of Karlin, Klein, and Oveis Gharan~\cite{karlin2021slightly}, we know that $\alpha_\TSP$ is strictly smaller than $1.5$. Define $\alpha_\CVRP$ similarly for the CVRP problem (See Section~\ref{sec:prelim} for formal definition). It is known that $\alpha_\CVRP \leq \alpha_\TSP + 1$.  We give our results for the food delivery problem with speeding below; the formal definition of the speed augmentation setting can be found in Section~\ref{sec:prelim}.  

\begin{restatable}{theorem}{AlgSpeeding}\label{thm:ALG-speeding}
For any small enough constant $\epsilon > 0$, there are
\begin{enumerate}[label=(\ref{thm:ALG-speeding}\alph*),leftmargin=*,itemsep=0pt]
    \item an exponential time $(1+\epsilon)$-speeding $O(1/\epsilon)$-competitive online algorithm for $\FDP$, 
    \item a polynomial time $(\alpha_\TSP + \epsilon)$-speeding $O(1/\epsilon)$-competitive online algorithm for uncapacitated FDP, and
    \item a polynomial time $(\alpha_\CVRP + \epsilon)$-speeding $O(1/\epsilon)$-competitive online algorithm for (capacitated) FDP.
\end{enumerate}
\end{restatable}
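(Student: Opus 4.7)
The plan is to run a periodic batch-and-route online algorithm driven by a guess $F$ of the optimum maximum flow time $F^*$. I would set the epoch length to $L := 3\alpha F/\epsilon$, where $\alpha \in \{1, \alpha_\TSP, \alpha_\CVRP\}$ is the approximation factor of the routing subroutine used in parts (a), (b), (c) respectively, partition time into epochs $[iL,(i+1)L)$, and let $B_i$ be the set of requests released during epoch $i$. At time $(i+1)L$ the algorithm runs an $\alpha$-approximation routing subroutine on $B_i$ to produce $k$ depot-anchored tours (with multiple trips respecting capacity $c$ in the CVRP variant), then dispatches the $(\alpha+\epsilon)$-speed vehicles to execute these tours during epoch $i+1$. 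Every request in $B_i$ is then served by time $(i+2)L$, giving flow time at most $2L = O(F^*/\epsilon)$ and the claimed $O(1/\epsilon)$ competitive ratio.

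The crux is bounding the optimum min-max routing cost of $B_i$ by roughly $L$. I would use two facts. First, every $\rho \in B_i$ satisfies $d(o, v_\rho) \leq F^*$, since in OPT the food for $\rho$ must be picked up at $o$ no earlier than $r_\rho$ and delivered by $r_\rho + F^*$, forcing a traversal of length at least $d(o, v_\rho)$. Second, for any OPT vehicle $u$, ordering its $B_i$-deliveries by delivery time and using the unit-speed motion of OPT within a time window of length at most $L + F^*$, the consecutive distances between these delivery points sum to at most $L + F^*$. Prepending and appending a depot edge, each of length at most $F^*$ by the first fact, yields a depot-anchored tour (or a collection of capacity-$c$ trips obtained by shortcutting OPT's trips to $B_i^u$) for $u$ of total length at most $L + 3F^*$. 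Since $L \geq 3\alpha F^* / \epsilon$, this bound is at most $(1 + \epsilon/\alpha)L$, so an $\alpha$-approximation produces tours of max length at most $(\alpha + \epsilon)L$, executable by $(\alpha + \epsilon)$-speed vehicles in time at most $L$.

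The three parts of the theorem then follow by choosing the routing subroutine: part (a) uses an exact exponential-time algorithm on $B_i$ (giving $\alpha = 1$ and speedup $1 + \epsilon$), while parts (b) and (c) invoke polynomial-time $\alpha_\TSP$- and $\alpha_\CVRP$-approximations for the uncapacitated and capacitated cases. To dispense with the assumption that $F^*$ is known in advance, I would apply a standard doubling scheme: initialize $F$ to $\max_\rho d(o, v_\rho)$ over seen requests (a valid lower bound on $F^*$), and whenever a pending request's age exceeds a fixed multiple of the current $L$, double $F$, recompute $L$, and restart the epoch structure by folding all pending requests into the next batch. Since doubling halts once $F \geq F^*$, the final $F \leq 2 F^*$ and the geometric accumulation of lost time contributes only $O(L) = O(F^*/\epsilon)$ to every flow time.

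I expect the main obstacle to be the three-way balancing among the epoch length $L$, the routing approximation $\alpha$, and the speedup. A naive analysis would yield speedup $\alpha(1+\epsilon)$ rather than the target $\alpha + \epsilon$; squeezing it down to $\alpha + \epsilon$ forces one to exploit the $3F^*$ \emph{additive} slack in the per-batch tour bound instead of absorbing it multiplicatively, which is what pins down the choice $L = \Theta(\alpha F^*/\epsilon)$ and hence the $\Theta(1/\epsilon)$ competitive ratio. A secondary subtlety is preserving capacity feasibility when converting OPT's trips for $B_i^u$ into a CVRP solution, which works out because shortcutting OPT's trips to only their $B_i^u$ deliveries never increases any trip's load.
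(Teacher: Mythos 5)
Your high-level plan --- batch requests into fixed-length epochs, route each batch with a TSP/CVRP approximation, dispatch the sped-up vehicles, and handle unknown $F^*$ by doubling --- matches the paper's (Algorithm~\ref{alg:multi-general}), and your per-OPT-vehicle tour bound (order its batch deliveries by time, shortcut, close off with depot edges of length $\leq F^*$) is the same idea as Lemma~\ref{lem:bound-tsp}. However, there is a real gap in the step where ``an $\alpha$-approximation produces tours of max length at most $(\alpha+\epsilon)L$.'' The factors $\alpha_\TSP$ and $\alpha_\CVRP$ are approximation ratios for \emph{total tour length} (a single tour), not for min-max $k$-vehicle routing. From a total-length approximation you only get one long tour of length $\leq \alpha k(L+3F^*)$; you still have to cut it into at most $k$ depot-anchored pieces with bounded length and argue that the cutting succeeds. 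The paper does this explicitly: it greedily traces the tour, starts a new vehicle whenever the running length would exceed $\alpha(\gamma+2)F$, and shows via a counting argument that at most $k$ vehicles are needed precisely because the total is $\leq \alpha k(\gamma+2)F$; it then adds $\leq 2F$ of depot edges per piece using the fact $d(o,v)\leq F$.

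Once you incorporate that cutting step, your choice $L = 3\alpha F/\epsilon$ is too tight. Each cut piece has length up to $\alpha(L+3F^*)$ \emph{before} adding depot edges, so with depot edges it is up to $\alpha(L+3F^*)+2F^*$, and $\bigl(\alpha(L+3F^*)+2F^*\bigr)/(\alpha+\epsilon) = L + 2F^*/(\alpha+\epsilon) > L$, i.e., the vehicles do not quite return in time. This is only a constant-factor slip --- enlarging $L$ by a $\Theta(F^*/\epsilon)$ term (the paper uses $\gamma F = \frac{2\alpha+2}{\epsilon}F$) repairs it without affecting the $O(1/\epsilon)$ competitiveness --- but as stated, the numbers do not close, and more importantly the tour-splitting argument, which is where the min-max bound is actually established, is missing from your sketch.
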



\begin{theorem}
    \label{thm:LB-speeding}
    The following statements hold.
    \begin{enumerate}[label=(\ref{thm:LB-speeding}\alph*),leftmargin=*,itemsep=0pt]
    \item There is no $(1+\epsilon)$-speeding $o(1/\epsilon)$-competitive online algorithm for FDP. 
    \item For any constant $\alpha \in [1, \alpha_\TSP)$, there is no (polynomial time) $\alpha$-speeding $o(\sqrt{n})$-approximation algorithm for uncapaciated FDP.
    \item For any constant $\alpha \in [1, \alpha_{\CVRP})$, there is no (polynomial time) $\alpha$-speeding $o(\sqrt{n})$-approximation algorithm for (capaciated) FDP.
    \end{enumerate}
\end{theorem}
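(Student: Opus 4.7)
The plan is to handle the three parts by gap-amplifying constructions.

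For parts (b) and (c), the plan is a polynomial-time gap-preserving reduction from the inapproximability of TSP (respectively CVRP), amplified by a chained FDP construction. Since $\alpha < \alpha_\TSP$, by definition of $\alpha_\TSP$ as an infimum, there is some $\delta > 0$ such that distinguishing TSP instances of optimum $\leq L$ from those of optimum $\geq (\alpha + \delta)L$ is NP-hard. Given such a TSP instance $I$ on $m$ vertices, I construct an FDP instance $\Phi(I)$ consisting of $m$ ``copies'' of $I$ attached at a common depot $o$, with all requests of the $i$-th copy released at time $(i-1)L$. On a YES instance the unit-speed optimum processes the $i$-th copy in the interval $[(i-1)L, iL]$ via its length-$L$ TSP tour, achieving max flow time $O(L)$. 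On a NO instance, any tour visiting the vertices of a copy has length $\geq (\alpha+\delta)L$, so even an $\alpha$-speed vehicle needs time $\geq (\alpha+\delta)L/\alpha > L$ per copy; a strictly positive deficit therefore accumulates across copies, and the max flow time of the last copy's request is $\Omega(mL)$. Since $\Phi(I)$ has $n = \Theta(m^2)$ vertices, any polynomial-time $o(\sqrt{n})$-approximation in the $\alpha$-speeding setting would distinguish YES from NO and contradict the TSP gap hardness. Part (c) proceeds identically, but reduces from CVRP, since serving a copy with a capacity-$c$ vehicle requires CVRP-style tours rather than a single TSP tour, and the per-copy lower bound $(\alpha_\CVRP+\delta)L$ now replaces the TSP bound.

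For part (a), the plan is to reuse the adversarial construction from the proof of Theorem~\ref{thm:LB}(b), rescaled to have $n = \Theta(1/\epsilon)$ vertices. That theorem already provides an $\Omega(n)$ lower bound on the competitive ratio that is information-theoretic (it explicitly holds against exponential-time algorithms), derived purely from the online algorithm's uncertainty about future request releases rather than from any computational assumption. A $(1+\epsilon)$-speedup can shave at most a factor of $1+\epsilon$ off the algorithm's travel times, so the lower bound degrades at most by this factor, leaving a ratio of $\Omega(n/(1+\epsilon)) = \Omega(1/\epsilon)$ after the rescaling.

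The main obstacle is the amortization step in parts (b) and (c): a priori, a clever $\alpha$-speeding algorithm might try to ``prefetch'' food for later copies or interleave work across copies in order to hide its per-copy deficit. The release schedule $(i-1)L$ is designed precisely to prevent this: an uncapacitated vehicle must be at $o$ at time $\geq iL$ before it can begin serving copy $i+1$, so any deficit incurred on copy $i$ propagates forward, and the $m$ per-copy deficits of size $(\alpha_\TSP - \alpha)L/\alpha = \Omega(L)$ accumulate deterministically into a max flow time of $\Omega(mL) = \Omega(\sqrt{n}\, L)$ on the last copy. A secondary point that needs checking is feasibility of the optimal schedule in the YES case, namely that the tour of length $\leq L$ allows the vehicle to return to $o$ in time to collect the requests released at $iL$; this is immediate provided one closes the tour at $o$.
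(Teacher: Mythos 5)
There are two genuine gaps. For parts (b) and (c), your assertion that ``$\alpha < \alpha_\TSP$ implies gap-NP-hardness for TSP at ratio $\alpha+\delta$'' does not follow from the paper's definition of $\alpha_\TSP$ as an infimum of ratios achievable by polynomial-time algorithms: non-existence of an approximation algorithm does not, in general, imply NP-hardness of the corresponding gap problem, and your reduction only produces a YES/NO distinguisher, not a tour. The paper avoids this entirely by observing that any FDP solution to the chained instance consists of trips that, since the $\bar n$ copies of $\bar G$ are joined only at the depot, decompose into $\bar n$ disjoint TSP tours of $\bar G$; the shortest of these has length at most $\alpha L + O\bigl(f(n) L / \bar n\bigr)$, so a polynomial-time $\alpha$-speeding $f(n)$-approximate FDP algorithm directly yields a polynomial-time $\alpha'$-approximation algorithm for TSP with $\alpha' < \alpha_\TSP$, contradicting the infimum. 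Your construction is essentially the paper's, but the conclusion must extract a tour rather than merely decide a promise problem.

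For part (a), the claim that a $(1+\epsilon)$-speedup ``degrades the bound at most by a factor $1+\epsilon$'' is incorrect, and if true would give a speeding-robust $\Omega(n)$ lower bound for every $n$, contradicting the paper's $(1+\epsilon)$-speeding $O(1/\epsilon)$-competitive algorithm (Theorem~\ref{thm:ALG-speeding}). The speeding saves the algorithm roughly $\epsilon T$ time on a phase of length $T$, while the adversary forces only $1$ extra unit of travel per phase; once $\epsilon T \geq 1$ the forced delay is fully absorbed and the bound vanishes. The paper therefore does not reuse the (\ref{thm:LB}b) construction: it drops the ``leg'' vertices (which become too cheap to service under speeding), instead uses $p$ disjoint copies of the base graph $\bar G$, one per phase of length $L = 7(2p+1)$, and tunes $p = \frac{1}{20\epsilon} + 0.05$ so that $\epsilon L \approx 0.7 < 1$, leaving a residual forced delay of roughly $0.3$ per phase accumulating over $p = \Theta(1/\epsilon)$ phases. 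Your instinct to take $p = \Theta(1/\epsilon)$ is correct, but the multiplicative-degradation reasoning is not a valid substitute for that tuning, and the construction is not simply a rescaled copy of the no-speeding instance.
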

Therefore, by (\ref{thm:LB-speeding}a), the $O(1/\epsilon)$-dependence on $\epsilon$ in (\ref{thm:ALG-speeding}a) is needed. (\ref{thm:LB-speeding}b) and (\ref{thm:LB-speeding}c) state that the speeding factors in (\ref{thm:ALG-speeding}b) and (\ref{thm:ALG-speeding}c) are almost tight. 

\subsection{An Overview of Techniques}
    The hardness results in (\ref{thm:LB}a), (\ref{thm:LB-speeding}b) and (\ref{thm:LB-speeding}c) are proved using simple reductions from TSP or CVRP, in which we repeat a TSP/CVRP instance multiple times. Since an efficient algorithm can not find the best TSP/CVRP tour, it needs to use a longer tour for each instance, which creates a delay in the schedule. The delays for all instances will accumulate, resulting in a bad flow time. The hardness remains if the speeding factor is not big enough. 
        
    The lower bounds in (\ref{thm:LB}b), (\ref{thm:LB-speeding}a) and Theorem~\ref{thm:LB-capacity} are proved using a same idea. We build a base instance satisfying the following property. At the beginning of the time horizon, the online algorithm needs to make a decision between two choices. If it made the incorrect choice, the total length of trips it uses to satisfy all requests will be 1 unit time longer than the case if it made the correct choice. The catch is, the algorithm will only know which choice is the correct one until near the end of the time horizon.  So, it has to either wait for almost the whole time horizon, which will incur a large maximum flow time, or it will spend 1 more unit time on the trips, delaying the schedule. By repeating the base instance many times, the delay of the online algorithm will accumulate, resulting in a large maximum flow time.

    The overview of the algorithms for online uncapacitated FDP on trees will be given at the beginning of Section~\ref{sec:tree-single}. The online algorithms in Theorem~\ref{thm:ALG-speeding} are based on a simple idea. We wait for $\gamma F$ time units ($F$ is a given upper bound of the optimal flow time. See section~\ref{sec:prelim} for details.) for some $\gamma = \Theta\left(\frac 1\epsilon\right)$. Then we know that the optimum solution can serve all the requests arrived in the $\gamma F$-length interval using $(\gamma + O(1))F$ time.  With $(1+\epsilon)$-speeding, the online algorithm can serve them in $\gamma F$ time. Thus, we can always guarantee an $O(F/\epsilon)$-flow time for all requests.  If the algorithm has to be efficient, we need to lose a speeding factor of $\alpha_\TSP + \epsilon$ or $\alpha_{\CVRP} + \epsilon$, depending on whether the instance is uncapacitated.



\subsection{Other Related Works}
\label{subsec:more-related}

Most of previous results on CVRP and DaRP focus the single vehicle case and the total travel distance objective, which is equivalent to the makespan. Unless specified otherwise, all the results surveyed below for CVRP and DaRP are for this case. \medskip

\noindent{\bfseries Capacitated Vehicle Routing Problem (CVRP)}\ \  
The CVRP is mostly studided in the offline setting. It admits an $(\alpha_\TSP + 1)$-approximation, where $\alpha_\TSP$ is the approximation ratio for traveling salesman problem (TSP) \cite{altinkemer1987heuristics,haimovich1985bounds,karlin2021slightly}. A more general version of CVRP is studied in the literature: Each request has a demand, and we require the total demand of all requests satisfied in a single trip made by the vehicle is at most $c$. In the same papers, Altinkemer and Gavish \cite{altinkemer1987heuristics} and Haimovich and Rinnooy Kan \cite{haimovich1985bounds} gave an approximation algorithm with ratio $\alpha_\TSP+2$ for the general CVRP, which stood for over 30 years. Very recently, the approximation ratio for the problem has been improved to $\alpha_\TSP+2(1-\epsilon)$ by Blauth~\etal\ \cite{blauth2020improving}, for some small constant $\epsilon > 0$. There are also improved ratios on special metrics like Euclidean plane \cite{das2010quasi,khachay2016ptas} and tree metrics \cite{labbe1991capacitated,wu2020capacitated}.

One variant of CVRP that is related to the flow time objective is the \emph{CVRP with bounded delay}: There is a \emph{delay constraint} that any request $\rho$ spends at most $\beta d(r,v_\rho)$ time \emph{on the vehicle}, where $v_\rho$ denotes the drop-off location of $\rho$. The goal is to find a minimum length route that satisfies all the delay constraints. For this variant, G\o rtz~\etal\ \cite{gortz2009minimum} gave a $\left(2.5+\frac{3}{\beta-1}\right)$-approximation algorithm for single-vehicle CVRP with bounded delay. \medskip

\noindent{\bfseries Dial-a-Ride Problem (DaRP)}\ \  Like CVRP, the most studied setting for offline DaRP is also for the single-vehicle case with makespan objective, for which the best known algorithm achieves $\tilde{O}(\min\{\sqrt{n},\sqrt{c}\})$ approximation ratio~\cite{charikar1998finite, gupta2010dial}, where $n$ is the number of requests. The best known lower bound is only APX-hardness.  The \emph{online} DaRP is mostly studied with objectives like makespan\cite{ascheuer2000online} or total/weighted completion time\cite{FEUERSTEIN200191,krumke2003news,bienkowski2021traveling}: all of these objectives admit small constant competitive-ratio algorithm. However, if we turn to minimizing the maximum flow-time, there exists no $o(n)$-competitive algorithm even on a 3-point uniform metric with a unit-capacity vechile~\cite{krumke2005minimizing}. 

A variant of DaRP studied in the literature that seems much easier is the preemptive DaRP, in which the vehicle is allowed to temporarily unload the cargo it carries in the middle of a trip, and pick it up later to resume delivery. The best approximation ratio for the problem with single vehicle is $O(\log n)$ \cite{charikar1998finite}, and there is a hardness of $\Omega(\log^{1/4}n)$ hardness of approximation~\cite{gortz2006hardness}. For the multi-vehicle case, G\o rtz~\etal\cite{gortz2009minimum} gave an $O(\log^3 n)$ for the preemptive DaRP, and an $O(\log n)$-approximation when vehicles have infinite capacity.  
\medskip

\noindent{\bfseries Flow-Time Scheduling}\ \  
The maximum flow time objective is studied in many scheduling problems. The simple FIFO strategy is known to achieve the best competitive ratio 2~\cite{mastrolilli2004scheduling} in the identical machine setting, and constant competitive algorithm exists even in related machine setting \cite{bansal2016minimizing} (i.e., machines have different speed). However, the approximability changes dramatically if we switch the objective to the (weighted) sum of flow-time: if no preemption is allowed, the problem is $\Omega(m)$-hard in the online setting where $m$ is the number of machines, and $\Omega(m^{1/2-\epsilon})$-hard in the offline setting~\cite{KellererTW96}. Even when preemption is allowed, offline $O(1)$-approximations (for the single-machine setting) are known only very recently~\cite{batra2018constant, feige2019polynomial, rohwedder2020}, and there's an $\omega(1)$ lower bound for the online case~\cite{bansal2009weighted}.

For online broadcast scheduling, the maximum flow time objective also appears to be easier than other flow time objectives: max flow time admits 2-competitive algorithm\cite{bartal2000minimizing,chang2011broadcast,chekuri2009minimizing}, while average flow time has very strong lower bounds  $\Omega(n)$ for deterministic algorithms\cite{kalyanasundaram2000scheduling} and $\Omega(\sqrt{n})$ for randomized algorithms \cite{bansal2005approximating}.

\subsection{Organization} The remaining part of the paper is organized as follows. In Section~\ref{sec:prelim}, we formally define the offline and online food delivery problem, and the speed augmentation model. We prove our main result, Theorem~\ref{thm:ALG-on-trees}, by giving the $O(1)$-competitive algorithm for uncapacitated FDP on trees in Sections~\ref{sec:tree-single} and~\ref{sec:tree-multi}. The two sections consider the single-vehicle and multiple-vehicle cases respectively. Then in Section~\ref{sec:speeding}, we prove Theorem~\ref{thm:ALG-speeding} by giving the online algorithms for FDP with speeding.   The proofs of Theorem~\ref{thm:LB} and \ref{thm:LB-speeding} are mixed in Appendices~\ref{sec:hardness} and \ref{sec:LB}, which focus on hardness of offline algorithms and lower bounds for online algorithms respectively. The proof of Theorem~\ref{thm:LB-capacity} is given in Appendix~\ref{sec:LB-capacity}.

Finally, we remark that in all the proofs of lower bounds for competitive ratios, we assume the online algorithms are deterministic. However, it is not hard to extend the proofs to randomized algorithms as well.

    \section{Preliminary}
\label{sec:prelim}
We now define the food delivery problem formally, starting from the offline setting.  We are given a graph $G = (V, E)$ with edge lengths $\ell : E \to \bbR_{> 0}$, where $\ell(e)$ denotes the time needed to traverse the edge $e$ in either direction.\footnote{Equivalently we could use a metric $(V, d)$ to describe the travel times, but for many of our results it is more convenient to use the graph $G$ with edge lengths.} There is a special vertex $o \in V$ called the depot, which represents the restaurant.  We are given a number $k \geq 1$ of vehicles which are initially located at $o$, and capacity $c \in \bbZ_{>0} \cup \{\infty\}$ on the vehicles. When $k = 1$, we say the problem is single-vehicle, and when $c = \infty$, we say the problem is uncapacitated. There is a set $R$ of requests. Each request $\rho \in R$ is denoted by $\rho = (r_\rho, v_\rho)$, where $r_\rho \in \bbR_{\geq 0}$ and $v_\rho \in V$ are the arrival time and the delivery location of the request respectively.   

To describe the output of the problem, we need to define \emph{trips}.  A trip is defined by a triple $(t, (u_0 = o, u_1, u_2, \cdots, u_z = o), R')$, where $t \geq 0$ is the starting time of the trip, $(u_0 = o, u_1, u_2, u_3, \cdots, u_{z-1}, u_z = o)$ is  a (possibly complex) cycle in $G$ that starts and ends at $o$, and $R' \subseteq R$, $|R'| \leq c$, is the set of requests served by the trip.  So $z \geq 2$ and $(u_{z'-1}, u_{z'}) \in E$ for every $z' \in [z]$.  We require the following properties to hold for a trip.  First, $u_{z'} \neq o$ for every $z' \in [z-1]$; one can see easily soon that this is without loss of generality.  Then, for every served request $\rho \in R'$, we have $r_\rho \leq t$ and $v_\rho$ appears in the cycle.  If $\tilde z$ is the smallest index such that $u_{\tilde z} = v_\rho$, then we say $\rho$ is  served by the trip at time $t + \sum_{z' = 1}^{\tilde z} \ell(u_{z'-1}, u_{z'})$.   The completion time of the trip is defined as $t + \sum_{z' = 1}^{z} \ell(u_{z'-1}, u_{z'})$.  Abusing the definition slightly, sometimes we also use the word ``trip'' to denote the cycle $(u_0 = o, u_1, u_2, \cdots, u_z = o)$, with $t$ and $R'$ specified separately. 

The output of the food delivery problem contains $k$ sequences of trips correspondent to the itineraries of the
 $k$ vehicles. The following properties need to be satisfied. For every pair of adjacent trips in any of the $k$ sequences,  the starting time of the latter trip is at least the completion time of the former one.  Moreover, each request in $R$ is served by exactly one trip in the $k$ sequences; in other words, the sets of served requests in all trips of the $k$ sequences form a partition of $R$.  Let $t_\rho$ be the time that a request $\rho \in R$ is served (by the unique trip that serves it). We define the flow time of $\rho$ to be $t_\rho - r_\rho$.  The goal of the problem is to minimize the maximum flow time, i.e, $\max_{\rho \in R}(t_\rho - r_\rho)$.

So far we have defined the food delivery problem in the offline setting. In the online setting, $G, \ell, k$ and $c$ are given upfront, but the requests arrive online. Once we decided to start a trip at time $t$, then we have to complete the trip as planed.  Formally, we require that for any time $t \geq 0$, the prefixes of the $k$ sequences of trips with starting time at or before $t$ can only depend on the requests that arrive at or before $t$. \medskip

\noindent{\bf Speed Augmentation}\ \ In the speed augmentation model, we are allowed to use vehicles that run $\alpha$-times faster than the normal vehicles, where $\alpha \geq 1$ is called the speeding factor. Formally, we are solving the (offline or online) food delivery instance where the length function $\ell$ is changed to $\ell/\alpha$, while the competitive/approximation ratio of the algorithm is defined by comparing with the optimum solution without speeding, that is, w.r.t.\ the original length function $\ell$. So, we say an algorithm achieves $\alpha$-speeding $\beta$-competitive ratio (approximation ratio) if it obtains a solution with maximum flow time at most $\beta F$ for the instance where $\ell$ is replaced by $\ell/\alpha$, where $F$ is the optimum flow time for original instance.\medskip

\noindent{\bf Guessing the Optimum Maximum Flow Time Online}\ \  When designing online algorithms,  we shall use the standard doubling trick to assume that we are given an upper bound $F$ on the optimum maximum flow time of the instance, and the competitive ratio is defined by comparing to $F$.  That is, a $\beta$-competitive online algorithm has maximum flow time at most $\beta F$. If we have a $\beta$-competitive online algorithm $\calA$ under this setting, then we can obtain an $8\beta$-competitive algorithm $\calA'$ under the setting where $F$ is not given to us, while keeping the speeding factor and running time unchanged. In the paper, we are not trying to optimize the constant, and the factor of $8$ will be hidden in the $O(\cdot)$ notation. We show how this can be done in Appendix~\ref{appendix:guessing}.\medskip

\noindent{\bf Capacitated Vehicle Routing Problem (CVRP)}\ \ It is convenient for us to use the following definition of CVRP. We are given a graph $G = (V, E)$ with edge lengths $\ell: E \to \bbR_{>0}$, a depot $o \in V$, and an integer $c \in \bbZ_{ \geq 1} \cup \{\infty\}$. We are given a set $R$ of requests, each defined by its pickup location in $V \setminus \{o\}$. We need to find a (possibly-complex) cycle $(o, u_1, u_2, \cdots, u_z, o)$ with the minimum cost and associate each $u_{z'}, z' \in [z]$ with a subset $Q_{z'} \subseteq R$ of requests with pickup location $u_{z'}$. We require $\biguplus_{z' = 1}^{z}Q_{z'} = R$ and for every $1 \leq j \leq j' \leq z$ such that $u_{z'} \neq o$ for every $z' \in [j, j']$, we have $|\biguplus_{z' \in [j, j']}Q_{z'}| \leq c$.
%
If $c = \infty$, then clearly the problem becomes the well-known traveling salesman (TSP) problem.  Throughout the paper, we let $\alpha_\TSP$ ($\alpha_\CVRP$ resp.) be the infinum of all values $\alpha \geq 1$ such that there exists a polynomial time $\alpha$-approximation algorithm for TSP (CVRP resp.).

    	\section{Online Uncapacitated Single-Vehicle FDP on Tree Metrics}
	\label{sec:tree-single}
	In this section and the next one, we give the $O(1)$-competitive algorithm for online uncapacitated FDP on tree metrics, proving Theorem~\ref{thm:ALG-on-trees}. In this section, we consider the case $k = 1$, i.e, there is only one vehicle.  We separate this case from the general problem as its algorithm is simpler and the competitive ratio we obtain is better.
	
	First we setup some notations here.  Let $T = (V, E)$ be the tree and we assume it is rooted at the depot $o \in V$.  For every $v \in V$, we use $V_v$ to denote the set of descendants of $v$ (including $v$ itself). For an edge $e = (u, v)$ with $v$ being the child, we define $V_e = V_v$. We use $d$ to denote the metric induced by the tree $T$ with lengths $\ell(\cdot)$.  Given a set $X \subseteq V$ of vertices, we define $\mst(X)$ to be the cost of the minimal sub-tree of $T$ containing $X$ and $o$ (Notice that we require the tree to contain $o$).   Suppose we are further given an element $e \in V \cup E$.  We define $\mst_e(X) = \mst(V_e \cap X)$, which is the cost of the minimal sub-tree of $T$ containing $o$ and all vertices in $V_e \cap X$. So if $V_e \cap X = \emptyset$, then $\mst_e(X) = 0$; otherwise, the tree contains all the edges from $o$ to $e$ (including $e$ itself if it is an edge). By definition we have $\mst_o(X) = \mst(X)$. 
	
	Since most of the time, we deal with requests, it is convenient for us to use $\mst(R')$ and $\mst_e(R')$ to denote $\mst(\{v_\rho: \rho \in R'\})$ and $\mst_e(\{v_\rho: \rho \in R'\})$ respectively for a set $R' \subseteq R$ of requests.  Abusing notations slightly, sometimes we also use $\mst(X)$ to denote the actual sub-tree of $T$ achieving the cost $\mst(X)$. This also extends to $\mst_e(X), \mst(R')$ and $\mst_e(R')$.
	
	We now overview the algorithm for the online uncapacitated FDP problem on tree metrics, for both the single and multiple-machine cases.  We break the time horizon into intervals of length $F$: $\{[0, F), [F, 2F), \break [2F, 3F), [3F, 4F), ...  \}$ and index them by $1, 2, 3, 4, \cdots$. Recall that $F$ is an upper bound on the optimum maximum flow time, and the competitive ratio of our algorithm is defined against $F$. Let $R_i$ be the set of requests that arrive in interval $i$.  For each even integer $i$, we break $R_i$ into $R^\sfleft_i$ and $R^\sfright_i$, merge $R^\sfleft_i$ with $R_{i-1}$, merge $R^\sfright_i$ with $R_{i+1}$, so as to minimize some carefully designed objective.  So for each odd integer $i$, we have constructed a merged set $R'_i := R^\sfright_{i-1} \cup R_i \cup R^\sfleft_{i+1}$, which we call a \emph{bundle}.   The bundles then can be constructed in an online manner: the bundle $R'_i$ is determined by requests that arrive before time $(i+2)F$. 
	
	Then our online algorithm handles the bundles separately. When $k = 1$, we view each bundle $R'_i$ as a job of size $2\mst(R'_i)$; when $k \geq 2$, we break $R'_i$ into many \emph{groups} and treat each group $Q$ as a job of size $2\mst(Q)$. We think of all these jobs are released at time $(i+2)F$. We then assign the jobs to the vehicles online in a  greedy manner.  A crucial lemma we show is that the jobs have a small backlog: The jobs released in any interval $[aF, bF]$ have total size at most $(b-a)F + O(1)\cdot kF$, and each job has size $O(F)$. The properties guarantee that all the jobs complete with an $O(F)$ flow time. 
	
	\subsection{Description of Algorithm for Single-Vehicle Case}
	Now we formally state the algorithm for the case $k = 1$. 	For every integer $i \geq 1$,  let $R_i=\{\rho \in R: (i-1)F\le r_\rho < iF\}$ as stated.  Indeed, once we know a request is in $R_i$, we do not care about its precise arrival time anymore.  In the first step of the online algorithm, we partition the requests into  bundles $(R'_i)_{i \geq 1: i\text{ is odd}}$, using Algorithm~\ref{alg:partition}, where the bundle $R'_i$ is generated at time $(i+2)F$. For now let us focus on the case $k = 1$ in the algorithm. The formal definition for $\cost(\cdot|\cdot)$ is postponed to section~\ref{sec:tree-multi} since it's not used here. Throughout the section and the next one, we assume all undefined sets are set to $\emptyset$. See Figure~\ref{fig:sets}(a) for an illustration of the relationships between $R_i$, $R'_i$, $R^\sfleft_i$ and $R^\sfright_i$.
	
	\begin{algorithm}[h]
	\caption{Partition of Requests into Bundles for Both Single and Multiple-Vehicle Cases}
	\label{alg:partition}
		\begin{algorithmic}[1]
			\For{$i = 2, 4, 6, 8, \cdots$}
				\State wait until time $(i+1)F$ 
                \State let $R_i = \{\rho \in R: r_\rho \in [(i-1)F, iF)\}$ is as defined in the text				
			    \State partition $R_i$ into $R^\sfleft_i$ and $R^\sfright_i$ so as to minimize  
			    \begin{align*}
				    \begin{cases}
				    	\mst(R_{i-1}\cup R^\sfleft_{i})+\mst(R^\sfright_{i} \cup R_{i+1}) &  \textbf{if } k = 1\\[5pt]
				    	\cost(R^\sfleft_{i}|R_{i-1}) + \cost(R^\sfright_{i}|R_{i+1}) &	 \textbf{if }  k > 1
				    \end{cases}
			    \end{align*}
			    \State release the bundle $R'_{i-1}:=R^\sfright_{i-2} \cup R_{i-1} \cup R^\sfleft_i$  (at time $(i+1)F$)
			\EndFor
		\end{algorithmic}
	\end{algorithm}

	\begin{figure}[!htbp]
		\centering	
		\begin{subfigure}{\textwidth}
			\centering
		 	\includegraphics[width=0.9\textwidth]{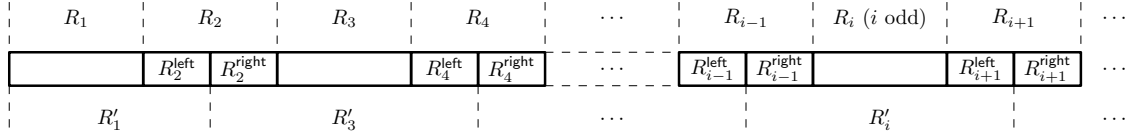}
		 	\caption{Illustration of relationships between  $R_i$'s, $R^\sfleft_i$'s, $R^\sfright_i$'s  and $R'_i$'s. For every even $i$, we have $R_i = R^\sfleft_i \uplus R^\sfright_i$. For every odd $i$, we have $R'_i = R^\sfright_{i-1}\uplus R_i \uplus R^\sfleft_{i+1}$.}
		\end{subfigure}\medskip
		
		\begin{subfigure}{\textwidth}
			\centering
			\includegraphics[width=0.9\textwidth]{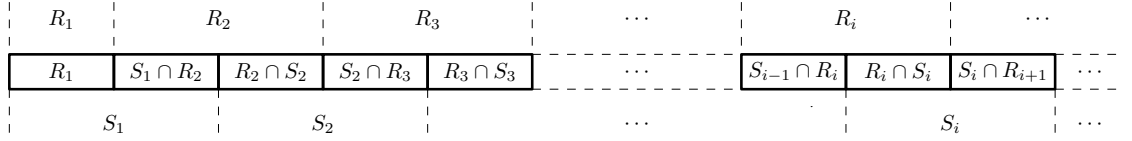}
			\caption{Illustration of relationships between $R_i$'s and $S_i$'s. For every $i$, we have $R_i \subseteq S_{i-1} \uplus S_i$ and $S_i \subseteq R_i \uplus R_{i+1}$.}
		\end{subfigure}\bigskip
		\caption{Illustration of sets used in the algorithm and analysis for online uncapacitated FDP on tree metrics.}
		\label{fig:sets}
	 \end{figure}
	  
    We briefly talk about the efficiency of the algorithm for $k = 1$. It is easy to see that the following simple strategy will find the partition $(R^\sfleft_i, R^\sfright_i)$ that minimizes $\mst(R_{i-1}\cup R^\sfleft_{i})+\mst(R^\sfright_{i} \cup R_{i+1})$. For every $\rho \in R_i$, if $d(\rho, \mst(R_{i-1})) \leq d(\rho, \mst(R_{i+1}))$, then we put $\rho$ in $R^\sfleft_i$. Otherwise we put $\rho$ in $R^\sfright_i$.  Here $d(\rho, \mst(R_{i-1}))$ ($d(\rho, \mst(R_{i+1}))$, resp.) is the shortest distance between $r_\rho$ and any vertex in $\mst(R_{i-1}))$ ($\mst(R_{i+1})$, resp.).

	Notice that starting from the depot $o$, serving all the requests in $R'_i$ and traveling back to the depot take time exactly $2\mst(R'_i)$.  Therefore, we can view each bundle $R'_i$, for an odd $i \geq 1$, as a job of size $2\mst(R'_i)$ released at time $(i+2)F$. We view the vehicle as a machine.  In the second step of the online algorithm, we then schedule the jobs (i.e., bundles) on the machine (i.e., the vehicle) in their order of releasing: Whenever the machine is idle and there are available jobs, we process the job that is released the earliest.  This finishes the description of the algorithm for the single vehicle case.

	\subsection{Analysis of Backlog of Jobs} We define the flow time of a job $R'_i$, for an odd $i \geq 1$, as its completion time minus its release time (which is $(i+2)F$). By the folklore result, if the total size of all jobs released in time $[t, t']$ is at most $t' - t + D$ for some $D$ and for every pair $t \leq t'$ of time points, then every job has flow time at most $D$ in the schedule. Therefore, it remains to prove that $D$ is small, which is stated in Lemma~\ref{lem:tree_travelingtimeintervals} later.
		
	Before describing and proving the lemma, we introduce a partition $(S_i)_{i \geq 1}$ of requests that depends on the offline optimum solution, and prove some helper lemmas.   Notice that in the offline optimum solution a trip can not serve two requests whose arrival times are more than $F$ apart: If that happens, then the flow time of the earlier request of the two is more than $F$. Thus,  a trip in the optimum solution serves either requests from a single set $R_i$ for some $i$, or requests from two sets $R_i$ and $R_{i+1}$ for some $i$. In both the cases, we put all the requests in the trip into the set $S_i$.  Therefore, $S_1, S_2, S_3, \cdots$ form a partition of $R$. Notice that $S_i \subseteq R_i \uplus R_{i+1}$ and $R_i \subseteq S_{i-1} \uplus S_i$ for every integer $i$. See Figure~\ref{fig:sets}(b) for an illustration of the relationship between $R_i$'s and $S_i$'s. Notice that $S_i$'s is only used in our analysis, as they depend on the offline optimum solution, which our algorithm does not know.

We need the following simple lemma and corollaries:
\begin{lemma}
	Let $X_1, X_2, X_3$ and $X_4$ be subsets of $V$ and $e \in E \cup V$. Then the following inequalities hold:
	\begin{align}
		\mst_e (X_1 \cup X_2) &\leq \mst_e(X_1) + \mst_e(X_2), \label{inequ:mst-simple-1} \\
		\mst_e(X_2) + \mst_e(X_1 \cup X_2 \cup X_3) &\leq \mst_e(X_1 \cup X_2) + \mst_e(X_2 \cup X_3), \label{inequ:mst-simple-2}\\
		\mst_e(X_1 \cup X_2 \cup X_3) + \mst_e(X_2 \cup X_3 \cup X_4) &\leq \mst_e(X_1 \cup X_2)  + \mst_e(X_2 \cup X_3) + \mst_e(X_3 \cup X_4). \label{inequ:mst-simple-3}
	\end{align}
\end{lemma}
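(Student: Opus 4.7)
The plan is to prove all three inequalities by a single edge-by-edge argument built on the decomposition
\[
\mst_e(X) \;=\; \mst(V_e\cap X) \;=\; \sum_{f\in E}\ell(f)\cdot\mathbf{1}\bigl[V_f\cap V_e\cap X\neq\emptyset\bigr],
\]
which I would derive first. The reason is standard: for any $Y\subseteq V$, the minimum subtree of $T$ containing $Y\cup\{o\}$ uses an edge $f=(u,v)$ (with $v$ the child of $u$) iff some vertex of $Y$ is a descendant of $v$, i.e.\ iff $V_f\cap Y\neq\emptyset$. Applying this to $Y=V_e\cap X$ gives the displayed formula.

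Next I would fix an arbitrary edge $f\in E$ and set $\sigma_i:=\mathbf{1}[V_f\cap V_e\cap X_i\neq\emptyset]\in\{0,1\}$. Because intersection distributes over union, for any $I\subseteq\{1,2,3,4\}$ the indicator attached to $\bigcup_{i\in I}X_i$ equals $\bigvee_{i\in I}\sigma_i$. Hence each of the three claimed inequalities follows by summing, over $f\in E$, the corresponding pointwise inequality on booleans weighted by $\ell(f)$. So the whole lemma reduces to three $\{0,1\}$-valued inequalities.

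The boolean version of \eqref{inequ:mst-simple-1} is $\sigma_1\vee\sigma_2\le \sigma_1+\sigma_2$, which is immediate. For \eqref{inequ:mst-simple-2}, I would split on $\sigma_2$: if $\sigma_2=1$ both sides equal $2$; if $\sigma_2=0$ the inequality reduces to $\sigma_1\vee\sigma_3\le \sigma_1+\sigma_3$, again immediate. For \eqref{inequ:mst-simple-3}, the boolean inequality is
\[
(\sigma_1\vee\sigma_2\vee\sigma_3)+(\sigma_2\vee\sigma_3\vee\sigma_4) \;\le\; (\sigma_1\vee\sigma_2)+(\sigma_2\vee\sigma_3)+(\sigma_3\vee\sigma_4),
\]
which I would handle by splitting on $\sigma_2\vee\sigma_3$: when $\sigma_2=\sigma_3=0$ both sides equal $\sigma_1+\sigma_4$; when $\sigma_2\vee\sigma_3=1$ the LHS is at most $2$, while the RHS is at least $2$ (the middle term contributes $1$, and whichever of $\sigma_2,\sigma_3$ is nonzero forces one of the flanking terms to be $1$ as well).

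There is no real conceptual obstacle. The only mildly finicky step is the case split for \eqref{inequ:mst-simple-3}, and even that has just two cases once phrased in terms of $\sigma_2\vee\sigma_3$. The entire content is in the edge-decomposition of $\mst_e$; everything downstream is mechanical.
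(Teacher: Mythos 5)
Your proof is correct and matches the paper's argument essentially step for step: both reduce to a per-edge indicator decomposition of $\mst_e$ (the paper phrases it as ``$\ell(e')$ is counted in $\mst(Y)$ iff $V_{e'}\cap Y\neq\emptyset$'' and reduces to $e=o$ by intersecting with $V_e$), and both then verify the resulting boolean inequalities by a small case split. Your booleanized formulation with $\sigma_i$ and the split on $\sigma_2$ resp.\ $\sigma_2\vee\sigma_3$ is simply a cleaner restatement of the paper's ``count how many times $\ell(e')$ appears on each side'' argument.
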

\begin{proof}
	It suffices to prove the inequalities for $e = o$, since the other cases can be proved by changing $X_1, X_2, X_3$ and $X_4$ to $V_e \cap X_1, V_e \cap X_2, V_e \cap X_3$ and $V_e \cap X_4$ respectively.
	
	\eqref{inequ:mst-simple-1} is easy to see. For \eqref{inequ:mst-simple-2}, we focus on each edge $e' \in E$ and count the number of times $\ell(e')$ is considered on the left and right sides respectively.  Notice that $\ell(e')$ is counted in $\mst(Y)$ for some set $Y$ if and only if $V_{e'} \cap Y \neq \emptyset$.
	\begin{itemize}[itemsep=0pt]
		\item If $\ell(e')$ is counted $0$ times on the left side, then $V_{e'} \cap Y = \emptyset$ for every $Y \in \{X_1, X_2, X_3\}$. Clearly $\ell(e')$ is counted $0$ times on the right side.
		\item If $\ell(e')$ is counted in $\mst(X_1 \cup X_2 \cup X_3)$ but not in $\mst(X_2)$, then either $V_{e'} \cap X_1 \neq \emptyset$ or $V_{e'} \cap X_3 \neq \emptyset$. Then $\ell(e')$ is counted at least once on the right side.
		\item If $\ell(e')$ is counted in $\mst(X_2)$, then it is counted twice on both the left right sides.
	\end{itemize}\smallskip
	
	Similarly for the proof of \eqref{inequ:mst-simple-3}, we consider the number of times each $\ell(e')$ is counted on both sides:
	\begin{itemize}[itemsep=0pt]
		\item If $e'$ is counted 0 times on the left side, then $V_{e'} \cap Y = \emptyset$ for every $Y \in \{X_1, X_2, X_3, X_4\}$. Thus, $\ell(e')$ is counted $0$ times on the right side.
		\item Assume $\ell(e')$ is counted once on the left side. W.l.o.g assume it is counted in $\mst(X_1 \cup X_2 \cup X_3)$ but not in $\mst(X_2 \cup X_3 \cup X_4)$. Then clearly $e'$ is counted at least once on the right side. 
		\item Finally assume $\ell(e')$ is counted in both $\mst(X_1 \cup X_2 \cup X_3)$  and $\mst(X_2 \cup X_3 \cup X_4)$. If $V_{e'} \cap X_2 \neq \emptyset$ or $V_{e'} \cap X_3 \neq \emptyset$, then $\ell(e')$ is counted at least twice on the right side. Otherwise we have $V_{e'} \cap X_1 \neq \emptyset$ and $V_{e'} \cap X_4 \neq \emptyset$. In this case $\ell(e')$ is counted twice on the right side.  \qedhere
	\end{itemize}
\end{proof}

\begin{coro}
	\label{coro:R'i-less-than}
	For every odd $i \geq 1$ and $e \in E \cup V$, we have $\mst_e(R'_i) \leq \mst_e(R^\sfright_{i-1} \cup R_i) + \mst_e(R_i \cup R^\sfleft_{i+1}) - \mst_e(R_i)$.
\end{coro}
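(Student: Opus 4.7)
The plan is to observe that the corollary is just a direct rewriting of inequality \eqref{inequ:mst-simple-2} from the preceding lemma, instantiated at the appropriate sets.

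First I would recall that, by the definition of a bundle in Algorithm~\ref{alg:partition}, $R'_i = R^\sfright_{i-1} \cup R_i \cup R^\sfleft_{i+1}$ for any odd $i \geq 1$. Substituting this into the left-hand side, the claim becomes
\[
\mst_e(R_i) + \mst_e(R^\sfright_{i-1} \cup R_i \cup R^\sfleft_{i+1}) \;\leq\; \mst_e(R^\sfright_{i-1} \cup R_i) + \mst_e(R_i \cup R^\sfleft_{i+1}),
\]
which is exactly inequality \eqref{inequ:mst-simple-2} applied with $X_1 = R^\sfright_{i-1}$, $X_2 = R_i$, and $X_3 = R^\sfleft_{i+1}$.

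There is essentially no obstacle here; the work has already been done in the lemma, whose proof handles the reduction to arbitrary $e$ (by restricting each $X_j$ to $V_e \cap X_j$) and then the edge-by-edge counting argument. So the proof of the corollary is a one-line invocation of \eqref{inequ:mst-simple-2}, with the substitution above made explicit and the resulting inequality rearranged to match the statement.
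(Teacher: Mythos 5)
Your proof is correct and is exactly the argument the paper gives: instantiate inequality \eqref{inequ:mst-simple-2} with $X_1 = R^\sfright_{i-1}$, $X_2 = R_i$, $X_3 = R^\sfleft_{i+1}$, and use $X_1 \cup X_2 \cup X_3 = R'_i$.
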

\begin{proof}
	Applying \eqref{inequ:mst-simple-2} with $X_1 = R^\sfright_{i-1}, X_2 = R_i$ and $X_3 = R^\sfleft_{i+1}$, and using $X_1 \cup X_2 \cup X_3 = R'_i$ proves the lemma. \footnote{Recall that in the notation $\mst_e$, we can view requests as vertices by ignoring their arrival times.}
\end{proof}

\begin{coro}
	\label{coro:less-than-Si-1-Si}
	For every integer $i \geq 1$ and $e \in E \cup V$, we have $\mst_e(S_{i-1} \cup R_i) + \mst_e(R_i \cup S_i) - \mst_e(R_i) \leq \mst_e(S_{i-1}) + \mst_e(S_i)$.
\end{coro}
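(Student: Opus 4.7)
The plan is to reduce the corollary to the already-proved three-set inequality \eqref{inequ:mst-simple-3} by a clean four-way partition of the requests involved. The key observation is that the definitions of $R_i$ and $S_i$ force very tight containment relations: $S_{i-1}\subseteq R_{i-1}\uplus R_i$, $S_i\subseteq R_i\uplus R_{i+1}$, and $R_i\subseteq S_{i-1}\uplus S_i$. These let us carve up the requests into four disjoint pieces that are exactly right to feed into \eqref{inequ:mst-simple-3}.

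Concretely, I would set
\[
A := S_{i-1}\cap R_{i-1},\quad B := S_{i-1}\cap R_i,\quad C := S_i\cap R_i,\quad D := S_i\cap R_{i+1}.
\]
Using the three containment facts above, a short check shows $S_{i-1}=A\uplus B$, $S_i=C\uplus D$, $R_i=B\uplus C$, and consequently
\[
S_{i-1}\cup R_i = A\cup B\cup C,\qquad R_i\cup S_i = B\cup C\cup D.
\]
Plugging these identities into the two sides of the claimed inequality turns the statement into
\[
\mst_e(A\cup B\cup C)+\mst_e(B\cup C\cup D)\le \mst_e(A\cup B)+\mst_e(B\cup C)+\mst_e(C\cup D),
\]
which is exactly \eqref{inequ:mst-simple-3} applied to $X_1=A$, $X_2=B$, $X_3=C$, $X_4=D$.

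I do not expect any real obstacle here; the work is entirely in recognizing the right decomposition, after which the corollary is immediate. The only mildly delicate point is verifying $R_i=B\uplus C$, which uses both $R_i\subseteq S_{i-1}\uplus S_i$ (to get $R_i\subseteq B\cup C$) and the fact that $B,C$ are by construction subsets of $R_i$ (for the reverse inclusion), together with the disjointness $S_{i-1}\cap S_i=\emptyset$ (to ensure the union is disjoint). Once this bookkeeping is in place, the corollary follows in one line from \eqref{inequ:mst-simple-3}.
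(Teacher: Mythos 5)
Your proof is correct and takes essentially the same approach as the paper: the four sets $A,B,C,D$ you define are exactly the paper's choices $X_1=R_{i-1}\cap S_{i-1}$, $X_2=S_{i-1}\cap R_i$, $X_3=R_i\cap S_i$, $X_4=S_i\cap R_{i+1}$, and both proofs then invoke \eqref{inequ:mst-simple-3} and the same union identities. The bookkeeping you spell out (verifying $R_i = B \uplus C$, etc.) matches the paper's, which simply states those identities without elaboration.
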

\begin{proof}
	We apply \eqref{inequ:mst-simple-3} with $X_1 = R_{i-1} \cap S_{i-1}, X_2= S_{i-1} \cap R_i, X_3 = R_i \cap S_i$ and $X_4 = S_i \cap R_{i+1}$. The corollary follows by that $X_1 \cup X_2 = S_{i-1}$, $X_2 \cup X_3 = R_i$, $X_3 \cup X_4 = S_i$, $X_1 \cup X_2 \cup X_3 = S_{i-1} \cup R_i$ and $X_2 \cup X_3 \cup X_4 = R_i \cup S_i$.
\end{proof}

With the corollaries established, we now state  and prove the lemma that bounds the backlog of bundles.
\begin{lemma}
	\label{lem:tree_travelingtimeintervals}
	For any two odd positive integers $a \leq b$, we have 
	\begin{align*}
		2\sum_{i\in [a, b]: i\text{ odd}}\mst(R'_i) \leq (b-a + 4)F.
	\end{align*}
\end{lemma}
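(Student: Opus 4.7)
My strategy is to relate the algorithm's bundles $R'_i$ to the partition $\{S_j\}$ induced by the offline optimum, and then bound the resulting $\sum\mst(S_j)$ by the optimum's total travel time in the relevant window. The starting point is Corollary~\ref{coro:R'i-less-than}, which gives, for each odd $i\in[a,b]$,
\[
\mst(R'_i)\le \mst(R^\sfright_{i-1}\cup R_i) + \mst(R_i\cup R^\sfleft_{i+1}) - \mst(R_i).
\]
Summing over odd $i\in[a,b]$, the positive terms on the right can be reorganized as the ``half-weights'' of the algorithm's objective $f(j):=\mst(R_{j-1}\cup R^\sfleft_j)+\mst(R^\sfright_j\cup R_{j+1})$ for even $j\in[a-1,b+1]$ (with the extreme indices $j=a-1$ and $j=b+1$ contributing only one of their two half-weights each).

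Next, I exploit the algorithm's minimization of $f(j)$ by testing the alternative partition $R^\sfleft_j=A_j:=R_j\cap S_{j-1}$, $R^\sfright_j=B_j:=R_j\cap S_j$ suggested by the optimum. Using $R_i = A_i\uplus B_i$, $S_{i-1}=B_{i-1}\cup A_i$ and $S_i=B_i\cup A_{i+1}$, one checks the identities $R_{j-1}\cup A_j=A_{j-1}\cup S_{j-1}$ and $B_j\cup R_{j+1}=S_j\cup B_{j+1}$, so that
\[
f(j)\le \mst(A_{j-1}\cup S_{j-1})+\mst(S_j\cup B_{j+1}).
\]
Now the crucial step: for each odd $m\in[a,b]$, pair the two terms $\mst(A_m\cup S_m)$ (coming from $j=m+1$) and $\mst(S_{m-1}\cup B_m)$ (coming from $j=m-1$). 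Since $A_m\cup S_m=A_m\cup B_m\cup A_{m+1}$ and $S_{m-1}\cup B_m=B_{m-1}\cup A_m\cup B_m$, a single application of \eqref{inequ:mst-simple-3} with $X_1=B_{m-1}$, $X_2=A_m$, $X_3=B_m$, $X_4=A_{m+1}$ yields
\[
\mst(A_m\cup S_m)+\mst(S_{m-1}\cup B_m)\le \mst(S_{m-1})+\mst(R_m)+\mst(S_m).
\]
Summing the paired bounds over odd $m\in[a,b]$, the $\sum_m\mst(R_m)$ contribution cancels exactly against the $-\sum_i\mst(R_i)$ on the right-hand side of Corollary~\ref{coro:R'i-less-than}, leaving (after absorbing the two unpaired boundary terms into a small additive correction)
\[
\sum_{i\text{ odd},\,a\le i\le b}\mst(R'_i)\le \sum_{j=a-1}^{b}\mst(S_j)+(\text{boundary terms of size }O(F)).
\]

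Finally, since $\{S_j\}$ partitions $R$ and every OPT trip serving $Q\subseteq S_j$ has length at least $2\mst(Q)$, subadditivity of $\mst$ over trip-unions gives $2\sum_{j=a-1}^b\mst(S_j)$ as a lower bound on the total length of OPT trips touching $\bigcup_{j=a-1}^b S_j$. All such trips serve requests arriving within $[(a-2)F,(b+1)F]$ and must finish within $F$ of each served request, so they lie entirely in $[(a-2)F,(b+2)F]$; the unit-speed vehicle can perform at most $(b-a+4)F$ work in that window. Combining with the previous step yields $2\sum_{i\text{ odd}}\mst(R'_i)\le(b-a+4)F$ as claimed. The main obstacle is the second step: the algorithm's minimization bounds only the sum $f(j)=f_1(j)+f_2(j)$, not the individual half-weights, so a naive substitution would double-count $\mst(S_j)$-terms. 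The exact pairing across consecutive $m$ together with \eqref{inequ:mst-simple-3} is precisely what makes the $\mst(R_m)$ contributions cancel the overcount and delivers the tight bound without a factor-of-two loss.
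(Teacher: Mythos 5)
Your proposal follows essentially the same route as the paper's proof: expanding via Corollary~\ref{coro:R'i-less-than}, exploiting the minimality of the partition $(R^\sfleft_i, R^\sfright_i)$ by testing the OPT-induced partition $(R_i\cap S_{i-1}, R_i\cap S_i)$, and then cancelling the $\mst(R_m)$ terms via \eqref{inequ:mst-simple-3} — your ``pairing across consecutive $m$'' step is exactly the paper's Corollary~\ref{coro:less-than-Si-1-Si} stated in the notation $A_m,B_m$. The only imprecision is the final accounting: you treat the two unpaired boundary terms as a loose ``$O(F)$'' correction added \emph{on top of} the $(b-a+4)F$ window bound, whereas to actually get $(b-a+4)F$ (and not $(b-a+4)F + O(F)$) those terms — which are $\mst(S_{a-2}\cap R_{a-1})$ and $\mst(R_{b+1}\cap S_{b+1})$ — must be included in the partition $\calS$ of $R_{a-1}\cup\cdots\cup R_{b+1}$ whose trips all lie inside the window $[(a-2)F,(b+2)F]$, as the paper does.
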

\begin{proof} 
	\begin{flalign}
		&& &\quad \sum_{i \in [a, b]:i\text{ odd}} \mst(R'_i) \nonumber\\
		&& &\leq \sum_{i \in [a, b]:i\text{ odd}} \big(\mst(R^\sfright_{i-1} \cup R_i) + \mst(R_i \cup R^\sfleft_{i+1}) - \mst(R_i)\big) &&\text{by Corollary~\ref{coro:R'i-less-than}}\nonumber\\
		&& &= \sum_{i \in (a, b): i\text{ even}}\Big(\mst(R_{i-1}\cup R^\sfleft_i) + \mst(R^\sfright_i \cup R_{i+1})\Big)- \sum_{i \in [a, b]: i\text{ odd}}\mst(R_i) \nonumber\\
		&& &\qquad + \mst(R^\sfright_{a-1} \cup R_a) + \mst(R_b\cup R^\sfleft_{b+1}) && \text{by reorganizing terms}\nonumber\\
		&& &\leq \sum_{i \in (a, b): i\text{ even}}\Big(\mst(R_{i-1}\cup S_{i-1}) + \mst(S_i \cup R_{i+1})\Big)- \sum_{i \in [a, b]: i\text{ odd}}\mst(R_i) \nonumber\\
		&& &\qquad + \mst(S_{a-2}\cap R_{a-1}) + \mst(S_{a-1} \cup R_a)  + \mst(R_b\cup S_b) + \mst(R_{b+1} \cap S_{b+1}) \label{inequ:tree-1}\\
		&& &= \sum_{i \in [a, b]:i\text{ odd}} \big(\mst(S_{i-1} \cup R_i) + \mst(R_i \cup S_i) - \mst(R_i)\big)\nonumber\\
		&& &\qquad +\mst(S_{a-2} \cap R_{a-1}) + \mst(R_{b+1} \cap S_{b+1})&&\text{by reorganizing terms}\nonumber\\
		&& &\leq \sum_{i \in [a, b]:i\text{ odd}}\left(\mst(S_{i-1}) + \mst(S_i)\right) + \mst(S_{a-2} \cap R_{a-1}) + \mst(R_{b+1} \cap S_{b+1}) &&\text{by Corollary \ref{coro:less-than-Si-1-Si}}\nonumber\\
		&& &=\mst(S_{a-2} \cap R_{a-1}) +  \sum_{i = a-1}^b\mst(S_i) + \mst(R_{b+1} \cap S_{b+1}).\nonumber
	\end{flalign}
	It remains to argue about \eqref{inequ:tree-1}.  Notice that $S_{i-1} \cap R_i$ and $R_i \cap S_i$ form a partition of $R_i$.  By the way we obtain $R^\sfleft_i$ and $R^\sfright_i$, we have $\mst(R_{i-1} \cup R^\sfleft_i) + \mst(R^\sfright_i \cup R_{i+1}) \leq \mst(R_{i-1} \cup (S_{i-1} \cap R_i)) + \mst((R_i \cap S_i) \cup R_{i+1}) = \mst(R_{i-1} \cup S_{i-1}) + \mst(S_i \cup R_{i+1})$.  Then $R^\sfright_{a-1} \cup R_a \subseteq R_{a-1} \cup R_a = (S_{a-2} \cap R_{a-1}) \cup S_{a-1} \cup R_a$ and $R_b \cup R^\sfleft_{b+1} \subseteq R_b \cup R_{b+1} = R_b \cup S_b \cup (R_{b+1}\cap S_{b+1})$.  Combining the facts with \eqref{inequ:mst-simple-1} gives \eqref{inequ:tree-1}.
	
	Now we prove that $2\left(\mst(S_{a-2} \cap R_{a-1}) + \sum_{i = a-1}^{b}\mst(S_i) + \mst(R_{b+1} \cap S_{b+1})\right)\leq (b-a+4)F$, which finishes the proof of the lemma.  We define $\calS= \{(S_{a-2} \cap R_{a-1}), S_{a-1}, S_a, S_{a+1}, \cdots, S_b, (R_{b+1} \cap S_{b+1})\}$ for convenience. Then $\calS$ forms a partition of $Q := R_{a-1} \cup R_a \cup R_{a+1} \cup \cdots \cup R_{b+1}$.  Focus on the trips in the optimum solution that serve at least one request in $Q$.  By the definition of $S_i$'s, each such trip can not serve requests from two different sets in $\calS$.  Moreover, such a trip can not start before time $(a-2)F$ since all requests in $Q$ arrive no earlier than $(a-2)F$. It can not end after $(b+1)F + F = (b+2)F$ since all requests in $Q$ arrive before $(b+1)F$ and the maximum flow time of the optimum solution is at most $F$.  Therefore, $2\sum_{S \in \calS}\mst(S)$ is at most the total length of these trips, which is at most $(b+2)F  - (a-2)F = (b-a+4)F$. 
\end{proof}


By Lemma~\ref{lem:tree_travelingtimeintervals}, the total size of jobs (i.e, bundles) released in $[t, t']$ is at most $(t' - t) + 4F$ for any $t \leq t'$, using the language of the scheduling setting. Therefore, the greedy scheduling algorithm produce a schedule with maximum flow time at most $4F$.  For an odd $i$, $R'_i$ is released at time $(i+2)F$, and thus all requests in $R'_i$ are served by time $(i+6)F$. Since requests in $R'_i$ arrive no earlier than $(i-2)F$, the maximum flow time of all requests is at most $8F$. This finishes the proof of Theorem~\ref{thm:ALG-on-trees} for the case $k = 1$. 

\section{Online Uncapacitated Multiple-Vehicle FDP on Tree Metrics}
\label{sec:tree-multi}
Now we move on to the case of general $k$ for online FDP on trees. As we mentioned, the main differences between the algorithm and the one for the single-vehicle case are: We use a different criteria to break $R_i$ for an even $i \geq 2$ into $R^\sfleft_i$ and $R^\sfright_i$, and we break a bundle $R'_i$ for an odd $i \geq 1$ into groups and treat each group as a job (instead of treating the whole bundle as a job). 

\subsection{Partitioning $R$ into Bundles}
As before we also generate the set $(R'_{i})_{i \geq 1, i\text{ is odd}}$ in the first step, which is also described in Algorithm~\ref{alg:partition}. To break $R_i$ into $R^\sfleft_i$ and $R^\sfright_i$, we use  the function $\cost(R^\sfleft_i|R_{i-1}) + \cost(R^\sfright_i|R_{i+1})$.   We define the relevant notations now.

Given a real number $F' > 0$ and a set $X$ of requests, we define
\begin{align*}
	c_{F'}(X, e) = \ceil{\frac{\mst_e(X)}{F'}}, \forall e \in E, \text{ and } \cost_{F'}(X) = 2\sum_{e \in E} c_{F'}(X, e)\ell(e).
\end{align*}
Most of the time we shall use the definitions with $F' = F$. Therefore we simply use $c(X, e)$ and $\cost(X)$  to denote $c_{F}(X, e)$ and $\cost(X)$.

Then, for two sets $X$ and $X'$ of requests, we define
		\begin{align*}
			c(X, e | X') &= \begin{cases}
			\ceil{\frac{\mathsf{\mst}_e(X)}{F}} = c(X, e)	& \text{if } V_e \cap X' = \emptyset\\[5pt]
			\floor{\frac{\mathsf{\mst}_e(X' \cup X) - \mathsf{\mst}_e(X')}{F}} & \text{if } V_e \cap X' \neq \emptyset
			\end{cases}, \quad 
			\forall e\in E, \\
			\text{ and } \cost(X | X') &= 2\sum_{e \in E} c(X, e|X')\ell(e).
		\end{align*}
By the definition, we have $c(X, e|X') = c(X \setminus X', e|X')$, and $\cost(X|X') = \cost(X \setminus X' | X')$.  

We now elaborate more on the definitions. Unlike the single-vehicle case, we need to break each bundle $R'_i$ created into many groups, to make sure that each group can be served in time $O(F)$.  Then an edge $e \in E$ in $\mst(R'_i)$ may need to be used by many trips to satisfy the property. Then $c(X, e)$ gives a lower bound on the number of bi-directional traverses of $e$ needed to serve $X$ in the offline optimum solution. Thus $\cost(X)$ gives the total time needed to serve $X$. Notice if a trip uses an edge $e$, it uses $e$ twice, hence the factor of $2$.

For $c(X, e | X')$, one can think of it as $c(X \cup X', e) - c(X', e) = \ceil{\frac{\mst_e(X \cup X')}{F}} - \ceil{\frac{\mst_e(X')}{F}}$, which is at least $\floor{\frac{\mst_e(X \cup X')}{F} - \frac{\mst_e(X')}{F}}$.  That is, this is the extra number of traverses of $e$ needed if we grow the set of request positions from $X'$ to $X \cup X'$.  In the actual definition, we use the lower bound instead if $V_e \cap X' \neq \emptyset$. 

\subsection{Upper Bound on Costs of Bundles}

The main goal of the section is to prove the following theorem:
\begin{theorem}
	\label{thm:tree-k-backlog}
	For every two odd positive integers $a \leq b$, we have 
	\begin{align*}
		\sum_{i \in [a, b]}\cost_{3F}(R'_i) \leq k(b - a + 4)F.
	\end{align*}
\end{theorem}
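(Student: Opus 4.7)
The plan is to follow the structure of the proof of \Cref{lem:tree_travelingtimeintervals}, with $\cost_{3F}(R'_i)$ playing the role of $2\mst(R'_i)$ and the conditional cost $\cost(\cdot\mid\cdot)$ replacing $\mst$ wherever the algorithm's partitioning choice appears. The factor $k$ on the right comes from charging against the total trip length across all $k$ vehicles of the offline optimum, and the inflation from $F$ to $3F$ on the left is there to absorb up to three rounding errors per edge (one ceiling on the left, two floors on the right coming from the conditional counts).

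\emph{Step 1 (per-edge analog of \Cref{coro:R'i-less-than}).} I would show that for every odd $i$ and every $e\in E$,
\begin{align*}
c_{3F}(R'_i, e) \le c(R^\sfright_{i-1}, e\mid R_i) + c_F(R_i, e) + c(R^\sfleft_{i+1}, e\mid R_i).
\end{align*}
The starting point is $\mst_e(R'_i) \le \mst_e(R^\sfright_{i-1}\cup R_i) + \mst_e(R_i\cup R^\sfleft_{i+1}) - \mst_e(R_i)$, from \eqref{inequ:mst-simple-2}. A case split on whether $V_e\cap R_i=\emptyset$ handles both regimes: when it is empty, the conditional counts reduce to unconditional ceilings and subadditivity of ceilings (plus the $3F$ on the left) suffices; when it is non-empty, opening the two floors gives a right-hand side $\ge \mst_e(R'_i)/F - 2$ that dominates $\lceil \mst_e(R'_i)/(3F)\rceil$ in every integer regime, using $c_F(R_i,e)\ge 1$ to cover the small-$\mst_e(R'_i)$ case. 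Summing over $e$ weighted by $2\ell(e)$ yields $\cost_{3F}(R'_i) \le \cost(R^\sfright_{i-1}\mid R_i) + \cost_F(R_i) + \cost(R^\sfleft_{i+1}\mid R_i)$.

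\emph{Step 2 (optimality of the split plus per-edge telescoping; the main obstacle).} Summing the Step~1 inequality over odd $i\in[a,b]$ and regrouping by even index, each interior even $j$ contributes both $\cost(R^\sfleft_j\mid R_{j-1})$ and $\cost(R^\sfright_j\mid R_{j+1})$; by the choice made in \Cref{alg:partition} their sum is at most $\cost(S_{j-1}\cap R_j\mid R_{j-1}) + \cost(S_j\cap R_j\mid R_{j+1})$, since $(S_{j-1}\cap R_j, S_j\cap R_j)$ is a valid partition of $R_j$. Combined with $\sum_{\text{odd }i}\cost_F(R_i)$, I would then prove a per-edge telescoping identity analogous to \Cref{coro:less-than-Si-1-Si}, of the form
\begin{align*}
c(S_{j-1}\cap R_j, e\mid R_{j-1}) + c(S_j\cap R_j, e\mid R_{j+1}) + c_F(R_j, e) \le c_F(S_{j-1}, e) + c_F(S_j, e);
\end{align*}
here the floor in the definition of $c(X,e\mid X')$ is tailored precisely to absorb the ceiling in $c_F(R_j, e)$. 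After summing over all $e$ and all even $j$, the whole sum collapses to $\sum_{i=a-1}^{b+1}\cost_F(S_i)$ plus $O(1)$ boundary terms from $j=a-1,b+1$, which I would bound by monotonicity of $\cost(\cdot\mid\cdot)$ in its first argument. This is the main obstacle: the asymmetry between conditioning on $R_{j-1}$ and on $R_{j+1}$, together with the unconditional $\cost_F(R_j)$, requires delicate per-edge case analysis on which of $V_e\cap\{R_{j-1}, S_{j-1}, R_j, S_j, R_{j+1}\}$ are empty so that the floors and ceilings align in exactly the right way.

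\emph{Step 3 (charging to the offline optimum).} Finally, I would bound $\sum_i \cost_F(S_i) \le k(b-a+4)F$. Each OPT trip has cycle length at most $2F$: its last delivery occurs within $F$ of trip start (by the max-flow-time bound), and the return to the depot takes at most $F$ more, since the vehicle reached that location in time at most $F$ and is unit-speed, so its distance from $o$ is at most $F$. Therefore any OPT trip $T_j$ serving $S_i$ has $\mst_e(R'_j) \le F$ for every $e$, and by subadditivity \eqref{inequ:mst-simple-1} the number of OPT trips serving $V_e\cap S_i$ is at least $c_F(S_i,e)=\lceil\mst_e(S_i)/F\rceil$. Each such trip traverses $e$ at least twice, contributing $\ge 2\ell(e)$ to its own length, so $\cost_F(S_i)$ is a lower bound on the total length of OPT trips serving $S_i$. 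All OPT trips serving $Q:=R_{a-1}\cup\cdots\cup R_{b+1}$ live in the window $[(a-2)F,(b+2)F]$ across $k$ vehicles, so their total length is at most $k(b-a+4)F$, closing the argument exactly as in the final paragraph of \Cref{lem:tree_travelingtimeintervals}'s proof.
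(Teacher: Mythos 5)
Your Steps~1 and~3 are sound and match the paper's structure (your subadditivity argument in Step~3 for the lower bound $\cost_F(S_i) \le $ total OPT trip length is in fact a bit cleaner than the paper's Claim~\ref{lem:candcost}). But Step~2 contains a genuine error that breaks the argument.

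After you apply the algorithm's optimality of the split, the sum you must bound is
\[
\sum_{\text{odd }i}\cost_F(R_i) + \sum_{\text{even }j}\Big(\cost(S_{j-1}\cap R_j\mid R_{j-1}) + \cost(S_j\cap R_j\mid R_{j+1})\Big) + \text{(boundary)}.
\]
You then propose, for even $j$, the per-edge identity
\[
c(S_{j-1}\cap R_j, e\mid R_{j-1}) + c(S_j\cap R_j, e\mid R_{j+1}) + c_F(R_j, e) \le c_F(S_{j-1}, e) + c_F(S_j, e).
\]
This is wrong for two reasons. First, the indexing does not match the sum: the unconditional terms available are $c_F(R_i,e)$ for \emph{odd} $i$, not $c_F(R_j,e)$ for even $j$, so you are injecting a term that isn't there. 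Second, the inequality itself is false. Take $R_{j-1}=R_{j+1}=\emptyset$: then the conditional counts reduce to unconditional ones, $S_{j-1}\cap R_j = S_{j-1}$, $S_j\cap R_j = S_j$, and $R_j = S_{j-1}\cup S_j$, so the inequality collapses to $c_F(S_{j-1}\cup S_j, e) \le 0$, which fails whenever $V_e\cap(S_{j-1}\cup S_j)\neq\emptyset$. The deeper reason it cannot work is that you have the two conditional terms conditioned on \emph{different} sets ($R_{j-1}$ and $R_{j+1}$), so there is nothing for the $\floor{\cdot}$'s to telescope against.

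The fix is to regroup the conditional terms by the \emph{odd} index they condition on, not by the even index in their first argument. For odd $i$, the two terms that condition on $R_i$ are $\cost(S_i\cap R_{i+1}\mid R_i)$ (coming from $j=i+1$) and $\cost(R_{i-1}\cap S_{i-1}\mid R_i)$ (coming from $j=i-1$). Pairing these with the available unconditional term $\cost(R_i)$ gives exactly the paper's Corollary~\ref{coro:break-cost}:
\[
\cost(R_i) + \cost(R_{i-1}\cap S_{i-1}\mid R_i) + \cost(S_i\cap R_{i+1}\mid R_i) \le \cost(S_{i-1}) + \cost(S_i),
\]
which does telescope correctly when summed over odd $i\in[a,b]$. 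Note that here both conditionings are on the \emph{same} set $R_i$, and $R_i = (R_{i-1}\cap S_{i-1}\ \cup\ R_i)\cap(S_i\cap R_{i+1}\ \cup\ R_i)$ sits between the two, which is precisely what makes the $\floor{\cdot}$'s in the two conditional counts absorb the $\ceil{\cdot}$ in $\cost(R_i)$ (this is the paper's inequality~\eqref{inequ:tree-6} applied with $X_1=R_{i-1}\cap S_{i-1}$, $X_2 = S_{i-1}\cap R_i$, $X_3 = R_i\cap S_i$, $X_4 = S_i\cap R_{i+1}$). Once this regrouping is in place, the rest of your plan goes through.
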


We prove the following three inequalities, which will imply the theorem:
\begin{align}
		\sum_{i \in [a, b]}\cost_{3F}(R'_i)
		&\leq\sum_{i \in [a, b]:i\text{ odd}}\Big( \cost(R_i) + \cost(R_{i-1}^\sfright|R_i) + \cost(R_{i+1}^\sfleft|R_i) \Big) \label{inequ:tree-2}\\
		&\leq \cost(S_{a-2} \cap R_{a-1}) + \sum_{i = a-1}^{b}\cost(S_i) + \cost(R_{b+1} \cap S_{b+1}) \label{inequ:tree-3}\\
&\leq k(b-a+4)F. \label{inequ:tree-4}
\end{align}

We first prove \eqref{inequ:tree-4}, by showing that the $\cost()$ function indeed capture the length of trips in optimum solution. 
\begin{claim}
\label{lem:candcost}
	 Let $R' \subseteq R$, and $\calP$ be the set of trips in the offline optimum solution that serve at least one request in $R'$. Then every $e \in E$ is used by at least $c(R', e)$ trips in $\calP$, implying that the total cost of $\calP$ is at least $\cost(R')$. 
\end{claim}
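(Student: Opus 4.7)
The plan: Fix an edge $e \in E$. If $V_e \cap R' = \emptyset$ then $c(R', e) = 0$ and the statement holds trivially, so assume $V_e \cap R' \neq \emptyset$. Since $e$ is a cut edge separating $V_e$ from $V \setminus V_e$, any trip serving a request whose delivery location lies in $V_e$ must use $e$. Let $\calP_e \subseteq \calP$ denote the trips serving at least one request in $V_e \cap R'$; every trip in $\calP_e$ uses $e$, so the number of trips in $\calP$ using $e$ is at least $|\calP_e|$, and it suffices to prove $|\calP_e| \geq c(R', e) = \ceil{\mst_e(R')/F}$.

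The crux is the following bound: for each $p \in \calP_e$, letting $Q_p \subseteq V_e \cap R'$ denote the requests in $V_e \cap R'$ served by $p$, we have $\mst(Q_p) \leq F$. Let $t_p$ be the start time of $p$. Each $\rho \in Q_p$ satisfies $r_\rho \leq t_p$ because $p$ cannot start before any request it serves has arrived, and $\rho$ is served with flow time at most $F$ in the offline optimum, hence at some time in $[r_\rho, r_\rho + F] \subseteq (-\infty, t_p + F]$. Thus by time $t_p + F$ the vehicle executing $p$ has visited every location in $\{v_\rho : \rho \in Q_p\}$. The portion of $p$ during the interval $[t_p, t_p + F]$ is then a walk in $T$ of length at most $F$ that starts at $o$ and visits $\{v_\rho : \rho \in Q_p\}$; any such walk in a tree must traverse every edge of the minimum subtree connecting $o$ to these locations, so its length is at least $\mst(Q_p)$.

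To aggregate, a per-edge counting argument shows $\mst\bigl(\bigcup_{p \in \calP_e} Q_p\bigr) \leq \sum_{p \in \calP_e} \mst(Q_p)$. Since every request in $V_e \cap R'$ is served by some trip, which must then lie in $\calP_e$, we have $\bigcup_{p \in \calP_e} Q_p = V_e \cap R'$, and hence $\mst_e(R') \leq |\calP_e| \cdot F$. Taking ceilings yields $|\calP_e| \geq c(R', e)$ as desired. For the total-cost consequence: each trip in $\calP$ using $e$ traverses $e$ an even number of times (its cycle enters and leaves $V_e$ the same number of times), and hence at least twice, so $e$ is traversed at least $2 c(R', e)$ times across all trips in $\calP$, contributing at least $2 c(R', e) \ell(e)$ to the total length; summing over edges gives that the total length of trips in $\calP$ is at least $2 \sum_e c(R', e) \ell(e) = \cost(R')$.

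The main obstacle is establishing $\mst(Q_p) \leq F$. The delicate part is combining $r_\rho \leq t_p$ with the flow-time bound to conclude that all requests in $Q_p$ are visited within $F$ time units after $t_p$, and then invoking the tree structure to convert a length-$F$ walk starting at $o$ into the MST lower bound. Everything else reduces to routine per-edge counting.
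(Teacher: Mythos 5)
Your proof is correct, and it uses the same essential observation as the paper's: every request served by a single trip $p$ starting at time $t_p$ must be visited within $[t_p, t_p + F]$ (because $r_\rho \leq t_p$ and the flow time is at most $F$), so the minimal subtree spanning $o$ and the served locations has cost at most $F$. Where you diverge is in how you turn this per-trip bound into a count for edge $e$. The paper treats each trip $P$ using $e = (u,v)$ as a subtree of total length at most $F$, then subtracts off the common path $o \to v$: each trip contributes at most $F - d(o,v)$ worth of descendant edges of $v$, while the target subtree $\mst_e(R')$ contains descendant edges totaling $\mst_e(R') - d(o,v)$, giving the count $\ceil{\frac{\mst_e(R') - d(o,v)}{F - d(o,v)}} \geq \ceil{\frac{\mst_e(R')}{F}}$. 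You instead apply subadditivity of $\mst$ directly, $\mst_e(R') = \mst\bigl(\bigcup_p Q_p\bigr) \leq \sum_p \mst(Q_p) \leq |\calP_e| \cdot F$, and then use integrality of $|\calP_e|$ to recover the ceiling. Your aggregation is slightly simpler and avoids the $d(o,v)$ bookkeeping (at the cost of a numerically looser intermediate bound, which is immaterial here since both give exactly $c(R',e)$). You are also more explicit than the paper in deriving the key fact $\mst(Q_p) \leq F$ from the flow-time guarantee via the length-$\leq F$ walk starting at $o$; the paper states the bound "$P$ contains edges with a total length at most $F$" without spelling out this derivation. Both proofs are sound.
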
	
\begin{proof}
	Focus on each edge $e \in E$, and assume $e = (u, v)$, where $v$ is the child vertex.   If $V_e \cap R' = \emptyset$ then $c(R', e) = 0$ and the statement holds trivially.  If $V_e \cap R' \neq \emptyset$ and $\mst_e(R') \leq F$, then $c(R', e) = 1$ and at least $1$ trip in $\calP$ uses $e$ and the claim also holds. So, from now on, we assume $\mst_e(R') > F$. 
	
	Focus on a trip $P \in \calP$ that uses $e$. For convenience, we treat $P$ as the sub-tree of edges used by $P$, without double-counting each edge. 
	The total length of descendant edges of $e$ in $P$ (excluding $e$ itself) is at most $F - d(o, v)$. This holds since $P$ contains edges with a total length of at most $F$, and it contains the edges from $o$ to $v$.   On the other hand, all the descendant edges of $v$ in $\mst_e(R')$ should be contained in $\calP$. The total length of these edges is $\mst_e(R') - d(o, v)$. Therefore, the number of trips that use $e$ is at least $\ceil{\frac{\mst_e(R') - d(o, v)}{F - d(o, v)}} \geq \ceil{\frac{\mst_e(R')}{F}} = c(R', e)$, where the inequality comes from that $\mst_e(R') > F \geq d(o, v)$.  The lemma holds as each trip that traverses $e$ does this twice. 
\end{proof}

\begin{proof}[Proof of \eqref{inequ:tree-4}]The argument is similar to that in the last paragraph inside the proof of Lemma~\ref{lem:tree_travelingtimeintervals}, except now we use $\cost(S)$, instead of $2\mst(S)$, to lower bound the length of trips for a set $S \in \calS$ of requests, and there are  $k \geq 2$ vehicles. \end{proof}

Then we turn to \eqref{inequ:tree-3}. We first show some simple inequalities about the $\cost$ function. 
\begin{lemma}
	For any $X_1, X_2, X_3, X_4 \subseteq V$, we have 
	\begin{align}
		\cost(X_1 \cup X_2 | X_3) &\leq \cost(X_1) + \cost(X_2 | X_3), \label{inequ:tree-5}\\
		\cost(X_2 \cup X_3) + \cost(X_1|X_2 \cup X_3) + \cost(X_4|X_2 \cup X_3) &\leq \cost(X_1 \cup X_2) + \cost(X_3 \cup X_4). \label{inequ:tree-6}
	\end{align}
\end{lemma}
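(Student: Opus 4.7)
The plan is to reduce both inequalities to their per-edge analogues and then do a short case analysis on whether $V_e$ meets the ``conditioning'' set. Since $\ell(e)\geq 0$ and $\cost(\cdot)=2\sum_{e\in E} c(\cdot,e)\,\ell(e)$, $\cost(\cdot|\cdot)=2\sum_{e\in E} c(\cdot,e|\cdot)\,\ell(e)$, it suffices to prove, for each $e\in E$,
\[
	c(X_1\cup X_2,e\,|\,X_3)\leq c(X_1,e)+c(X_2,e\,|\,X_3),
\]
and
\[
	c(X_2\cup X_3,e)+c(X_1,e\,|\,X_2\cup X_3)+c(X_4,e\,|\,X_2\cup X_3)\leq c(X_1\cup X_2,e)+c(X_3\cup X_4,e).
\]
Throughout I will use the two routine floor/ceiling identities $\lfloor a+b\rfloor \le \lceil a\rceil + \lfloor b\rfloor$ and $\lceil a\rceil + \lfloor b\rfloor \le \lceil a+b\rceil$, which hold for all reals $a,b$.

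For \eqref{inequ:tree-5}, if $V_e\cap X_3=\emptyset$ the conditioning is vacuous and the edge-wise statement becomes $\lceil \mst_e(X_1\cup X_2)/F\rceil \le \lceil \mst_e(X_1)/F\rceil + \lceil \mst_e(X_2)/F\rceil$, which follows from \eqref{inequ:mst-simple-1} after dividing by $F$ and taking ceilings. If $V_e\cap X_3\ne\emptyset$, I apply \eqref{inequ:mst-simple-1} to the two sets $X_1$ and $X_2\cup X_3$ to obtain
\[
	\mst_e(X_1\cup X_2\cup X_3)-\mst_e(X_3)\ \le\ \mst_e(X_1)+\bigl(\mst_e(X_2\cup X_3)-\mst_e(X_3)\bigr).
\]
Dividing by $F$, taking floors, and using $\lfloor a+b\rfloor \le \lceil a\rceil + \lfloor b\rfloor$ to break the right-hand side into a ceiling for $\mst_e(X_1)/F$ and a floor for the rest gives exactly $c(X_1\cup X_2,e|X_3)\le c(X_1,e)+c(X_2,e|X_3)$.

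For \eqref{inequ:tree-6}, I split on whether $V_e\cap(X_2\cup X_3)=\emptyset$. In the empty case, $c(X_2\cup X_3,e)=0$, $c(X_i,e|X_2\cup X_3)=c(X_i,e)$ for $i\in\{1,4\}$, and $\mst_e(X_1\cup X_2)=\mst_e(X_1)$, $\mst_e(X_3\cup X_4)=\mst_e(X_4)$, so the edge-wise claim becomes an equality. Otherwise, I combine the two conditioned floors via $\lfloor a\rfloor+\lfloor b\rfloor\le \lfloor a+b\rfloor$ and then apply \eqref{inequ:mst-simple-3} inside the floor to upper-bound the numerator by $\mst_e(X_1\cup X_2)+\mst_e(X_3\cup X_4)-\mst_e(X_2\cup X_3)$. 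Adding the remaining $\lceil \mst_e(X_2\cup X_3)/F\rceil$ and using $\lceil a\rceil+\lfloor b\rfloor \le \lceil a+b\rceil \le \lceil a_1\rceil+\lceil a_2\rceil$ with $a_1+a_2=a+b$ collapses the $\mst_e(X_2\cup X_3)$ terms and yields $\lceil \mst_e(X_1\cup X_2)/F\rceil+\lceil \mst_e(X_3\cup X_4)/F\rceil$ on the right, as required.

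The proof is essentially mechanical once the reduction to per-edge inequalities is in place, because the needed $\mst_e$ inequalities are already available from the preceding lemma. The main thing to get right is the floor/ceiling bookkeeping in the non-trivial case, where the definitions use a $\lfloor\cdot\rfloor$ on the conditioned quantities and a $\lceil\cdot\rceil$ on the unconditioned ones; the identities $\lfloor a+b\rfloor\le \lceil a\rceil+\lfloor b\rfloor$ and $\lceil a\rceil+\lfloor b\rfloor\le \lceil a+b\rceil$ have to be applied in exactly the direction dictated by these definitions, and I expect this bookkeeping, rather than any conceptual difficulty, to be the only place where care is needed.
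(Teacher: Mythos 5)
Your proof is correct and takes essentially the same approach as the paper's: reduce both inequalities to per-edge statements, split on whether $V_e$ meets the conditioning set, and in the nontrivial case close out with the floor/ceiling identities together with the $\mst_e$ subadditivity facts from the preceding lemma. The only divergence is in the nonempty case of \eqref{inequ:tree-6}, where you merge the two floor terms via $\floor{a}+\floor{b}\le\floor{a+b}$ and invoke \eqref{inequ:mst-simple-3} once, while the paper instead splits the ceiling term using \eqref{inequ:mst-simple-1}, pairs each resulting ceiling with one floor, and applies \eqref{inequ:mst-simple-2} twice — a minor rearrangement of the same chain of estimates.
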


\begin{proof}
	For notational convenience, we use $X_{s}$ for any string $s$ with alphabet $\{1,2,3,4\}$ to denote $\bigcup_{i \in s}X_i$. For example, $X_{123} = X_1 \cup X_2 \cup X_3$. 
		It suffices for us to prove that for every $e \in E$, we have 
		\begin{align}
			c(X_{12}, e | X_3) &\leq c(X_1, e) + c(X_2, e | X_3), \label{inequ:tree-7}\\
			c(X_{23}, e) + c(X_1, e|X_{23}) + c(X_4, e|X_{23})&\leq c(X_{12}, e) + c(X_{34}, e). \label{inequ:tree-8}
		\end{align} 
	Throughout the proof, we shall use the following simple inequalities for two reals $a, b$: $\floor{a + b} \leq \floor{a} + \ceil{b} \leq \ceil{a + b} \leq \ceil{a}+\ceil{b}$.

	First consider the \eqref{inequ:tree-7}. If $X_3 \cap V_e = \emptyset$, then it holds since its left side is $c(X_{12}, e) = \ceil{\frac{\mst_e(X_{12})}{F}} \leq \ceil{\frac{\mst_e(X_1) + \mst_e(X_2)}{F}} \leq \ceil{\frac{\mst_e(X_1)}{F}} + \ceil{\frac{\mst_e(X_2)}{F}} = c(X_1 ,e) + c(X_2, e) = c(X_1 ,e) + c(X_2, e|X_3)$, which is the right side. If $X_3 \cap V_e \neq \emptyset$, then the left side is $\floor{\frac{\mst_e(X_{123}) - \mst_e(X_3)}{F}}\leq \floor{\frac{\mst_e(X_1) + \mst_e(X_{23}) - \mst_e(X_3)}{F}} \leq \ceil{\frac{\mst_e(X_1)}{F}} + \floor{\frac{\mst_e(X_{23}) - \mst_e(X_3)}{F}} = c(X_1, e) + c(X_2, e | X_3)$, which is the right side of the inequality. 
		
	Then we move on to prove \eqref{inequ:tree-8}. Suppose $V_e \cap X_{23}= \emptyset$.  Then we have $c(X_{23}, e) = 0$ and $c(X_1, e|X_{23}) = c(X_1, e) = c(X_{12},e)$ and $c(X_4, e|X_{23})= c(X_4, e)=c(X_{34},e)$.   \eqref{inequ:tree-8} holds with equality. So we assume $V_e \cap X_{23}\neq \emptyset$. Then,  
	\begin{align*}
		&\quad c(X_{23}, e) + c(X_1, e|X_{23}) + c(X_4, e|X_{23}) \\
		&= \ceil{\frac{\mst_e(X_{23})}{F}} + \floor{\frac{\mst_e(X_{123}) - \mst_e(X_{23})}{F}} + \floor{\frac{\mst_e(X_{234}) - \mst_e(X_{23})}{F}}\\
		&\leq \ceil{\frac{\mst_e(X_2)}{F}} + \ceil{\frac{\mst_e(X_3)}{F}}+ \floor{\frac{\mst_e(X_{123}) - \mst_e(X_{23})}{F}} + \floor{\frac{\mst_e(X_{234}) - \mst_e(X_{23})}{F}}\\
		&\leq \ceil{\frac{\mst_e(X_2)+\mst_e(X_{123}) - \mst_e(X_{23})}{F}} + \ceil{\frac{\mst_e(X_3)+\mst_e(X_{234}) - \mst_e(X_{23})}{F}}\\
		&\leq \ceil{\frac{\mst_e(X_{12})}{F}} + \ceil{\frac{\mst_e(X_{34})}{F}} = c(X_{12}, e) + c(X_{34}, e). && 
	\end{align*}	
	The two equalities are by definitions. The first and the last inequalities follow \eqref{inequ:mst-simple-1} and \eqref{inequ:mst-simple-2} respectively.
\end{proof}

Let $S_i$'s be defined in the same way as in the single-vehicle case: all the requests in a trip containing only requests from $R_i$, or requests from $R_i$ and $R_{i+1}$, are put into $S_i$. 
\begin{coro}
    \label{coro:break-cost}
    	For any $i$, we have $\cost(R_i) + \cost(R_{i-1} \cap S_{i-1}|R_i) + \cost(S_i\cap R_{i+1}|R_i) \leq \cost(S_{i-1}) + \cost(S_i)$.
\end{coro}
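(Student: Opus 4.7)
The plan is to apply inequality \eqref{inequ:tree-6} directly with a carefully chosen partition of the requests appearing on the two sides. Recall from the definitions that $S_{i-1} \subseteq R_{i-1} \uplus R_i$ and $S_i \subseteq R_i \uplus R_{i+1}$, so the pair $(S_{i-1}\cap R_i,\, R_i\cap S_i)$ partitions $R_i$. This observation is the key to aligning the corollary with \eqref{inequ:tree-6}.

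Concretely, I would instantiate \eqref{inequ:tree-6} with
\[
X_1 := R_{i-1} \cap S_{i-1}, \qquad X_2 := S_{i-1} \cap R_i, \qquad X_3 := R_i \cap S_i, \qquad X_4 := S_i \cap R_{i+1}.
\]
Then $X_2 \cup X_3 = R_i$, so the first term on the LHS of \eqref{inequ:tree-6} becomes $\cost(R_i)$ and the conditioning sets in the other two terms become $R_i$, matching $\cost(R_{i-1}\cap S_{i-1}\mid R_i)$ and $\cost(S_i \cap R_{i+1}\mid R_i)$ respectively. On the RHS, since $S_{i-1} \subseteq R_{i-1} \uplus R_i$ we obtain $X_1 \cup X_2 = (R_{i-1}\cap S_{i-1}) \cup (S_{i-1}\cap R_i) = S_{i-1}$, and similarly $X_3\cup X_4 = S_i$. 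Substituting into \eqref{inequ:tree-6} yields exactly the claimed inequality.

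There is no real obstacle beyond verifying these set identities, which follow immediately from the containments $S_{i-1}\subseteq R_{i-1}\uplus R_i$ and $S_i \subseteq R_i \uplus R_{i+1}$ established when the sets $S_i$ were defined. So the proof is essentially a one-line invocation of \eqref{inequ:tree-6} together with a short verification of these identities.
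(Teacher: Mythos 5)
Your proof is correct and matches the paper's own proof exactly: both invoke inequality \eqref{inequ:tree-6} with the identical substitution $X_1 = R_{i-1} \cap S_{i-1}$, $X_2 = S_{i-1} \cap R_i$, $X_3 = R_i \cap S_i$, $X_4 = S_i \cap R_{i+1}$, and rely on the same set identities $X_2 \cup X_3 = R_i$, $X_1 \cup X_2 = S_{i-1}$, $X_3 \cup X_4 = S_i$.
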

\begin{proof}
	The corollary follows from \eqref{inequ:tree-6} by setting $X_1 = R_{i-1} \cap S_{i-1}, X_2 = S_{i-1} \cap R_i, X_3 = R_i \cap S_i$ and $X_4 = S_i \cap R_{i+1}$. 
\end{proof}

Recall that in our algorithm, for every even $i$, we break $R_i$ into $R^\sfleft_{i}$ and $R^\sfright_{i}$ so as to minimize $\cost(R^\sfleft_{i}|R_{i-1}) + \cost(R^\sfright_{i}|R_{i+1})$.  
\begin{proof}[Proof of  \eqref{inequ:tree-3}]
	\begin{flalign}
		&& &\quad \sum_{i \in [a, b]: i\text{ odd}}\Big(\cost(R_i) + \cost(R_{i-1}^\sfright|R_i) + \cost(R_{i+1}^\sfleft|R_i)\Big) \nonumber\\
		&& &=\sum_{i \in [a, b]: i\text{ odd}}\cost(R_i) + \sum_{i \in (a, b): i\text{ even}}\Big(\cost(R_i^\sfleft | R_{i-1}) + \cost(R_i^\sfright | R_{i+1}) \Big)&&\nonumber\\
		&& &\qquad + \quad\cost(R_{a-1}^\sfright|R_a) + \cost(R_{b+1}^\sfleft|R_b) &&\text{by reorganizing terms}\nonumber\\
		&& &\leq\sum_{i \in [a, b]: i\text{ odd}}\cost(R_i) + \sum_{i \in (a, b): i\text{ even}}\Big(\cost(S_{i-1} \cap R_i | R_{i-1}) + \cost(R_i \cap S_i | R_{i+1}) \Big) &&\nonumber\\
		&& &\qquad + \quad\cost(S_{a-2} \cap R_{a-1}) + \cost(R_{a-1} \cap S_{a-1}|R_a) &&\nonumber\\
		&& &\qquad + \quad \cost(S_b \cap R_{b+1}|R_b) + \cost(R_{b+1} \cap S_{b+1})  \label{inequ:tree-9} &&\\
		&& &= \sum_{i \in [a, b]: i\text{ odd}}\big(\cost(R_i) + \cost(R_{i-1} \cap S_{i-1}|R_i) + \cost(S_{i} \cap R_{i+1}| R_i)\big) && \nonumber\\
		&& &\qquad  + \quad \cost(S_{a-2} \cap R_{a-1}) + \cost(R_{b+1}\cap S_{b+1}) && \text{by reorganizing terms}\nonumber\\
		&& &\leq \sum_{i \in [a, b]: i \text{ odd}} \Big(\cost(S_{i-1}) + \cost(S_i)\Big) + \cost(S_{a-2} \cap R_{a-1}) + \cost(R_{b+1}\cap S_{b+1})  &&\text{by Corollary~\ref{coro:break-cost}}\nonumber\\
		&& &= \cost(S_{a-2} \cap R_{a-1}) + \sum_{i = a-1}^{b}\cost(S_i) + \cost(R_{b+1} \cap S_{b+1}). &&\nonumber
	\end{flalign}
	We need to elaborate more on \eqref{inequ:tree-9}.  Notice that $S_{i-1} \cap R_i$ and $R_i \cap S_i$ form a partition of $R_i$. By the way we choose $R^\sfleft_i$ and $R^\sfright_i$, we have $\cost(R^\sfleft_i|R_{i-1}) + \cost(R^\sfright_i|R_{i+1}) \leq \cost(S_{i-1} \cap R_i|R_{i-1}) + \cost(R_i \cap S_i | R_{i+1})$.  Then $\cost(R^\sfright_{a-1}|R_a) \leq \cost(R_{a-1}|R_a) \leq \cost(S_{a-2} \cap R_{a-1}) + \cost(R_{a-1} \cap S_{a-1}|R_a)$, follows from \eqref{inequ:tree-5} by letting $X_1 = S_{a-2} \cap R_{a-1}, X_2 = R_{a-1} \cap S_{a-1}$ and $X_3 = R_a$. Similarly, we have $\cost(R^\sfleft_{b+1}|R_b) \leq \cost(S_b \cap R_{b+1}|R_b) + \cost(R_{b+1} \cap S_{b+1})$. 	
\end{proof}

Now we proceed to the proof of \eqref{inequ:tree-2}. Focus on an odd $i \in [a, b]$, and any edge $e \in E$.  We need to prove 
\begin{align}
	c(R_i, e) + c(R_{i-1}^\sfright, e|R_i) + c(R_{i+1}^\sfleft, e|R_i) \geq c_{3F}(R'_i, e)=\ceil{\frac{\mst_e(R'_i)}{3F}}. \label{inequ:tree-10}
\end{align}
\eqref{inequ:tree-2} is proved by multiplying \eqref{inequ:tree-10} by a factor of $2\ell(e)$, and summing up the resulting inequality over all odd integers $i \in [a, b]$ and $e \in E$.

 	Suppose $V_e \cap R_i= \emptyset$. Then the left side of  \eqref{inequ:tree-10} is $\ceil{\frac{\mst_e(R^\sfright_{i-1})}{F}} + \ceil{\frac{\mst_e(R^\sfleft_{i+1})}{F}} \geq \ceil{\frac{\mst_e(R^\sfright_{i-1} \cup R_i \cup R^\sfleft_{i+1})}{F}} = \ceil{\frac{\mst_e(R'_i)}{F}} \geq \ceil{\frac{\mst_e(R'_i)}{3F}}$. Now assume $V_e \cap R_i \neq \emptyset$.  Then the left side is $\ceil{\frac{\mst_e(R_i)}{F}} + \floor{\frac{\mst_e(R^\sfright_{i-1}\cup R_i)-\mst_e(R_i)}{F}} + \floor{\frac{\mst_e(R^\sfleft_{i+1}\cup R_i) - \mst_e(R_i)}{F}} \geq \floor{\frac{\mst_e(R^\sfright_{i-1} \cup R_i) + \mst_e(R^\sfleft_{i+1} \cup R_i)-\mst_e(R_i) }{F}}-1 \geq \floor{\frac{\mst_e(R^\sfright_{i-1} \cup R_i \cup R^\sfleft_{i+1})}{F}}-1 = \floor{\frac{\mst_e(R'_i)}{F}}-1$. On the other hand, the left side is at least $1$.  So, the left side is at least $\max\Big\{1, \floor{\frac{\mst_e(R'_i)}{F}}-1\Big\} \geq \ceil{\frac{\mst_e(R'_i)}{3F}}$, where we used that $\max\{1, \floor{a}-1\} \geq \ceil{\frac a3}$ for every $a \geq 0$. 
 	
 	So we have finished the proof of Theorem~\ref{thm:tree-k-backlog}.
 	
\subsection{Breaking Bundles into Groups}
	In this section, we break each bundle $R'_i$ into many groups $\calQ_i$ as in the following lemma. 
  \begin{lemma}
	  	\label{lemma:tree-general-k-backlog}
	 	For every odd integer $i \geq 1$, we can efficiently find a partition $\calQ_i$ of $R'_i$ such that $\mst(Q) \leq 8F$ for every $Q \in \calQ_i$, and $2\sum_{Q \in \calQ_i}\mst(Q) \leq \cost_{3F}(R'_i)$.
  \end{lemma}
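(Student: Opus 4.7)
The plan is to construct $\calQ_i$ via a DFS-driven packing of the rooted tree $T' := \mst(R'_i)$ at $o$, then verify the cost inequality by an edge-by-edge charging argument against the terms of $\cost_{3F}(R'_i)$.

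First I would reformulate the target. Since $\mst(Q) = \sum_{e \in T': V_e \cap Q \neq \emptyset} \ell(e)$, one has
\begin{align*}
2\sum_{Q \in \calQ_i} \mst(Q) = 2\sum_{e \in T'} \big|\{Q \in \calQ_i : V_e \cap Q \neq \emptyset\}\big| \cdot \ell(e),
\end{align*}
so it is enough to output a partition of $R'_i$ such that $\mst(Q) \leq 8F$ for every $Q \in \calQ_i$ and the per-edge count bound $|\{Q \in \calQ_i : V_e \cap Q \neq \emptyset\}| \leq \ceil{\mst_e(R'_i)/(3F)}$ holds for every edge $e \in T'$.

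Next I would construct the partition. Root $T'$ at $o$, compute $\mst_e(R'_i)$ for every edge by a bottom-up pass, and declare each edge \emph{heavy} if $\mst_e(R'_i) > 3F$ and \emph{light} otherwise. By monotonicity of $\mst_e$ along root-to-leaf paths, the heavy edges form a connected sub-tree $T^* \ni o$, off of which hang maximal light subtrees $L$; each such $L$ contains a set of requests whose $\mst$ is at most $3F$. On $T^*$ I would perform a post-order DFS. At each heavy vertex $v$, the children are processed in DFS order, each contributing either a single light-subtree parcel (of $\mst$ at most $3F$) or an open partial group returned from the recursion; I greedily merge these incoming pieces into an accumulating group, close and emit the group whenever adding the next piece would push its $\mst$ past $8F$, and pass the surviving open partial group up to the parent of $v$. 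The open group remaining at $o$ at the very end is also emitted.

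The analysis then proceeds per edge. A light edge $e$ with $V_e \cap R'_i \neq \emptyset$ lies inside a single light subtree, so exactly one group touches $V_e$, matching $\ceil{\mst_e(R'_i)/(3F)} = 1$. For a heavy edge $e = (u, v)$, while the DFS processes $V_v$, each closure of the accumulating group has consumed at least $3F$ of ``fresh'' $\mst$ from within $V_v$, so the number of closures made inside $V_v$ is at most $\lfloor \mst_v(R'_i)/(3F) \rfloor$; together with the single surviving open group at exit, this yields at most $\ceil{\mst_e(R'_i)/(3F)}$ groups touching $V_e$, as required.

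The hard part will be calibrating the closure threshold so that both requirements hold simultaneously. On one hand, each emitted group must stay within $\mst(Q) \leq 8F$ even after absorbing one last light-subtree parcel of $\mst$ up to $3F$, forcing the closure threshold to be at most about $5F$. On the other hand, each closure taken while inside $V_v$ must be chargeable to at least $3F$ of fresh $\mst$ strictly below $v$, without double-counting the root-path $d(o,v)$ already sitting in the accumulating group or the sub-$3F$ leftover handed up from a previously processed sibling. The $5F$ slack between the $3F$ lightness threshold and the $8F$ group budget is precisely what makes this balancing feasible, and formalizing it by induction on the DFS stack at each heavy vertex is the delicate step.
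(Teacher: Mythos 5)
Your high-level reformulation — reduce to the per-edge bound $\big|\{Q \in \calQ_i : V_e \cap Q \neq \emptyset\}\big| \le \ceil{\mst_e(R'_i)/(3F)}$ and build a partition that respects it — is exactly right, and a heavy/light decomposition at threshold $3F$ is a sensible starting point. But the construction you propose is genuinely different from the paper's, and the place you flag as ``the delicate step'' is a real gap, not just a detail to be checked.

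Concretely, you never commit to a closure rule. In the construction paragraph the rule is ``close whenever adding the next piece would push $\mst$ past $8F$,'' but the calibration paragraph speaks of a closure threshold of about $5F$; these are different algorithms with different invariants. More seriously, the pieces you merge are not all of size at most $3F$: an open partial group handed up from a heavy child is bounded only by whatever closure threshold $\tau$ you adopt, and at a heavy vertex with two or more heavy children you can be asked to merge two open partial groups into one accumulating group, yielding $\mst$ up to $2\tau - d(o,v)$. Making that stay below $8F$ forces $\tau \le 4F$; but then an emitted group can have $\mst$ just above $4F$ with as little as $\approx 3F$ of ``fresh'' content below the vertex where it closes, and it is not clear those fresh contributions from different closures inside $V_v$ are edge-disjoint (a group emitted at an ancestor $w_1$ can reach down past a descendant $w_2$ at which another group was emitted, and the two spanning subtrees can share edges below $w_2$). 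So both the $8F$ bound and the per-edge charging depend on an invariant you haven't stated, and naive choices of $\tau$ break one or the other. The paper avoids all of this with two moves you don't make: it first replaces each high-degree internal vertex by a zero-length binary gadget, so that the chosen vertex $v$ has at most two children and the bound $\mst(Q) \le \mst'_u(R') + \mst'_{u'}(R') + d(o,u) + d(o,u') \le 3F + 3F + F + F = 8F$ is immediate; and it forms each group by taking \emph{all} remaining requests in $V_v$ for the lowest $v$ with $\mst'_v(R') \ge 3F$ and deleting them before re-evaluating, so each group removes at least $3F$ from $\mst'_e$ with no double counting and no open partial groups to track. Your DFS-accumulator idea may well be salvageable, but to make it a proof you would need to pin down the closure rule, binarize (or otherwise control high-degree merges), and argue edge-disjointness of the fresh content charged to each closure.
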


  \begin{proof}
  		Within this proof, we need to change $T$ to a binary tree by replacing each internal vertex $v$ with at least three children with a binary tree. For every such vertex $v$ with $d_v \geq 3$ children, we replace the star containing $v$ and its children by a gadget, which is a complete binary tree with $d_v$ leaves. We then identify the root of the gadget with $v$, and the leaves with the $d_v$ children of $v$. In the gadget, the length of an edge incident to a child $u$ of $v$ is set to $\ell(u, v)$, and the length of an edge not incident to a child is set to $0$. It is easy to see that this transformation does not change the instance, except now we have edges of length $0$. After this step, every internal vertex of $T$ has degree exactly $2$.  
  		
 		For any vertex $v$ in $T$ and a set $R'$ of requests, we define $\mst'_v(R')$ to be cost of the minimum spanning tree containing $v$ and $V_v \cap R'$, where $V_v$ is the set of descendant vertices of $v$ in $T$ (including $v$ itself). Notice an important difference between the definition of $\mst'_v(R')$ and $\mst_v(R')$ for an edge $e$: for $\mst'_v(R')$ the tree does not need to contain the depot $o$ and thus the quantity does not count the total length $d(o, v)$ of edges from $o$ to $v$.   Similarly, for an edge $e = (u, v)$ with $v$ being the child end-vertex, we use $\mst'_e(R')$ to denote $\mst'_v(R')$. Also, sometimes we use $\mst'_v(R')$ to denote the tree achieving the cost $\mst'_v(R')$.
 		
 		The following is the pseudo-code for constructing $\calQ_i$: 
 		\begin{algorithm}[H]
 		\begin{algorithmic}[1]
 			\State $R' \gets R'_i$, $\calQ_i \gets \emptyset$. 
 			\While{$R' \neq \emptyset$}
 				\State choose a lowest vertex $v$ such that $\mst'_v(R') \geq 3F$; if no vertices $v$ satisfy the condition, let $v = o$
 				\State $Q \gets \{\rho \in R': v_\rho \in V_v\}, \calQ\gets \calQ \cup \{Q\}, R' \gets R' \setminus Q$. 
 			\EndWhile
 		\end{algorithmic}
 		\end{algorithm}
 		
		It is easy to see that $\calQ$ form a partition of $R'_i$.  Focus on any iteration of the loop, and let $v$ be the vertex chosen in the iteration. We argue that we have $\mst'_v(R') + d(o, v) \leq 8F$, implying that the set $Q$ added in the iteration has $\mst(Q) = \mst'_v(R') + d(o, v) \leq 8F$.  To see this, assume the two children of $v$ in $T$ are $u$ and $u'$. (If $v$ is a leaf the statement is trivial.)  By our choice of $v$ we have $\mst'_u(R') < 3F$ and $\mst'_{u'}(R') < 3F$. Then we have $\mst_v(R') \leq \mst'_u(R') + \mst'_{u'}(R') + d(o, v) + d(v, u) + d(v, u') \leq 3F+3F + d(o, u) + d(o, u') \leq 6F + F + F = 8F$. The first inequality may not hold with equality since $(v, u)$ or $(v, u')$ may not be in $\mst'_v(R')$. 
  	
  	It remains to prove that for every edge $e  \in E$, at most $\ceil{\frac{\mst'_e(R'_i)}{3F}} \leq \ceil{\frac{\mst_e(R'_i)}{3F}} = c_{3F}(R'_i, e)$ groups $Q \in \calQ_i$ use the edge $e$.  Consider any edge $e$ and assume the group $Q$ is constructed in an iteration in which the vertex we choose is $v$. If $v \in V_e$, then $\mst'_e(R')$ is reduced by at least $3F$ in the iteration since all the edges in $\mst'_v(R')$ are removed from $\mst'_e(R')$. Otherwise, $v$ must be an ancestor of the parent end-vertex of $e$. In this case, $\mst'_e(R')$ becomes $\emptyset$ and thus this is the last time $e$ is used.  Moreover once $\mst'_e(R')$ becomes $0$, there are no requests in $V_e$.  Therefore, $e$ is used  at most $\ceil{\frac{\mst'_e(R'_i)}{3F}}$ times. 
   \end{proof}
  
\subsection{Scheduling Groups Greedily}

	Once we partitioned each bundle $R'_i$ into many groups $\calQ_i$, we can then treat the groups as jobs and the $k$ vehicles as $k$ machines. Each group $Q \in \calQ_i$ can be viewed as a job of size $2\mst(Q)$ that is released at time $(i+2)F$.   We need to analyze the maximum flow time achieved using the FIFO algorithm. 
	
	Now we define the goal more formally in the language of the scheduling problem. A job $j$ arrives time $r_j$, and upon its arrival, we know its processing time $p_j$. We need to schedule the jobs on the $k$ machines non-preemptively so as to minimize the maximum flow time over all jobs, where the flow time of a job is its completion time minus its release time. We need to design an online algorithm: Once we started processing a job on a machine, we need to complete the job on the machine.  Needless to say, the decisions made at or before time $t$ can only depends on the jobs arrived at or before time $t$.  
    
    We consider the simple FIFO (first-in-first-out) algorithm: whenever there is an idle machine and an available job (an arrived job that is not being processed), we process the available job with the earliest releasing time on the machine. The following simple lemma is implicit in the analysis of FIFO on multiple machine scheduling to minimize the maximum flow time:
	\begin{lemma}
		\label{lemma:scheduling-k-machines}
		Let $P$ be the maximum processing times of all jobs and $D \geq 0$. 
		Assume for any two time points $a \leq b$, the total size of jobs released in time $[a, b]$ is at most $k(b - a) + D$, then the maximum flow time achieved by FIFO is at most $\frac{D}{k} + \frac{2(k-1)P}{k}$. 
	\end{lemma}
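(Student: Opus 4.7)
The plan is to bound the flow time of an arbitrary job $j$ by isolating the latest ``saturation interval'' ending at the start of $j$, and then using the release-rate hypothesis to control the work done in that interval. Fix $j$ with start time $s_j$ and completion time $C_j = s_j + p_j$. I would define $a$ to be the latest time in $[0, s_j]$ at which at least one machine is idle; such a time exists since all $k$ machines are idle at $t = 0$. Two structural consequences of this definition drive the whole argument: (i) at time $a$ the queue of waiting jobs is empty, since otherwise FIFO would have assigned a queued job to the idle machine; and (ii) by maximality of $a$, during $(a, s_j]$ all $k$ machines are continuously busy.

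Let $N$ be the set of jobs whose start time lies in $(a, s_j]$; note $j \in N$. The key claim I would prove is that every $j' \in N$ has release time in $(a, r_j]$. The lower bound $r_{j'} > a$ follows from (i): any job released at or before $a$ would have either completed or been in the queue at $a$, but the queue was empty. The upper bound $r_{j'} \leq r_j$ follows from FIFO: if some $j' \in N \setminus \{j\}$ had $r_{j'} > r_j$, then at the moment $j'$ was started the already-available job $j$ with smaller release time would have been picked instead. The release-rate hypothesis, applied to the interval $(a, r_j]$, then yields
\[
\sum_{j' \in N} p_{j'} \;\leq\; k(r_j - a) + D.
\]

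Next I would account for the total machine busy-time $k(s_j - a)$ during the saturated interval $(a, s_j]$. This work decomposes into: (a) work on ``old'' jobs running at time $a$, of which there are at most $k - 1$ (since at least one machine is idle at $a$) and each has total size at most $P$, contributing at most $(k-1) P$; and (b) work on jobs in $N \setminus \{j\}$ (since $j$ itself has not yet started), which contributes at most $\sum_{j' \in N} p_{j'} - p_j \leq k(r_j - a) + D - p_j$. Combining,
\[
k(s_j - a) \;\leq\; (k-1) P + k(r_j - a) + D - p_j,
\]
which rearranges to $s_j - r_j \leq \frac{(k-1) P + D - p_j}{k}$, and hence
\[
C_j - r_j \;=\; (s_j - r_j) + p_j \;\leq\; \frac{(k-1) P + D + (k-1) p_j}{k} \;\leq\; \frac{D}{k} + \frac{2(k-1) P}{k},
\]
using $p_j \leq P$ in the last step, which is exactly the claimed bound.

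The main delicate point is the structural description of $N$, specifically the argument that its jobs' release times sit in $(a, r_j]$; once this is established, the remainder is straightforward bookkeeping. Degenerate cases such as $a = s_j$ (where $j$ is picked up by an already-idle machine and the flow time is just $p_j \leq P$) or $a = 0$ (no old jobs, giving an even better bound) only strengthen the inequality and require no separate treatment.
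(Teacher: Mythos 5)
Your proposal is correct and follows essentially the same strategy as the paper's proof: locate the latest pre-$s_j$ moment $a$ (the paper calls it $t$) at which some machine is idle, observe $a \leq r_j$, note that all machines are saturated on $(a, s_j]$, decompose the work done there into at most $k-1$ ``old'' jobs each of size at most $P$ plus jobs released in $[a, r_j]$, and invoke the backlog hypothesis to get $k(s_j - a) + p_j \leq (k-1)P + k(r_j - a) + D$, from which the bound follows. The only cosmetic differences are that you make the set $N$ and its release-time window explicit, while the paper states the same fact more compactly, and that the paper defines $t$ via an idle sub-interval $(t-\epsilon, t)$ rather than an idle time point — a minor technicality that sidesteps boundary issues about when exactly a machine counts as idle, but does not change the substance.
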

	Notice that when $k = 1$, the upper bound is simply $D$, and it is a folklore that the FIFO algorithm is optimum. 
	
	\begin{proof}
		Focus on a job $j$ and let $s_j$ be its starting time in the schedule produced by the greedy algorithm. 	Let $t \leq s_j$ be the latest time such that some machine is idle in $(t - \epsilon, t)$ for some positive $\epsilon$.  (It is possible that $t = 0$.) Notice that we have $t \leq r_j$ since otherwise $j$ could be started on that machine at time $t - \epsilon$.

		All the $k$ machines are busy in time $(t, s_j)$.   All but at most $k-1$ jobs that are processed fully or partially in $(t, s_j)$ are released in $[t, r_j]$. The total size of the jobs (including the at most $k-1$ jobs) is at most $(k-1)P  + k(r_j - t) +D$, by the conditions of the lemma. Therefore we have $k(s_j - t) + p_j \leq (k-1)P  + k(r_j - t) +D$, which is $s_j + \frac{p_j}{k} \leq \frac{(k-1)P}{k} + r_j + \frac{D}{k}$. This implies the flow time of $j$ is $s_j + p_j - r_j \leq \frac{(k-1)P}{k} + \frac{D}{k} + \frac{(k-1)}{k}p_j \leq \frac{D}{k} + \frac{2(k-1)P}{k}$.
	\end{proof}

	We then use Lemma~\ref{lemma:scheduling-k-machines} to bound the maximum flow time for the food delivery problem. By Lemma~\ref{lemma:tree-general-k-backlog}, all jobs (which correspond to groups) have size at most $16F$. By Lemma~\ref{lemma:tree-general-k-backlog} and Theorem~\ref{thm:tree-k-backlog}, the total size of jobs released at any interval $[t, t']$ is at most $(t' - t)k + 4kF$. Therefore, using the greedy algorithm and Lemma~\ref{lemma:scheduling-k-machines}, every job will have flow time at most $\frac{4kF}{k} + 2 \times 8F = 20F$. Notice that requests in $R'_i$ has arrival time at least $(i-2)F$ and is  released at time $(i+2)F$. So, every request has flow time at most $20F + 4F = 24F$. This finishes the proof of Theorem~\ref{thm:ALG-on-trees} for general $k$, except for the proof of the running time of Algorithm~\ref{alg:partition}, which we argue next.
	
	Assuming all lengths are integers. Then it is easy to design a pseudo-polynomial time algorithm to find a partition $(R^\sfleft_i, R^\sfright_i)$ of $R_i$ to minimize $\cost(R^\sfleft_i|R_{i-1}) + \cost(R^\sfright_i|R_{i+1})$ using dynamic programming. Consider any two edges $e'$ and $e$ such that $e'$ is an ancestor of $e$. Then given $\mst_e(R^\sfleft_i \cup R_{i-1}) - \mst_e(R_{i-1})$, $c(R^\sfleft_i, e'|R_{i-1})$ does not depend on the set $\big\{\rho \in R^\sfleft_i: v_\rho \in V_e\big\}$. Similarly, given $\mst_e(R^\sfright_i \cup R_{i+1}) - \mst_e(R_{i+1})$, $c(R^\sfright_i, e'|R_{i+1})$ does not depend on the set $\big\{\rho \in R^\sfright_i: v_\rho \in V_e\big\}$. Therefore, we can design a dynamic programming where for each $e$ and two integers $M_\sfleft$ and $M_\sfright$, we have a cell indicating if there is a partition $(R^\sfleft_i, R^\sfright_i)$ of $R_i$ such that $\mst_e(R^\sfleft_i \cup R_{i-1}) - \mst_e(R_{i-1}) = M_\sfleft$ and $\mst_e(R^\sfright_i \cup R_{i+1}) - \mst_e(R_{i+1}) = M_\sfright$. 
	To compute the value of a cell $(e, M_\sfleft, M_\sfright)$, we look at all cell $(e', M'_\sfleft, M'_\sfright)$s with $e'$ being a \emph{child} of $e$, and check if there's a way to combine the $M'_\sfleft$s (resp. $M'_\sfright$s) to get $M_\sfleft$ (resp. $M_\sfright$). (To make this step fast we may assume the input tree is a binary tree: this can be done by adding zero-length dummy edges and vertices)
	To make the algorithm truly polynomial, we can round all edge lengths to integer multiples of $\epsilon F/(n|R|)$ for any small constant $\epsilon > 0$. Since every one of the $n$ edges is used by at most $|R|$ requests, the total error incurred due to the rounding is at most $\epsilon F/(n|R|)\cdot n|R| = \epsilon F$, which can be ignored.

    \section{Online FDP on General Metrics with Speeding}
\label{sec:speeding}
We study online FDP on general metrics. As suggested by (\ref{thm:LB}b), it is impossible to get an $o(n)$-competitive ratio for the problem. To circumvent the pessimistic instances, we resort to the speed-augmentation model and propose an $O(1/\epsilon)$-competitive algorithm with $(1+\epsilon)$-speeding. It requires exponential computational power to find the best TSP/CVRP tour. However, if we only allow polynomial running time, then it suffices for us to use speeding factor $\alpha_\TSP + \epsilon$ or $\alpha_\CVRP + \epsilon$. The speeding factors are tight due to (\ref{thm:LB-speeding}b) and (\ref{thm:LB-speeding}c). 



A simple but crucial lemma we use is the following: 
\begin{lemma}
    \label{lem:bound-tsp}
    Let $a \leq b$ and let $R'$ be the set of requests that arrive in time $[a, b]$. Then the best CVRP tour for $R'$ with capacity $c$ has length at most $k(b - a + 2F)$.
\end{lemma}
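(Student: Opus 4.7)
The plan is to construct a CVRP tour for $R'$ by taking the offline optimum schedule \OPT\ (whose maximum flow time is at most $F$), extracting the trips that serve at least one request in $R'$, shortcutting each of them to its $R'$-portion, and concatenating the resulting sub-trips into one closed walk at $o$.

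More formally, for each vehicle $i \in [k]$, I would let $(T_{i,j})_j$ be the trips of \OPT\ on vehicle $i$ that serve at least one request in $R'$, with start times $t_{i,j}$ and served sets $R''_{i,j}$. For each such $T_{i,j}$ I would build a modified trip $T'_{i,j}$ starting at the same time $t_{i,j}$ and serving only $R' \cap R''_{i,j}$: shortcut past the deliveries in $R''_{i,j} \setminus R'$ and return directly to $o$ immediately after the last delivery of a request in $R' \cap R''_{i,j}$. By the triangle inequality, the length of $T'_{i,j}$ is at most the length of $T_{i,j}$; since $T'_{i,j}$ serves a subset of $R''_{i,j}$ the capacity $c$ is preserved; and collectively the $T'_{i,j}$'s cover all of $R'$.

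The main estimate bounds the time window occupied by the modified trips on a single vehicle. Pick any trip $T = T_{i,j}$ and any $\rho \in R' \cap R''_{i,j}$. First, the definition of a trip requires $r_\rho \le t_{i,j}$, which together with $r_\rho \ge a$ gives $t_{i,j} \ge a$. Second, the flow-time bound gives the service time $s_\rho \le r_\rho + F \le b + F$; since $s_\rho$ equals $t_{i,j}$ plus the walk length along the cycle of $T$ from $o$ to $v_\rho$, we also obtain $d(o, v_\rho) \le s_\rho - t_{i,j} \le F$. Hence the modified trip completes by at most $s_\rho + d(v_\rho, o) \le b + 2F$. Since each $T'_{i,j}$ fits inside the time slot of its original $T_{i,j}$, the modified trips on vehicle $i$ are non-overlapping, all start at time at least $a$, and all complete by time at most $b + 2F$; therefore their total length is at most $b - a + 2F$.

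Summing over the $k$ vehicles will give a total length at most $k(b - a + 2F)$, and concatenating all the modified trips at $o$ produces one valid CVRP tour for $R'$ (the capacity resets at $o$ between consecutive sub-trips). The step I expect to be the most delicate is the shortcutting: an \OPT\ trip may continue serving non-$R'$ requests long after its last $R'$-delivery, so without truncation one only gets a weaker endpoint bound of about $b + 3F$. It is the flow-time constraint that forces $d(o, v_\rho) \le F$, which is precisely what makes the direct return from the last $R'$-delivery to $o$ cost at most $F$ and closes the lemma at $b+2F$.
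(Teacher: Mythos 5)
Your proof is correct and follows essentially the same approach as the paper's: extract the \OPT trips that touch $R'$, truncate each at the last $R'$-delivery and return to $o$, then bound the occupied time window on each vehicle by $[a, b+2F]$. The only (cosmetic) difference is how you justify $d(o,v_\rho)\le F$ — you derive it from the trip's own travel time budget, whereas the paper argues any vertex farther than $F$ from $o$ is useless; both are fine, and your remark about the $b+3F$ bound without truncation correctly identifies why the truncation step matters.
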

\begin{proof}
	Focus on the trips in the optimum solution that serve at least one request in $R'$.  For each such trip, we consider the last request $\rho \in R'$ it serves, and change the trip so that it returns to $o$ directly using the shortest path after serving $\rho$. So a trip in the set can not start before $a$, since otherwise it can not serve a request in $R'$. Every request in $R'$ has flow time at most $F$ and thus serving time at most $b + F$. Since it takes at most $F$ units time to travel from a vertex back to $o$ (otherwise the vertex is useless), the completion time of each trip is at most $b + 2F$. Therefore, all these trips have starting and completion times in $[a, b+2F]$. Since there are $k$ vehicles, the total length of trips is at most $k(b-a + 2F)$. Moreover, each trip serves at most $c$ requests and so they together give a valid solution for the CVRP problem with capacity $c$. 
\end{proof}
	
%

If there was only one vehicle, we can immediately obtain a $(1+\epsilon)$-speeding $O(1/\epsilon)$-competitive algorithm: For $\gamma = 2/\epsilon$, we break the time horizon into intervals of length $\gamma F$, and construct a trip of length $(\gamma+2)F$ for requests that arrive in each interval. With $(\gamma + 2)/\gamma = 1+\epsilon$ speeding, the trip can be completed within time $\gamma F$. 

The situation becomes slightly more complicated when we have multiple vehicles. Now the best TSP/CVRP tour length is bounded by $k(\gamma + 2)F$, we need to break it into $k$ sub tours in a balanced way. 
Fortunately, we have the fact that the distance from $o$ to any $v$ in the metric is at most $F$ in the TSP tour. We keep tracing the TSP tour until the first time the tour's length is larger than $O(F/\epsilon)$ or the TSP tour ends, then we make it as one sub tour. The number of these sub tours is at most $k$. 
All vehicles can come back before $(i+1)\gamma F$ with a certain speeding factor. 
The algorithm is given in Algorithm~\ref{alg:multi-general}.\footnote{By the definition of $\alpha_\TSP$ (resp. $\alpha_\CVRP$), we do not know if an $\alpha_\TSP$- (resp. $\alpha_\CVRP$-) approximation for TSP (resp. CVRP) exists or not. So, we need to include an $\epsilon$ term in the definition of $\alpha$ in the the second and third cases.}

\begin{algorithm}[ht]
	\caption{Online FDP on General Metrics without or with Speeding}
	\label{alg:multi-general}
		\begin{algorithmic}[1]
			\State let $\gamma \gets \frac{2\alpha + 2}{\epsilon}$, where
			\begin{align*}
				\alpha \gets 
				\begin{cases}
					1 & \text{if running time of online algorithm can be exponential}\\
					\alpha_\TSP + \epsilon & \text{if running time must be polynomial and } c = \infty\\
					\alpha_\CVRP + \epsilon & \text{if running time must be polynomial and } c \neq \infty
				\end{cases}
			\end{align*}
			\For{$i = 1,2,3,4,\cdots$}
				\State wait until time $i \gamma F$, let $R' \gets \big\{\rho \in R: r_\rho \in [(i-1)\gamma F, i\gamma F)\big\}$
			    \State find an $\alpha$-approximate CVRP tour $(o,u_1,u_2,\cdots,u_z,o)$  for $R'$ with capacity $c$, such that the set $Q_{z'} \subseteq R'$ of requests is served when the vehicle visits $u_{z'}$, for every $z' \in [z]$. \label{step:speeding-find-tour}
          \State mark all vehicles as free, $j'\gets 1$
          \While{$j' \leq z$}
            \State $j \gets $ the largest integer in $[j', z]$ satisfying that $(u_{j'}, u_{j'+1}, \cdots, u_j)$ has length at most $\alpha(\gamma + 2)F$
              \State use one free vehicle to take the tour $(o,u_{j'},u_{j'+1},\cdots,u_{j},o)$ at time $i \gamma F$,  and serve $Q_{z'}$ when it visits $u_{z'}$ for every $z' \in [j', j]$ \label{step:speeding-use-free}
              \State  mark the vehicle as busy, $j' \gets j+1$
          \EndWhile
			\EndFor
		\end{algorithmic}
	\end{algorithm}

Notice that in the tour $(o, u_1, u_2, \cdots, u_z, o)$, we are guaranteed that the number of requests served between two adjacent visits of $o$ is at most $c$.  Therefore, this condition also holds for every tour $(o, u_{j'}, u_{j'+1}, \cdots, u_j, o)$ constructed in step~\ref{step:speeding-use-free}, meaning that every tour constructed in the step is valid. 

\begin{lemma}
  \label{lem:general-multi-ratio}
  In step~\ref{step:speeding-use-free}, we always have a free vehicle available. Moreover, with $(\alpha + \epsilon)$-speeding, the vehicle can complete the tour at or before time $(i+1)\gamma F$.
\end{lemma}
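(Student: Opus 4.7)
The plan is to combine the length bound from Lemma~\ref{lem:bound-tsp} with two simple observations: each delivery vertex is within distance $F$ from $o$ (so each trip has at most $2F$ ``overhead'' for leaving/returning to the depot), and the greedy segmentation produces segments whose lengths are essentially $\alpha(\gamma+2)F$ each.

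First I would bound the total length of the tour produced in step~\ref{step:speeding-find-tour}. The requests in $R'$ all arrive in $[(i-1)\gamma F, i\gamma F)$, so by Lemma~\ref{lem:bound-tsp} the optimum CVRP tour for $R'$ with capacity $c$ has length at most $k(\gamma F + 2F) = k(\gamma+2)F$. Since the algorithm computes an $\alpha$-approximate tour, the length of $(o,u_1,\ldots,u_z,o)$ is at most $\alpha k(\gamma+2)F$.

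Next I would bound the number of segments. Let the greedy procedure produce $m$ segments, of lengths $L_1,\ldots,L_m$, where $L_p$ denotes the length of $(u_{j'_p},u_{j'_p+1},\ldots,u_{j_p})$. Let $e_p = d(u_{j_p},u_{j_p+1})$ denote the edge of the tour connecting segment $p$ to segment $p+1$ for $p<m$; this edge is \emph{not} included in any segment. By the maximality condition in the greedy rule, adding $u_{j_p+1}$ to segment $p$ would violate the length cap, so $L_p + e_p > \alpha(\gamma+2)F$ for every $p<m$. Writing the total tour length as $d(o,u_1) + \sum_{p=1}^{m} L_p + \sum_{p=1}^{m-1} e_p + d(u_z,o)$, I can drop the non-negative terms $d(o,u_1), L_m, d(u_z,o)$ to get
\[
\alpha k(\gamma+2)F \;\ge\; \sum_{p=1}^{m-1}(L_p+e_p) \;>\; (m-1)\,\alpha(\gamma+2)F,
\]
hence $m-1 < k$, i.e.\ $m \le k$. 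This proves that at most $k$ trips are ever dispatched in iteration $i$, so a free vehicle is always available at step~\ref{step:speeding-use-free}.

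Finally I would bound the completion time of each trip. Since the optimum flow time is at most $F$, any delivery location $v$ must satisfy $d(o,v) \le F$ (a vehicle cannot reach $v$ from $o$ in less time than $d(o,v)$ after the request is released). Therefore $d(o,u_{j'_p}) \le F$ and $d(u_{j_p},o) \le F$, so each dispatched trip $(o, u_{j'_p},\ldots,u_{j_p},o)$ has length at most $\alpha(\gamma+2)F + 2F = (\alpha\gamma + 2\alpha + 2)F$. With $(\alpha+\epsilon)$-speeding the time to traverse the trip is
\[
\frac{(\alpha\gamma+2\alpha+2)F}{\alpha+\epsilon} \;=\; \frac{(\alpha\gamma+\epsilon\gamma)F}{\alpha+\epsilon} \;=\; \gamma F,
\]
using the choice $\gamma = (2\alpha+2)/\epsilon$, so $\epsilon\gamma = 2\alpha+2$. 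Since the trip starts at time $i\gamma F$, it completes by $(i+1)\gamma F$, as claimed. The only step that requires a little care is the segment-counting argument, to make sure the edges $e_p$ bridging consecutive segments are correctly accounted for and not absorbed into any segment's length; once this is set up cleanly the rest is arithmetic.
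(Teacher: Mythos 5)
Your proof is correct and takes essentially the same approach as the paper: bound the total tour length by $\alpha k(\gamma+2)F$ via Lemma~\ref{lem:bound-tsp}, use the maximality of the greedy cut points to show at most $k$ segments are created, and bound each dispatched trip's length by $\alpha(\gamma+2)F + 2F$ so that with $(\alpha+\epsilon)$-speeding it finishes within $\gamma F$. Your version just spells out the segment-counting a bit more explicitly (introducing the bridging edges $e_p$ and writing $L_p + e_p > \alpha(\gamma+2)F$), which is a reasonable expository choice but not a different argument.
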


\begin{proof}
 
  We claim that we always have a free vehicle in step~\ref{step:speeding-use-free}. Otherwise, there are integers $1 = j_0 < j_1 < j_2 < \cdots < j_k < z$ such that for every $o \in [k]$ the length of $(u_{j_{o-1}},u_{j_{o-1}+1}, u_{j_{o-1}+2}, \cdots, u_{j_o})$ is more than $\alpha(\gamma+2)F$. So, the tour $(o, u_1, u_2, \cdots, u_z, o)$ has total length more than $\alpha k(\gamma + 2)F$, contradicting Lemma~\ref{lem:bound-tsp} and that the tour is $\alpha$-approximate.  
  
  Then we claim each tour $(u_{j'},\cdots,u_{j})$ assigned to a vehicle in step~\ref{step:speeding-use-free} has length at most $\alpha(\gamma + 2)F + 2F$. This holds as the length of $(u_{j'},\cdots,u_{j})$ is at most $\alpha(\gamma + 2)F$. Moreover $d(o, u_{j'}) \leq F$ and $d(u_{j}, o) \leq F$. Now with $\alpha+\epsilon$ speed, the travel time of each vehicle is at most 
  $
  \frac{\alpha(\gamma + 2)F + 2F}{\alpha+\epsilon}  = \gamma F.
  $
  Therefore, we proved that all vehicles can come back at or before $(i+1)\gamma F$. 
\end{proof}

It implies that all the requests released in $[(i-1)\gamma F, i \gamma F)$ can be served at or before $(i+1)\gamma F$, which concludes the $2\gamma = \frac{4\alpha + 4}{\epsilon}$ competitive ratio.  For (\ref{thm:ALG-speeding}a), we allow the algorithm to run in exponential time, so the $1$-approximate CVRP tour in step~\ref{step:speeding-find-tour} can be found. If the algorithm is only allowed to run in polynomial time, then the $\alpha$-approximate CVPR tour can be found, by the definition of $\alpha_\TSP$ and $\alpha_\CVRP$. The final speeding factor is $\alpha_\TSP + 2\epsilon$ or $\alpha_\CVRP + 2\epsilon$ depending on whether $c = \infty$.  This implies (\ref{thm:ALG-speeding}b) and (\ref{thm:ALG-speeding}c) by replacing $\epsilon$ in our proof with $\epsilon/2$.

    \bibliography{main}
    \bibliographystyle{alpha}
    
    \appendix
    \section{Guessing an Upper Bound $F$ on Optimum Maximum Flow Time Online}
\label{appendix:guessing}

In this section, we show how to use the $\beta$-competitive algorithm $\calA$ for the case when $F$ is given, to design an $8\beta$-competitive algorithm $\calA'$ for the case when $F$ is not given, with the same speeding factor and running time.  Initially let $F$ be the smallest possible non-zero flow time\footnote{Notice that the algorithm can easily check if the optimum maximum flow time is $0$ or not.} and $D \gets 0$. $D$ will denote be the delay of all requests: In the algorithm, we let the delayed arrival time of a request $\rho$ to be $r_\rho + D$. The algorithm runs in stages. In each stage, we follow algorithm $\calA$ on the current $F$, using the delayed arrival times for requests. Suppose at some time point $t$, we found that the flow time of some request is about to exceed $\beta F$, or we are starting a trip of length more than $2 \beta F$. Then we know that our guessed $F$ does not upper bound the optimum maximum flow time. In this case, we then we wait for all the current active trips to complete, update $F \gets 2F$ and $D \gets D + 3\beta F$, and start a new stage. Since we increased $D$ by $3\beta F$, no requests have arrived before the beginning of the new stage, per the delayed arrival times. 

Assume $F^*$ is the optimum maximum flow time for the instance. Then our final $F$ will be at most $2F^*$. With the delayed arrival times, all requests will have flow time at most $2\beta F^*$. The final $D$ is at most $3\beta \left(F^* + \frac{F^*}{2} + \frac{F^*}{4} + \cdots\right) = 6\beta F^*$. So, all requests will have flow time at most $2\beta F^* + 6\beta F^* = 8\beta F^*$.

    \section{Hardness of Offline Food Delivery Problem}
\label{sec:hardness}
In this section, we show hardness of approximating the offline food delivery problem, without and with speeding, finishing the proof of (\ref{thm:LB}a), (\ref{thm:LB-speeding}b) and (\ref{thm:LB-speeding}c).

\subsection{$\Omega(n)$-Hardness for offline FDP without Speeding: Proof of (\ref{thm:LB}a)}
\label{subsec:LB-offline}

In this section, we prove (\ref{thm:LB}a) by reducing the Hamiltonian cycle problem to offline FDP.  Let ${\bar G} = ({\bar V}, {\bar E})$ be a Hamiltonian cycle instance, with $|{\bar V}| = \bar n$ and $o \in {\bar V}$. 
In our FDP instance, the graph $G = (V, E)$ is obtained from ${\bar G} = (\bar V, \bar E)$ by adding ${\bar n}$ new vertices $u_1,u_2, \cdots, u_{\bar n}$ and ${\bar n}$ edges $(o, u_1), (o, u_2), \cdots, (o, u_{\bar n})$. We associate a length function $\ell$ over $E$: the edges in $\bar E$ have length $1$, and the $\bar n$ new edges have length $\bar n$.  The size of the new graph is $n := |V| = \bar n + \bar n = 2\bar n$.

There are ${\bar n}^2$ phases of the timeline, each with length $(2{\bar n}+1){\bar n}$. So, phase $h$ starts at time $(h-1)(2{\bar n}+1){\bar n}$ and ends at $h(2{\bar n}+1){\bar n}$.  The requests arrived at phase $h$ are as follows, where for simplicity we let $b_h = (h-1)(2{\bar n}+1){\bar n}$ denote the starting time of phase $h$: 
\begin{align*}
    \left\{\big(b_h, v\big): v \in \bar V \setminus \{o\}\right\} \cup \left\{ \big(b_h + (2i-1)\bar n, u_i\big): i \in [\bar n] \right\}.
\end{align*}
So, at the beginning of the phase, requests arrive from $\bar V \setminus \{o\}$. Then starting in time $b_h + \bar n$, requests come from the $\bar n$ new vertices, one after another in every $2\bar n$ units time. 
This finishes the description of the offline FDP instance. 

If $\bar G$ contains a Hamiltonian cycle, then the correspondent FDP instance has a solution with maximum flow time at most ${\bar n}$: in each phase $h \in [{\bar n}^2]$, the solution uses the Hamiltonian cycle to serve the $({\bar n}-1)$ requests for ${\bar V} \setminus \{o\}$ from time $b_h$ to time $b_h + \bar n$.  Then for each $i \in [{\bar n}]$, from time $b_h + (2i-1){\bar n}$ to $b_h + (2i+1){\bar n}$, it takes the trip $(o, u_i, o)$. Clearly, the maximum flow time of the solution is ${\bar n} = O(n)$. 

Now suppose $\bar G$ does not contain a Hamiltonian cycle. We show no trip in a solution serves two requests in two different phases, unless it has maximum flow time at least $(2{\bar n}+1){\bar n} = \Omega(n^2)$.  Notice that the removal of $o$ from $G$ results in ${\bar n}+1$ induced sub-graphs: ${\bar G} \setminus o$ and the ${\bar n}$ singletons $u_1, u_2, \cdots, u_{\bar n}$. Then by our definition of trips, no trip can serve requests from two different sub-graphs. However, for two requests from the same sub-graph but arrive in different phases, their arrival times are at least one phase-time (i.e, $(2{\bar n}+1){\bar n}$ units time) apart.  Therefore if a trip serves the two requests, the maximum flow time of the solution is at least $(2{\bar n}+1){\bar n}$.

Therefore, we can assume in a solution, we handle each phase separately. It takes at least $\bar n + 1 + \bar n \cdot 2\bar n = (2{\bar n}+1){\bar n} + 1$ units time to serve all the requests in a phase. For all the ${\bar n}^2$ phases, the trips take a total time of at least ${\bar n}^2((2{\bar n}+1){\bar n} + 1)$. As the last request arrives at time ${\bar n}^2((2{\bar n}+1){\bar n}) - {\bar n}$, the maximum flow time of the solution is at least 
${\bar n}^2((2{\bar n}+1){\bar n} + 1) - {\bar n}^2((2{\bar n}+1){\bar n}) = {\bar n}^2 = \Omega(n^2)$.  This finishes the proof of the first statement Theorem~\ref{thm:LB}.

\subsection{$\Omega(\sqrt{n})$-Hardness With Speeding: Proof of (\ref{thm:LB-speeding}b) and (\ref{thm:LB-speeding}c)}
\label{subsec:LB-offline-speeding}
Now we move to the hardness of the offline food delivery problem with speeding. We first consider the version where $c = \infty$, proving (\ref{thm:LB-speeding}b). One natural modification we need is to use the TSP instance for the proof of hardness of approximation, to replace that for NP-hardness (i.e, the Hamiltonian-Cycle instance). However, the $\bar n$ legs in $G$ will create an issue in the proof as with speeding, we need much less time to travel through the legs, and the time saved is sufficient to offset the time difference between yes and no instances.   Instead, our graph $G$ is obtained by taking ${\bar n}$ copies of ${\bar G}$ and identify the ${\bar n}$ copies of $o$. As the resulting graph $G$ has $n = \Theta({\bar n}^2)$ vertices, we only obtain a $O(\sqrt{n})$-hardness. 

 Let $\alpha'$ be an absolute constant in $(\alpha, \alpha_{\TSP})$. Consider any TSP instance ${\bar G}=({\bar V}, {\bar E})$ with ${\bar n} = |{\bar V}|$ and positive edge lengths, and assume $o \in {\bar V}$.  Let $L$ be a lower bound on the length of the optimum TSP tour of $\bar G$. 
 We need to show that for some $f(n) = \Omega(\sqrt{n})$ the following statement holds. If there is a polynomial time $\alpha$-speeding $f(n)$-approximation algorithm for offline FDP,  then we can find a TSP tour of $\bar G$ of length at most $\alpha' L$ in polynomial time. Then this will give us a $(\alpha' + \epsilon)$-approximation for TSP for any constant $\epsilon > 0$, which contradicts the definition of $\alpha_\TSP$.

Now we start to define the instance. We make ${\bar n}$ copies of ${\bar G}$ and identify all ${\bar n}$ copies of $o$. So, the resulting graph has $n:={\bar n}({\bar n}-1)+1 = \Theta(n'^2)$ vertices.  For every $v \in \bar V \setminus \{o\}$ and $i \in [n']$, we use $v^{(i)}$ to denote the copy of $v$ in the $i$-th copy of $\bar G$. There are ${\bar n}$ phases of the timeline, each with length $L$. The requests that arrive in phase $h \in [\bar n]$ are as follows:
\begin{align*}
    \left\{\big((h-1)L, v^{(h)}\big): v \in \bar V \setminus \{o\} \right\}.
\end{align*}
That is, at the beginning of $h$-th phase, the depot receives requests from all non-depot vertices in the $h$-th copy of $\bar G$. 

Since there is a TSP tour of length at most $L$ for $\bar G$, the FDP instance has a solution with maximum flow time at most $L$: In each phase $h \in [\bar n]$, the solution uses the optimum TSP tour to serve all the requests arrived at the beginning of the phase and return to the depot $o$. 

Suppose our FDP algorithm finds an $\alpha$-speed $f(n)$-approximate solution in polynomial time.  Then the total time spent for all trips by the solution is at most $\bar n L + f(n) L$, with the $\alpha$-speeding.  The total length of all trips is at most $\alpha(\bar n L + f(n) L)$ (we use length for travel time without speeding). Moreover, the trips form a TSP tour of $G$, and thus contains $\bar n$ disjoint TSP tours of $\bar G$. The shortest TSP tour has length at most $\frac{\alpha(\bar n L + f(n) L)}{\bar n} = \alpha L + \frac{f(n)}{\bar n}L$.  If we have $f(n) \leq (\alpha' - \alpha) \bar n$, then the length is at most $\alpha' L$. As $\alpha' - \alpha$ is a constant and $\bar n = \Theta (\sqrt{n})$,  we can require $f(n) \leq (\alpha' - \alpha) \bar n$ while still guaranteeing $f(n) = \Omega(\sqrt{n})$. This finishes the proof of (\ref{thm:LB-speeding}b).
\medskip

The proof immediately extends to (\ref{thm:LB-speeding}c).  We only need to change $\alpha_\TSP$ to $\alpha_\CVRP$, and replace the TSP instance $\bar G = (\bar V, \bar E)$ with a CVRP instance $(\bar G = (\bar V, \bar E), c)$.

\section{Lower Bounds on Competitive Ratio of Online Algorithms for Single-Vehicle Uncapacitated FDP}
\label{sec:LB}
In this section, we show lower bounds on the competitive ratios of online algorithms for single-vehicle uncapacitated FDP without or with $(1+\epsilon)$-speeding, proving (\ref{thm:LB}b) and (\ref{thm:LB-speeding}a).  We first give a base instance in Section~\ref{subsec:base}, and the final instances for the lower bounds will be obtained by repeating the base instance many times in different ways.  Then in Sections~\ref{subsec:LB-online} and \ref{subsec:LB-online-speeding} we prove the two statements respectively. 

\subsection{Base Instance}
\label{subsec:base}
In this section, we describe a base FDP instance that will be used in the proofs of (\ref{thm:LB}b) and (\ref{thm:LB-speeding}a).   Let $p \geq 2$ be an integer. The graph ${\bar G} = ({\bar V}, {\bar E})$ with the depot $o \in {\bar V}$ is defined as follows. We have ${\bar V} = \{o\} \cup \{v^{is}_j: i \in [0, p], j \in [7], s \in \{\sfL, \sfR\}\}$, where for every $j \in [7]\setminus\{4\}$, $v^{0\sfL}_j$ and $v^{0\sfR}_j$ are identified.  So we have $\bar n := |{\bar V}| = 1 + 2\times 7(p+1) - 6 = 14p + 9$. The set ${\bar E}$ contains two types of edges: the depot edges, which are incident to the depot $o$ and have length $1/2$, and the path edges, which are not incident to $o$ and have length $1$.  For every $i \in [0, p], j \in \{1, 7\}$ and $s \in \{\sfL, \sfR\}$, there is a depot edge $(o,v^{is}_j)$. So there are $4(p+1)-2 = 4p + 2$ depot edges.   The path edges are as follows:
\begin{itemize}[itemsep=0pt,parsep=0pt]
	\item For every $i \in [0, p]$ and $s \in \{\sfL, \sfR\}$, and $j \in [6]$, there is a path edge $(v^{is}_j, v^{is}_{j+1})$.
	\item For every $i \in [p]$ and $s \in \{\sfL, \sfR\}$, there are two path edges $(v^{is}_3, v^{(i-1)s}_4)$ and $(v^{(i-1)s}_4, v^{is}_5)$.
\end{itemize}
This completes the description of the graph ${\bar G} = ({\bar V}, {\bar E})$ with edge lengths. See Figure \ref{fig:grid} for a depiction of the graph.

\begin{figure}[t]
\centering
\includegraphics[width=\textwidth]{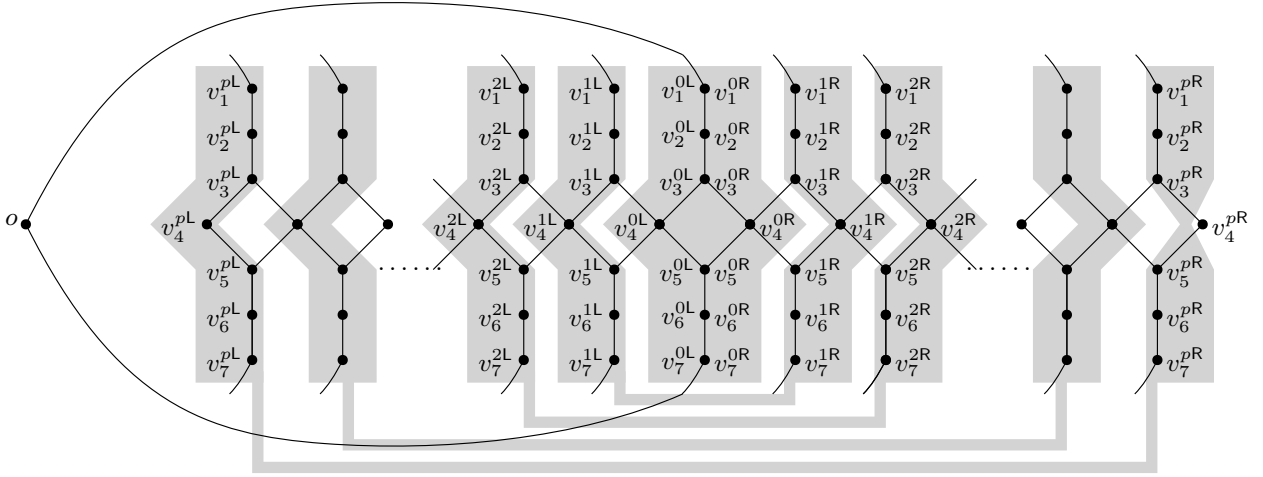}
\caption{The graph ${\bar G} = ({\bar V}, {\bar E})$ used for the lower bounds on the competitive ratio of online algorithms without speeding and with $(1+\epsilon)$-speeding. The depot edges (only two of which are shown) have length $1/2$ and the path edges have length $1$. $v^{0\sfL}_j = v^{0\sfR}_j$ for every $j \in [7]\setminus\{4\}$. Each gray polygon shows a set of positions that are requested in a same segment of a left-leaned base instance. } 
\label{fig:grid}
\end{figure}

It will be convenient for us to define two sets of paths as follows:
\begin{itemize}[itemsep=0pt,parsep=0pt]
	\item $P^{is}$ for every $i \in [0, p]$ and $s \in \{\sfL, \sfR\}$,  which is the path between $v^{is}_1$ and $v^{is}_7$, containing edges $(v^{is}_j, v^{is}_{j+1})$ for every $j \in [6]$, and
	\item $Q^{is}$ for every $i \in [p]$ and $s \in \{\sfL, \sfR\}$,  which is the path $(v^{is}_1,  v^{is}_2,  v^{is}_3, v^{(i-1)s}_4,  v^{is}_5,  v^{is}_6,  v^{is}_7)$.
\end{itemize}\medskip

Now we define the set of requests in the base instance.  The length of the whole time horizon of the base instance is $7(2p + 1)$, which is divided into $p+1$ segments indexed by $0, 1, 2, \cdots, p$. Segment $0$ has length $7$ and each segment $i \in [p]$ has length $14$. For every $i \in [0, p]$, let $b_i$ denote the beginning time of segment $i$. Thus $b_0 = 0$ and $b_i = 7(2h-1)$ for every $i \in [p]$. The following set of all requests are always included in the base instance:
\begin{align*}
    \bar R :=\Big\{\big(b_i, v^{hs}_{j}\big): i \in [p], s \in \{\sfL, \sfR\}, j \in [7]\Big\} \quad \bigcup \quad \Big\{\big(b_p, v^{ps}_j\big): s\in \{\sfL, \sfR\}, j \in [7] \setminus \{4\}\Big\}.
\end{align*}
For $s \in \{\sfL, \sfR\}$, we define 
\begin{align*}
    \bar R_s := \bar R \cup \Big\{\big(b_p, v^{ps}_4\big)\Big\}.
\end{align*}

The set of requests in the base instance is either $\bar R_\sfL$, in which case we say the instance is left-leaned, or $\bar R_\sfR$, in which case we say it is right-leaned.

Therefore, at the beginning of segment $i$, requests come from vertices of the form $v^{is}_j$. The only exception is segment $p$, for which there is no request from $v^{p\sfR}_4$ or $v^{p\sfL}_4$, depending on whether the base instance is left or right-leaned. Notice that the numbers of requests that arrive at the beginning of segment $0$, segment $i \in [p-1]$ and segment $p$ are respectively $8, 14$ and $13$. 
This finishes the description of the base instance.  See Figure \ref{fig:grid} again for the set of requests that arrive in each segment.

\begin{lemma}
	There is a solution to the base instance with $O(1)$ maximum flow time, where all trips complete by time $7(2p+1)$. 
\end{lemma}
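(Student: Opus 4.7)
The plan is to exhibit an explicit schedule. By symmetry I will treat only the left-leaned case, in which every position of the form $v^{ps}_4$ except $v^{p\sfR}_4$ carries a request. The key idea is to use exactly one trip per ``side'' $s \in \{\sfL, \sfR\}$ in each long segment, where the right-side trip uses the $Q$-path rather than the $P$-path. This variant absorbs a single request that was deferred from the previous segment while leaving at most one new request to be deferred to the next segment, so the backlog stays bounded by a constant.

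Concretely, segment $0$ (of length $7$) uses a single trip along $P^{0\sfL}$: it starts and ends at $o$, has length $1/2 + 6 + 1/2 = 7$, and thanks to the identifications $v^{0\sfL}_j = v^{0\sfR}_j$ for $j \neq 4$ it serves every segment-$0$ request except the one at $v^{0\sfR}_4$, which is deferred. For each $i \in [p]$, segment $i$ (of length $14$) runs two length-$7$ trips: first $P^{i\sfL}$, which serves all seven $v^{i\sfL}_j$ requests; then $Q^{i\sfR} = (o, v^{i\sfR}_1, v^{i\sfR}_2, v^{i\sfR}_3, v^{(i-1)\sfR}_4, v^{i\sfR}_5, v^{i\sfR}_6, v^{i\sfR}_7, o)$, which serves the six $v^{i\sfR}_j$ requests with $j \neq 4$ together with the request at $v^{(i-1)\sfR}_4$ deferred from segment $i-1$. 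The request at $v^{i\sfR}_4$ is then left to be picked up by $Q^{(i+1)\sfR}$ in segment $i+1$; the one exception is $i = p$, where the left-leaned hypothesis guarantees that no request at $v^{p\sfR}_4$ exists in the first place, so nothing remains undelivered at the end.

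The remaining checks are routine. Each of the length-$7$ trips fits inside its segment, and the last trip in segment $p$ completes at $b_p + 14 = 7(2p-1) + 14 = 7(2p+1)$, as required. Every non-deferred request is served inside its own segment of length at most $14$, and each deferred right-side $j = 4$ request is served one segment later, giving flow time at most $14 + 14 = 28$, which is $O(1)$ independent of $p$. I expect no real obstacle in executing this plan; the only boundary subtlety is that segment $0$ is half as long as the others and so admits only a single trip, but this matches exactly the fact that segment $0$ has only $8$ distinct request positions (the identifications absorb six pairs), and the unique request it cannot handle is precisely the $v^{0\sfR}_4$ slot that $Q^{1\sfR}$ is designed to catch.
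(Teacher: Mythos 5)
Your proposal is correct and follows essentially the same route as the paper: use the $P$-path on the left side and the $Q$-path on the right side in each segment, so that the $Q$-trip absorbs the single deferred $v^{(i-1)\sfR}_4$ request while leaving at most one new request deferred. The paper lists the two trips in the opposite order within a segment, but since each trip has length exactly $7 = $ (segment length)$/2$, the order is immaterial and both give an $O(1)$ bound.
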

\begin{proof}
	Assume the base instance is left-leaned. The solution will follow the trip $(o, P^{0\sfL}, o)$ in segment $0$ to serve all requests that arrived at time $0$ except $(0, v^{0\sfR}_4)$. Then in each segment $i \in [p]$, it takes the two trips $(o, Q^{i\sfR}, o)$ and $(o, P^{i\sfL}, o)$, to serve $(b_{i-1}, v^{(i-1)\sfR}_4)$, and all the requests that arrived at time $b_i$ except $(b_i, v^{i\sfR}_4)$. Notice that at the end of segment $p$, no request will be left unserved, as there is no request from $v^{p\sfR}_4$.  Then every request has flow time at most 2 segment lengths, which is $O(1)$. All trips complete by time $7(2p+1)$. The solution for a right-leaned base instance can be defined symmetrically. 
\end{proof}

Now consider the online algorithm, which does not know if the base instance is left or right-leaned.  Then it can not make the correct decision unless it waits for almost the whole time horizon, in which case a large flow time will be incurred.  We show that if the algorithm made a wrong decision, then the total length of trips it uses to serve all the requests is at least $7(2p+1) + 1$.  This will create a delay of $1$. By repeating the base instance many times, the delay will accumulate. 

\begin{lemma}
	\label{lemma:follow-paths-L}
	Let ${\bar V}' = {\bar V} \setminus \{v^{p\sfR}_4\}$. Then the shortest TSP tour for ${\bar V}'$ has length $7(2p+1)$, and its edge multi-set is ${\bar E}' := (\text{depot edges}) \uplus \left(\biguplus_{i \in [0, p]} P^{i\sfL}\right) \uplus \left(\biguplus_{i \in [p]}Q^{i\sfR}\right)$. 
\end{lemma}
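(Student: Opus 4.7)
The plan is to decompose $\bar E'$ into $2p+1$ closed walks at the depot $o$: for each $i \in [0,p]$, the trip $(o, v^{i\sfL}_1) \cdot P^{i\sfL} \cdot (v^{i\sfL}_7, o)$, and for each $i \in [p]$, the trip $(o, v^{i\sfR}_1) \cdot Q^{i\sfR} \cdot (v^{i\sfR}_7, o)$. Each trip has length $6\cdot 1+2\cdot\tfrac12=7$, so their concatenation is a closed walk at $o$ of total length $7(2p+1)$. A direct coverage check verifies that every vertex of $\bar V'$ is visited: each $P^{i\sfL}$ covers the level-$i$ left strand including $v^{i\sfL}_4$, each $Q^{i\sfR}$ covers $\{v^{i\sfR}_j: j\in[7]\setminus\{4\}\}$ together with $v^{(i-1)\sfR}_4$; in particular $v^{0\sfR}_4$ is served by $Q^{1\sfR}$ and $v^{p\sfR}_4$ is correctly omitted.

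\textbf{Lower bound.} Let $x_e$ denote the multiplicity of $e$ in an arbitrary TSP tour on $\bar V'$. Split $\bar E$ into Group~1 (the depot edges and the outer chain edges $(v^{is}_j,v^{is}_{j+1})$ for $j\in\{1,2,5,6\}$) and Group~2 (the middle edges, i.e.\ those incident to some $v^{i's}_4$). Focus on a single strand $(i,s)$; say its left chain consists of $(o,v^{is}_1),(v^{is}_1,v^{is}_2),(v^{is}_2,v^{is}_3)$ with multiplicities $a,b,c$. The even degrees at the degree-two vertices $v^{is}_1,v^{is}_2$ force $a+b,b+c$ even and $\geq 2$, while the cut between $\{v^{is}_1,v^{is}_2\}$ and the remainder of $\bar V'$ (crossed an even and at least twice-many number of times by the closed walk) forces $a+c$ even and $\geq 2$; a one-line case check yields $\tfrac12 a+b+c\geq \tfrac52$. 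Summing this and the symmetric right-chain bound over the $2p+1$ strands (with the level-$0$ strand counted once via identification) gives Group~1 cost $\geq 5(2p+1)$. For Group~2, the key combinatorial observation is that every middle edge is incident to exactly one position-$4$ vertex, so charging the constraint ``degree-in-tour at $v^{i's}_4\geq 2$'' per $v^{i's}_4\in\bar V'$ is double-count-free and gives Group~2 cost $\geq 2(2p+1)=4p+2$. Summing yields total cost $\geq 5(2p+1)+(4p+2)=7(2p+1)$, matching the upper bound.

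\textbf{Uniqueness of the edge multi-set.} Tightness in the above forces $a=b=c=1$ at every chain (so every depot edge and every outer chain edge is used exactly once), it forces $\sum_{e\ni v^{i's}_4,\,e\text{ middle}} x_e=2$ for every $v^{i's}_4\in\bar V'$, and it forces $x_e=0$ for every middle edge $e$ incident to $v^{p\sfR}_4$. A downward induction on $i$ starting from $i=p$ then pins down the remaining middle edges. At level $p$: $v^{p\sfL}_4$ has only two incident middle edges, both forced, giving $P^{p\sfL}$'s middle; meanwhile, the zero constraint at $v^{p\sfR}_4$ together with the parity at $v^{p\sfR}_3,v^{p\sfR}_5$ (whose chain edge to $v^{p\sfR}_2,v^{p\sfR}_6$ respectively already contributes $1$) forces the cross edges to $v^{(p-1)\sfR}_4$, giving $Q^{p\sfR}$'s middle. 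At a lower level $i<p$, the cross edges incident to $v^{is}_4$ from level $i+1$ are fixed by the induction hypothesis, so tightness $M_4=2$ at $v^{is}_4$ combined with parity at $v^{is}_3,v^{is}_5$ uniquely selects $P^{i\sfL}$'s middle on the left and $Q^{i\sfR}$'s middle on the right. The main obstacle is handling the identified level-$0$ vertex $v^0_3=v^{0\sfL}_3=v^{0\sfR}_3$, but the analysis at $v^{0\sfR}_4$ (whose two cross edges from $Q^{1\sfR}$ already saturate $M^{0\sfR}_4=2$) forces the edges $(v^0_3,v^{0\sfR}_4)$ and $(v^{0\sfR}_4,v^0_5)$ to be unused, so the unit middle-edge uses at $v^0_3,v^0_5$ must flow to $(v^0_3,v^{0\sfL}_4)$ and $(v^{0\sfL}_4,v^0_5)$, completing the induction and realizing $\bar E'$ exactly.
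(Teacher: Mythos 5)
Your upper bound is the same as the paper's. Your lower bound, however, takes a genuinely different route: the paper simply observes that $|\bar V'\setminus\{o\}|=7(2p+1)$ and that any two distinct non-depot vertices are at distance at least $1$ (with depot edges of length $1/2$), so shortcutting the closed walk to the order of first visits gives the bound immediately. Your split into ``Group~1'' (chain/depot) and ``Group~2'' (middle) edge constraints, with the $\tfrac52$-per-chain inequality and the double-count-free degree charging at position-$4$ vertices, is more intricate but it does check out arithmetically to $5(2p+1)+2(2p+1)=7(2p+1)$, and the cut argument giving $a+c\geq 2$ is genuinely needed. What the paper's argument buys beyond the numeric bound is the crucial side conclusion that \emph{an optimum tour visits each non-depot vertex exactly once}: with depot edges of length $1/2$ and path edges of length $1$, the tour's length is exactly the number of non-$o$ positions in the walk, so optimality forces no repeats. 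The paper then finishes uniqueness by a greedy peeling: $v^{p\sfL}_2,v^{p\sfL}_4,v^{p\sfL}_6$ have degree $2$ in $\bar G[\bar V']$, so their incident edges are forced, giving $(o,P^{p\sfL},o)$; delete it and repeat down the levels and up the right side.

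It is in the uniqueness part that your proposal has a genuine gap. Your tightness conditions ($a=b=c=1$ on each chain, $\sum_{e\ni v_4}x_e=2$ at each position-$4$ vertex of $\bar V'$, and $x_e=0$ on the middle edges at $v^{p\sfR}_4$) are correct consequences of optimality, but they do \emph{not} by themselves imply that each middle edge has multiplicity at most $1$. In particular, the base case of your downward induction, ``$v^{p\sfL}_4$ has only two incident middle edges, both forced,'' is not justified: the constraint $M^{p\sfL}_4=2$ is also satisfied by $x_{(v^{p\sfL}_3,v^{p\sfL}_4)}=2$, $x_{(v^{p\sfL}_4,v^{p\sfL}_5)}=0$ (the walk dead-ends at $v^{p\sfL}_4$). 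If you cascade this alternative downward it also satisfies every Group~1 and Group~2 tightness condition you imposed, with the same total cost $7(2p+1)$; the contradiction only emerges as an odd degree at $v^0_3$ at the very bottom. So the claimed local forcing at level $p$ is false, and the induction as written does not go through. To close the gap you would need either a global parity argument showing that the dead-end variant propagates to an odd degree at $v^0_3$, or, more in the spirit of the paper, to first prove from tightness that each non-depot vertex is visited exactly once (hence all multiplicities are $0/1$), after which ``both forced'' at $v^{p\sfL}_4$ becomes the paper's degree-$2$ peeling argument and the rest of your induction is sound.
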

Notice that we only allow a TSP tour to use edges in the graph ${\bar G}$. Thus, in general its edge set can be a multi-set. However the lemma says the optimum TSP tour has the edge set being an ordinary one, and $\bar E'$ is the unique optimum TSP tour, if we ignore the order of traversing the edges.
\begin{proof}[Proof of Lemma \ref{lemma:follow-paths-L}]
	First, $|{\bar V}' \setminus \{o\}| = 7(2p+1)$ and the distance between any two distinct non-depot vertices is at least $1$. So the TSP for ${\bar V}'$ is at least $7(2p+1)$.  Clearly, ${\bar E}'$ gives a TSP tour for ${\bar V}'$ of  length $7(2p+1)$. It remains to show that ${\bar E}'$ is the unique optimum TSP tour.  Notice that an optimum TSP tour for ${\bar V}'$ can not visit a vertex in ${\bar V}' \setminus \{o\}$ twice, and can not visit the vertex $v^{p\sfR}_4$ which is not in ${\bar V}'$; it can visit $o$ multiple times.  
	
	Notice that in the graph ${\bar G}[{\bar V}']$ (sub-graph of ${\bar G}$ induced by ${\bar V}'$),  $v^{p\sfL}_2, v^{p\sfL}_4$ and  $v^{p\sfL}_6$ have degree $2$ and there are incident to non-depot vertices. That means their incident edges must be chosen in the tour.  In other words, we have to choose the path $P^{p\sfL}$ in the tour. Then, we need to choose the two depot edges incident to $v^{p\sfL}_1$ and $v^{p\sfL}_7$. As we completed a cycle $(o, P^{p\sfL}, o)$, we can remove vertices in $P^{p\sfL}$ from ${\bar V}'$. Then, we can argue that the tour will contain the cycle $(o, P^{(p-1)\sfL}, o)$ and we can remove vertices in $P^{(p-1)\sfL}$ from ${\bar V}'$. Repeating this argument, the tour will contain the cycles $(o, P^{(p-2)\sfL}, o), (o, P^{(p-3)\sfL}, o), \cdots, (o, P^{0\sfL}, o)$.  Continuing the process on the right-side of the figure, we can show that the tour must contain the cycles $(o, Q^{1\sfR}, o), (o, Q^{2\sfR}, o), (o, Q^{3\sfR}, o), \cdots, (o, Q^{p\sfR}, o)$.
\end{proof}

The following lemma can be proved in a symmetric way:
\begin{lemma}
	\label{lemma:follow-paths-R}
	Let ${\bar V}' = {\bar V} \setminus \{v^{p\sfL}_4\}$. Then the shortest TSP tour for ${\bar V}'$ has length $7(2p+1)$, and its edge multi-set is ${\bar E}' := (\text{depot edges}) \uplus \left(\biguplus_{i \in [0, p]} P^{i\sfR}\right) \uplus \left(\biguplus_{i \in [p]}Q^{i\sfL}\right)$. 
\end{lemma}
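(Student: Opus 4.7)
The plan is to obtain this result immediately from Lemma \ref{lemma:follow-paths-L} by observing that the base instance has a manifest left-right symmetry. The depot edges $(o, v^{is}_j)$ for $j \in \{1, 7\}$, the path edges $(v^{is}_j, v^{is}_{j+1})$, and the crossing edges $(v^{is}_3, v^{(i-1)s}_4)$ and $(v^{(i-1)s}_4, v^{is}_5)$ are all defined symmetrically in $s \in \{\sfL, \sfR\}$; moreover the identifications $v^{0\sfL}_j = v^{0\sfR}_j$ for $j \in [7] \setminus \{4\}$ are preserved by the swap. Thus the involution $\sigma$ on ${\bar V}$ that fixes $o$ and sends $v^{is}_j$ to $v^{i\bar s}_j$ (where $\bar\sfL = \sfR$ and $\bar\sfR = \sfL$) is a length-preserving automorphism of ${\bar G}$.

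First I would note that $\sigma$ sends the excluded vertex $v^{p\sfR}_4$ in Lemma \ref{lemma:follow-paths-L} to the excluded vertex $v^{p\sfL}_4$ of Lemma \ref{lemma:follow-paths-R}, sends the path $P^{i\sfL}$ to $P^{i\sfR}$, and sends $Q^{i\sfR}$ to $Q^{i\sfL}$, while fixing the set of depot edges (since the depot edges at level $i = 0$ with $s = \sfL$ and $s = \sfR$ coincide, and the others are swapped in pairs). Consequently, applying $\sigma$ to the unique optimal tour guaranteed by Lemma \ref{lemma:follow-paths-L} yields a tour of the same length $7(2p+1)$ whose edge multi-set is exactly the one claimed in Lemma \ref{lemma:follow-paths-R}. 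Uniqueness transfers as well, because $\sigma$ is a bijection between TSP tours for the two vertex sets.

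If a direct proof is preferred over the symmetry argument, it mirrors the proof of Lemma \ref{lemma:follow-paths-L} verbatim with the roles of $\sfL$ and $\sfR$ interchanged: the lower bound $7(2p+1)$ follows from $|{\bar V}' \setminus \{o\}| = 7(2p+1)$ and the unit separation of non-depot vertices; the stated multi-set realizes this bound; and uniqueness is extracted by starting from $v^{p\sfR}_2, v^{p\sfR}_4, v^{p\sfR}_6$, which now have degree $2$ in ${\bar G}[{\bar V}']$ with non-depot neighbors, forcing $P^{p\sfR}$ and the depot edges at $v^{p\sfR}_1, v^{p\sfR}_7$ into the tour, and then peeling off $P^{i\sfR}$ for $i = p, p-1, \dots, 0$ and $Q^{i\sfL}$ for $i = 1, 2, \dots, p$ inductively. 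I do not anticipate a real obstacle: the symmetry makes the statement essentially free, and only mild care is needed to confirm that the identifications at $i = 0$ respect the involution.
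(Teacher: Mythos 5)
Your proposal matches the paper's approach exactly: the paper proves Lemma~\ref{lemma:follow-paths-R} by declaring it "can be proved in a symmetric way" to Lemma~\ref{lemma:follow-paths-L}, which is precisely the left-right symmetry you formalize via the involution $\sigma$ (and your fallback direct proof is the same argument with $\sfL$ and $\sfR$ interchanged). This is correct and equivalent in substance; you simply make explicit what the paper leaves implicit.
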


\subsection{$\Omega(n)$-Lower bound on Competitive Ratio of Online Algorithms without Speeding: Proof of (\ref{thm:LB}b)}
\label{subsec:LB-online}

With the base instance defined, we show how to repeat it to obtain our lower bounds for competitive ratios.  First focus on the no-speeding setting.  In this case, the graph $G = (V, E)$ will be obtained from $\bar G$ by adding $p$ legs of length $1/2$: $V = \bar V \cup \{u_1, u_2, \cdots, u_p\}, E = \bar E \cup \{(o, u_1), (o, u_2), \cdots, (o, u_p)\}$, where $u_i$'s are new vertices we create. The length of each $(o, u_i)$ is $1/2$. Notice that $n:=|V|=15p+9$.

For convenience, we let $L = 7(2p+1)$ be the length of the time horizon of the base instance.  Now we define the requests in our instance for the $\Omega(n)$-lower bound.  The time horizon is divided into $p$ phases of length $L + p$ each.  Each phase $h \in [p]$ may be left-leaned, in which case we let $s_h = \sfL$, or right-leaned, in which case we let $s_h = \sfR$.  The requests that arrive in phase $h$ is as follows:
\begin{align*}
    \Big\{ \big((h-1)(L+p) + r_\rho, v_\rho\big): \rho \in \bar R_{s_h}\Big\} \quad \bigcup \quad \Big\{\big((h-1)(L+p) + L + j - 1, u_j \big):j \in [p]\Big\}.
\end{align*}

So, in phase $h$, we ``copy'' the requests in $\bar R_{s_h}$ to the first $L$ time units of the phase. Then starting at time $L$ of the phase, requests from $\{u_1, u_2, \cdots, u_p\}$, i.e, the end points of the legs, arrive one by one, in gap of 1 time unit.  This finishes the definition of the instance $\calI$.  Notice that there are $2^p$ possibilities for $\calI$, decided by the vector $(s_h)_{h \in [p]}$.

Clearly, there is a solution with maximum flow at most $O(1)$ for the instance $\calI$, for every $(s_h)_{h \in [p]}$. In each phase $h$, we just follow the solution for the left-leaned or right-leaned base instance, depending on whether $s_h \in \{\sfL, \sfR\}$. The trips take a total time of $L$. Then we visit $u_1, u_2, \cdots, u_p$ one by one in that order. 

Now we can prove the lemma lower bound the online algorithm.
\begin{lemma}
	Any deterministic online FDP algorithm must incur a maximum flow time of $p - O(1)$, for the instance $\calI$ for some $(s_h)_{h \in [p]}$.
\end{lemma}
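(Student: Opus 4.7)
The plan is to pick the lean sequence $(s_1,\ldots,s_p) \in \{\sfL,\sfR\}^p$ adversarially against the deterministic $\calA$, simulating it phase by phase and committing to $s_h$ at time $(h-1)(L+p)+b_p$ within phase $h$, where $L = 7(2p+1)$. At that moment, all requests inside phase $h$ that are common to both leans have arrived, and the adversary picks $s_h$ so that the single differing segment-$p$ request is placed where it hurts $\calA$ the most---specifically, so that $\calA$'s pre-commitment edges inside phase $h$ are incompatible with the unique optimal TSP tour of $\bar R_{s_h}$ guaranteed by Lemmas~\ref{lemma:follow-paths-L} and~\ref{lemma:follow-paths-R}. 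The proof then splits into a per-phase dichotomy and a global accumulation step.

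For the per-phase step, I would first observe that the intersection of the two optimal TSP tours consists only of depot edges and the ``shoulder'' portions of $P^{is}$ that avoid the middle vertices $v^{is}_4$; in particular, no edge incident to the segment-$0$ deciding vertices $v^{0\sfL}_4, v^{0\sfR}_4$ lies in both tours. Hence if $\calA$ has not served one of these segment-$0$ requests by the adversary's decision time, that request already has flow time at least $b_p = 14p-7 = \Omega(p)$, and we are done. Otherwise $\calA$ has used an edge lying in exactly one of the two optimal tours, and the adversary chooses the lean whose optimal tour is missing that edge. Combining the uniqueness clause of Lemmas~\ref{lemma:follow-paths-L} and~\ref{lemma:follow-paths-R} with the even-degree property of closed walks at non-depot vertices and the fact that all edge lengths are multiples of $1/2$, any collection of trips extending $\calA$'s committed edges and visiting the whole vertex set of $\bar R_{s_h}$ has total length at least $L+1$.

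For the global step, summing over the $p$ phases gives total trip length at least $p(L+1) + p \cdot p = pL + p + p^2$, where the $p \cdot p$ accounts for the $p$ round trips of length $1$ each to $u_1,\ldots,u_p$ per phase. Since the single vehicle serves trips sequentially, the completion time of its last trip is at least this total. The last request arrives at time $p(L+p) - 1 = pL + p^2 - 1$, so the request served at completion has flow time at least $(pL + p + p^2) - (pL + p^2 - 1) = p+1$, matching the claimed $p - O(1)$ bound.

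The main obstacle is the quantitative $L+1$ per-phase lower bound. Uniqueness of the optimal tour only gives strict inequality; upgrading to a concrete $+1$ needs a parity / multi-set argument on $\calA$'s edges (every deviation forces either an extra round-trip of a depot edge of length $1/2$ or an inserted cycle through some level-$p$ detour, both costing at least length $1$). A second subtlety is that $\calA$ might try to amortize by merging trips that serve requests from different phases; the cleanest fix is to attribute trip length by the edge multi-set of all trips touching any base request of phase $h$, rather than trips ``wholly inside'' the phase, and to verify that the even-degree condition still forces the $+1$ excess in this combined setting.
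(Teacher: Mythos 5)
Your approach matches the paper's: observe which edge the algorithm commits to at the deciding vertex $v^{0\sfL}_4$ before the reveal time, have the adversary pick the lean so that committed edge lies outside the unique optimal tour guaranteed by Lemma~\ref{lemma:follow-paths-L} or~\ref{lemma:follow-paths-R}, and accumulate the per-phase excess over the $p$ phases. The $L+1$ bound you flag as the ``main obstacle'' is in fact immediate from integrality rather than from any case analysis of detours: every trip is a closed walk at $o$, so it traverses the length-$1/2$ depot edges an even number of times and the length-$1$ path edges integrally, so each trip (and hence any union of trips) has integer length; since the minimum is $L$ and is achieved by a \emph{unique} edge multi-set, a tour containing a forbidden edge has length at least $L+1$. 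Your ``second subtlety'' about cross-phase amortization is already ruled out by the opening observation that any trip touching requests from two phases forces flow time at least $p$, so per-phase trip sets are disjoint and their lengths add.
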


\begin{proof}
    Consider two requests in different phases with delivery locations in $\bar V \setminus \{o\}$. Their arrival times must differ by at least $p$.  For any $j \in [p]$ and two requests with delivery location $u_j$ in different phases, their arrival times must differ by at least $L+p$.  Then using the similar argument as in Section~\ref{subsec:LB-offline}, the online algorithm will not using a single trip to serve two requests from two different phases, unless it has a flow time of at least $p$.
	
	Now focus on a phase $h \in [p]$. We can assume the online algorithm served the request $\big((h-1)(L+p), v^{0\sfL}_4\big)$ by the time $(h-1)(L+p) + p$, since otherwise the flow time of the request is already $p$. This time is before the time $(h-1)(L+p) + L - 14$ when the adversary reveals whether the phase is left-leaned or right-leaned.  If the online algorithm uses $(v^{0\sfL}_3, v^{0\sfL}_4)$ or $(v^{0\sfL}_4, v^{0\sfL}_5)$ to enter $v^{0\sfL}_4$ to satisfy the request, then the adversary will make the phase right-leaned. Otherwise, the online algorithm uses $(v^{1\sfL}_3, v^{0\sfL}_4)$ or $(v^{0\sfL}_4, v^{1\sfL}_5)$ to enter $v^{0\sfL}_4$. In this case the adversary will make the phase left-leaned. Then in either case, the trips used to satisfy all requests in phase $h$ is at least $L+1 + p$, by Lemma~\ref{lemma:follow-paths-L} and \ref{lemma:follow-paths-R}. Thus, overall, the total length of all trips over all phases is at least $p(L+1+p)$, meaning that the last request served by the online algorithm must have flow time at least $p(L+1+p) - p(L+p) - O(1) = p - O(1)$. 
\end{proof}

This finishes the proof of (\ref{thm:LB}b) as we can set $p = \Theta(n)$. 

\subsection{$\Omega(\frac1\epsilon)$-Lower bound on Competitive Ratio of Online Algorithms with $(1+\epsilon)$ Speeding: Proof of (\ref{thm:LB-speeding}a)}
\label{subsec:LB-online-speeding}

In the instance, we let $p  = \frac{1}{20\epsilon} + 0.05$ so that $1+\epsilon = \frac{p}{p-0.05}$. Assume $\epsilon$ is small enough so $p$ is a large enough integer.   The graph $G = (V, E)$ is obtained by making $p$ copies of the base graph $\bar G$ and identifying the $p$ copies of $o$.  So, we have $n:=|V| = p(\bar n - 1) + 1 = p(14p + 8) - 1$. For every $v \in \bar V \setminus \{o\}$ and $i \in [p]$, we let $v^{(i)}$ denote the copy of $v$ in the $i$-th copy of $\bar G$. 

The requests of our instance is also the concatenation of $p$ copies of the requests from the base instance. The time horizon has length $pL$ and is divided into $p$ phases of length $L$ each, where $L=7(2p+1)$ is the length of the time horizon for the base instance. The adversary can choose $s_h \in \{\sfL, \sfR\}$ for every phase $h \in [p]$, i.e., choose whether phase $h$ is left or right-leaned. The requests for phase $h$ is defined as
\begin{align*}
    \Big\{\big((h-1)L + r_\rho, v_\rho^{(h)}\big): \rho \in \bar R_{s_h}\Big\}.
\end{align*}
That is, we ``copy'' the requests in the left or right-leaned base instance to phase $h$, depending on whether the phase is left or right-leaned.

Since the $p$ copies of $\bar G$ are connected only through $o$, no trips will serve two requests from two different phases.   The online algorithm must serve the request from $(v^{0\sfL}_4)^{(h)}$ a phase $h \in [p]$, before it knows if the phase is left-leaned or right-leaned. Otherwise the flow time of the request is at least $L -O(1) = \Omega(p) = \Omega(1/\epsilon)$; notice that this argument is irrespective of the speeding factor we use. The adversary can follow the same strategies as in Section~\ref{subsec:LB-online}. With $(1+\epsilon)$-speeding, the total time needed for the online algorithm to complete all trips is at least $\frac{p(L+1)}{1+\epsilon} = \big(p(L+1)\big)\cdot \frac{p-0.05}{p} = (p-0.05)(L+1) = p\cdot 7(2p+1)- 0.35(2p+1) + p-0.05  = 7p(2p+1) + 0.3 p - 0.4$. Therefore, some request served the last must have flow time at least $0.3p - O(1) = \Omega(1/\epsilon)$.  This finishes the proof of (\ref{thm:LB-speeding}a).
    \section{$\Omega\left(\sqrt{|R|}\right)$ Lower Bound on Competitive Ratios of Online (Capacitated) FDP on Tree Metrics: Proof of Theorem~\ref{thm:LB-capacity}}
\label{sec:LB-capacity}


In this section, we prove Theorem~\ref{thm:LB-capacity}. 
 The idea is similar to the construction for the uncapacitated case on general graphs. In each phase, the online algorithm is forced to choose between two options without knowing which one is the right one. If it makes the wrong choice, the total time spent on trips will be increased by at least 1. Then the delay of 1 in each phase will accumulate, resulting in a large flow time. 
 
 The tree is simple: we have $5$ vertices $\{o, v_1, v_2, v_3, v_4\}$, and $4$ edges $(o, v_1), (v_1, v_2), (v_1, v_3)$, and $(o, v_4)$, each with length 1. There is only $k=1$ vehicle, and the capacity of the vehicle is $c=2$. 
 
 Let $p \geq 2$ be an integer parameter.
%
 %
%
There are $p$ phases in the timeline, each with length $10p$. So, phase $h \in [p]$ starts at time $b_h: = 10(h-1)p$ and ends at $10hp$. In the phase, the adversary can choose a $v^* \in \{v_2, v_3\}$, and release the following set of requests (notice that with capacities, the set of requests may be a multi-set): 
\begin{align*}
    & \Big\{\big(b_h, v_1\big), \big(b_h, v_2\big), \big(b_h, v_3\big)\Big\} \quad \bigcup \quad \Big\{\big(b_h + 8j - 4, v\big):j \in [p-1], v \in \{v_2, v_2, v_3, v_3\}\Big\}\\
    &\qquad  \bigcup \quad \Big\{\big(b_h+8p-4, v^*\big)\Big\}
    \quad \bigcup \quad \Big\{\big(b_h + 8p + 2(j-1), v_4\big), \big(b_h + 8p + 2(j-1), v_4\big): j \in [p]\Big\}.
\end{align*}

So, at the beginning of the phase, the adversary releases 3 requests, one from each of $\{v_1, v_2, v_3\}$. Then starting from time $b_h + 4$, in every $8$ time units, the adversary releases 4 requests, 2 at $v_2$ and 2 at $v_3$. This is repeated for $p-1$ times; so the last set of 4 requests arrive at time $b_h + 8p-12$. Then at time $b_h + 8p -4$, it releases a request at $v^*$, which is either $v_2$ or $v_3$. Starting from time $b_h + 8p$ in every 2 time units, the adversary releases two requests from $v_4$, until time $b_h + 10p - 2 = b_{h+1}-2$.

Now the optimum solution has flow time $O(1)$. Focus on a phase $h \in [p]$ and suppose $v^* = v_2$.  Then in the interval $[b_h, b_h+4]$, the online algorithm will use a trip to serve the two requests $(b_h, v_1)$ and $(b_h, v_3)$. Then the request $(b_h, v_2)$ is not served yet. Then for each $j \in [p-1]$, from $b_h + 4 + 8(j - 1)$ to $b_h + 4 + 8j$, the online algorithm will use 2 trips of length 4 each to satisfy 4 requests: $\big(b_h + \max\{8j - 12, 0\}, v_2\big)$, $\big(b_h + 8j - 4, v_2\big)$, and two requests $(b_h+8j-4, v_3)$. From time $b_h + 4 + 8(p-1) = b_h + 8p-4$ to $b_h + 8p$, it uses a trip of length $4$ to serve $\big(b_h+8p-12, v_2\big)$ and $\big(b_h+8p-4, v_2\big)$. Finally, from time $b_h + 8p + 2(j-1)$ to $b_h + 8p + 2j$ for every $j \in [p]$, the algorithm uses a trip of length 2 to serve the two requests $(b_h + 8p +2(j-1), v_4)$. Notice that all requests are served with a flow time of at most $16$. The solution for a phase $h$ with $v^* = v_3$ can be defined symmetrically. 

We now argue that an online algorithm must have a maximum flow time of $p$. Now focus on any phase $h \in [p]$.  Similar to the arguments we used before, we can argue that the online algorithm will not use a trip to serve two requests arrived in two different phases, unless it has a flow time of at least $2p$.

Fix a phase $h \in [p]$, and assume for the phase we have $v^* = v_2$ and focus on the requests from $v_1, v_2$ and $v_3$.  There are 1 request from $v_1$, $2p$ requests from $v_2$, and $2p-1$ requests from $v_3$. Then with a capacity-$2$ vehicle, it takes a minimum of $8p$ units time to serve these requests.  Through a simple case-by-case analysis, one can show that to get a time of $8p$, the matching of the requests must be the following: match the request from $v_1$ with a request from $v_3$, match $p$ pairs of requests from $v_2$, and $p-1$ pairs of requests from $v_3$.   Similarly, if $v^* = v_3$, then to obtain a time of $8p$ to serve the requests, we need to match the request for $v_1$ with a request from $v_2$, match $p-1$ pairs of requests from $v_2$, and $p$ pairs of requests from $v_3$. 

We can now design the strategy for the adversary. In each phase $h \in [p]$ it observes how the online algorithm serves the request $(b_h, v_1)$. If the request is served at or after time $b_h + 8p - 4$, then the flow time of the request is already more than $p$. Assume otherwise.  If the online algorithm serves the request together with some request from $v_2$, then the adversary will let $v^* = v_2$. Otherwise, it will let $v^* = v_3$. Then the total length of all trips serving requests from $\{v_1, v_2, v_3\}$ in the phase will be at least $8p+2$. The total length of all trips for the requests in the phase is at least $10p+2$.  So, over all the $p$ phases, the total length is at least $p(10p+2) = 10p^2 + 2p$. Then the flow time of the last request served is at least $10p^2 + 2p -10p^2 - O(1) = 2p-O(1) \geq p$.  This finishes the proof of Theorem~\ref{thm:LB-capacity} since in our instance we have $p = \Theta(\sqrt{|R|})$.

\end{document}